\newcommand{\myparagraph}[1]{\textbf{#1}\quad}
\newcommand{\defn}[1]{\emph{\textbf {#1}}}
\theoremstyle{definition}
\newtheorem{definition}{Definition}[section]
\newtheorem{example}[definition]{Example}
\newtheorem{remark}[definition]{Remark}
\theoremstyle{plain}
\newtheorem{theorem}{Theorem}[section]
\newtheorem{lemma}[theorem]{Lemma}
\newcommand{\RR}{\mathbb R}
\newcommand{\nnr}{\mathbb{\RR}_{\ge 0}} 
\newcommand{\NN}{\mathbb N}
\newcommand{\D}{\mathop{}\!\mathrm{d}} 
\DeclareMathOperator{\Meas}{Meas}
\newcommand{\plusassign}{\mathbin{{+}{=}}}
\newcommand{\plussample}{\mathbin{{+}{\sim}}}
\newcommand{\Skip}{\mathsf{skip}}
\newcommand{\Ite}[4][]{\mathsf{if}\, {#2} \, \{ #3 \} \, #1 \mathsf{else} \, \{ #4 \}}
\newcommand{\Fail}{\mathsf{fail}}
\newcommand{\Observe}{\mathsf{observe} \,}
\newcommand{\Normalize}{\mathsf{normalize}}
\newcommand{\Score}{\mathsf{score} \,}
\newcommand{\Dirac}{\mathsf{Dirac}}
\newcommand{\Bernoulli}{\mathsf{Bernoulli}}
\newcommand{\Binomial}{\mathsf{Binomial}}
\newcommand{\NegBinomial}{\mathsf{NegBinomial}}
\newcommand{\Poisson}{\mathsf{Poisson}}
\newcommand{\Geometric}{\mathsf{Geometric}}
\newcommand{\Exponential}{\mathsf{Exponential}}
\newcommand{\GammaDist}{\mathsf{Gamma}}
\newcommand{\Uniform}{\mathsf{Uniform}}
\newcommand{\Categorical}{\mathsf{Categorical}}
\newcommand{\Normal}{\mathsf{Normal}}
\newcommand{\X}{{\bm X}}
\newcommand{\ExpVal}{\mathbb{E}}
\newcommand{\Prob}{\mathbb{P}}
\newcommand{\Var}{\mathbb{V}}
\newcommand{\Skew}{\mathsf{Skew}}
\newcommand{\Kurt}{\mathsf{Kurt}}
\newcommand{\gf}[1]{\mathsf{gf}({#1})}
\newcommand{\stdsem}[1]{\llbracket #1 \rrbracket_{\mathsf{std}}}
\newcommand{\gfsem}[1]{\llbracket #1 \rrbracket_{\mathsf{gf}}}
\newcommand{\x}{{\bm x}}
\newcommand{\w}{{\bm w}}
\newcommand{\balpha}{{\bm \alpha}}
\newcommand{\bbeta}{{\bm \beta}}
\newcommand{\zero}{\mathbf{0}}
\newcommand{\one}{\mathbf{1}}
\newcommand{\Taylor}{\mathsf{Taylor}}
\title{Exact Bayesian Inference on Discrete Models via Probability Generating Functions:\\
A Probabilistic Programming Approach}
\author{%
  Fabian Zaiser \\
  Department of Computer Science \\
  University of Oxford, UK \\
  \texttt{fabian.zaiser@cs.ox.ac.uk} \\
  \AND
  Andrzej S.~Murawski \\
  Department of Computer Science \\
  University of Oxford, UK \\
  \texttt{andrzej.murawski@cs.ox.ac.uk} \\
  \And
  C.-H.~Luke Ong \\
  School of Computer Science and Engineering \\
  Nanyang Technological University, Singapore \\
  \texttt{luke.ong@ntu.edu.sg}
}
\begin{document}

\maketitle

\vspace{-1em}

\begin{abstract}
We present an exact Bayesian inference method for discrete statistical models, which can find exact solutions to a large class of discrete inference problems, even with infinite support and continuous priors.
To express such models, we introduce a probabilistic programming language that supports discrete and continuous sampling, discrete observations, affine functions, (stochastic) branching, and conditioning on discrete events.
Our key tool is \emph{probability generating functions}:
they provide a compact closed-form representation of distributions that are definable by programs, thus enabling the exact computation of posterior probabilities, expectation, variance, and higher moments.
Our inference method is provably correct and fully automated in a tool called \emph{Genfer}, which uses automatic differentiation (specifically, Taylor polynomials), but does not require computer algebra.
Our experiments show that Genfer is often faster than the existing exact inference tools PSI, Dice, and Prodigy.
On a range of real-world inference problems that none of these exact tools can solve, Genfer's performance is competitive with approximate Monte Carlo methods, while avoiding approximation errors.
\end{abstract}

\section{Introduction}
\label{sec:intro}

Bayesian statistics is a highly successful framework for reasoning under uncertainty that has found widespread use in a variety of fields, such as AI/machine learning, medicine and healthcare, finance and risk management, social sciences, climate science, astrophysics, and many other disciplines.
At its core is the idea of representing uncertainty as probability, and updating prior beliefs based on observed data via Bayes' law, to arrive at posterior beliefs.
A key challenge in Bayesian statistics is computing this posterior distribution: analytical solutions are usually impossible or intractable, which necessitates the use of approximate methods, such as Markov-chain Monte Carlo (MCMC) or variational inference.
In this work, we identify a large class of discrete models for which exact inference is in fact possible, in particular time series models of count data, such as autoregressive models, hidden Markov models, switchpoint models.
We achieve this by leveraging \defn{probability generating functions (GFs)} as a representation of distributions.
The GF of a random variable $X$ is defined to be the function $G(x) := \ExpVal[x^X]$.
In probability theory, it is a well-known tool to study random variables and their distributions, related to the moment generating and the characteristic functions \cite[Chapter~4]{Gut13}.
In computer science, GFs have previously been used for the analysis of probabilistic programs \cite{KlinkenbergBKKM20,ChenKKW22} and for exact inference on certain classes of graphical models \cite{WinnerS16,WinnerSS17}.
Here we apply them uniformly in a much more general context via probabilistic programming, enabling exact inference on more expressive Bayesian models.

We characterize the class of supported models with the help of a probabilistic programming language.
\defn{Probabilistic programming} \cite{MeentPYW18} has recently emerged as a powerful tool in Bayesian inference.
Probabilistic programming systems allow users to specify complex statistical models as programs in a precise yet intuitive way, and automate the Bayesian inference task.
This allows practitioners to focus on modelling, leaving the development of general-purpose inference algorithms to experts.
Consequently, probabilistic programming systems such as Stan \cite{stan17} enjoy increasing popularity among statisticians and data scientists.
We describe a programming language, called SGCL (statistical guarded command language), that extends pGCL (probabilistic GCL)~\cite{KlinkenbergBKKM20}.
This language is carefully designed to be simple yet expressive, and just restrictive enough to enable exact Bayesian inference on all programs that can be written in it.

\myparagraph{Contributions}
We provide a new framework for exact inference on discrete Bayesian models.

\begin{compactenum}[(a)]
\item Our method is applicable to a large class of \emph{discrete models with infinite support}, in particular time series models of count data, such as autoregressive models for population dynamics, Bayesian switchpoint models, mixture models, and hidden Markov models.
To our knowledge, no exact inference method for all but the simplest population models was known before.
\item The models are specified in a \emph{probabilistic programming language (PPL)}, which provides flexibility in model definition, thus facilitating model construction.
Our PPL supports stochastic branching, continuous and discrete priors, discrete observations, and conditioning on events involving discrete variables.
\item Every program written in the language can be \emph{translated to a generating function} that represents the posterior distribution in an automatic and provably correct way.
From this generating function, one can extract posterior \emph{mean, variance, and higher moments}, as well as the posterior \emph{probability masses} (for a discrete distribution).
\item We have built an \emph{optimized tool}, called Genfer (``GENerating Functions for inFERence''), that takes a probabilistic program as input and \emph{automatically computes} the aforementioned set of descriptive statistics about the posterior distribution.
\item We demonstrate that (1) on benchmarks with finite support, Genfer's performance is often better than existing \emph{exact inference tools}, and (2) on a range of real-world examples that no existing exact tool supports, Genfer is competitive with \emph{approximate Monte Carlo methods}, while achieving zero approximation error.
\end{compactenum}

\myparagraph{Related Work}
Computing the exact posterior distribution of probabilistic programs is intractable in general as it requires analytical solutions to integrals \cite{GehrMV16}.
For this reason, existing systems either restrict the programming language to allow only tractable constructs (this is our approach) or cannot guarantee successful inference.
In the former category are Dice \cite{HoltzenBM20}, which only supports finite discrete distributions, and SPPL \cite{SaadRM21}, which supports some infinite-support distributions but requires finite discrete priors.
Both are based on probabilistic circuits \cite{ChoiVB20} or extensions thereof, which allow for efficient and exact inference.
In the latter category are the systems PSI \cite{GehrMV16} and Hakaru \cite{NarayananCRSZ16}, which do not impose syntactic restrictions.
They rely on computer algebra techniques to find a closed-form solution for the posterior.
Such a form need not exist in general and, even if it does, the system may fail to find it and the running time is unpredictable and unscalable \cite{SaadRM21}.
None of the case studies featured in our evaluation (\cref{sec:eval-approx}) can be handled by the aforementioned tools.

Probability generating functions are a useful tool in probability theory to study random variables with infinite support, e.g. in the context of branching processes \cite[Chapter~12]{Feller68}.
In the context of Bayesian inference, probabilistic generating circuits (PGCs) leverage them to boost the expressiveness of probabilistic circuits \cite{ZhangJB21} and enable efficient inference \cite{HarviainenRK23}.
However, PGCs cannot handle random variables with infinite support, which is the focus of our approach.
Generating functions have also been applied to probabilistic graphical models: \citet{WinnerS16} find a symbolic representation of the generating function for a Poisson autoregressive model and extract posterior probabilities from it.
Subsequently, \citet{WinnerSS17} extend this model to latent variable distributions other than the Poisson distribution, where symbolic manipulation is no longer tractable.
Instead they evaluate generating functions and their derivatives using automatic differentiation, which enables exact inference for graphical models.
Probabilistic programming is an elegant way of generalizing graphical models, allowing a much richer representation of models \cite{Ghahramani15,MeentPYW18,GordonHNR14}.
Our contribution here is a new framework for exact inference on Bayesian models via probabilistic programming.

In the context of discrete probabilistic programs without conditioning, \citet{KlinkenbergBKKM20} use generating functions to (manually) analyze loop invariants and determine termination probabilities.
In follow-up work, \citet{ChenKKW22} extend these techniques to automatically check (under certain restrictions) whether a looping program generates a specified distribution.
Their analysis tool Prodigy recently gained the ability to perform Bayesian inference as well \cite{KlinkenbergBCK23}.
It supports discrete distributions with infinite support (but no continuous priors) and is less scalable than our automatic-differentiation approach (see \cref{sec:eval}) since it relies on computer algebra.

\myparagraph{Limitations}
Exact posterior inference is already PSPACE-hard for probabilistic programs involving only finite discrete distributions \cite[Section~6]{HoltzenBM20}.
It follows that our method cannot always be performant and has to restrict the supported class of probabilistic programs.
Indeed, our programming language forbids certain constructs, such as nonlinear transformations and observations of continuous variables, in order to preserve a closed-form representation of the generating functions (\cref{sec:probprog}).
For the same reason, our method cannot compute probability density functions for continuous parameters (but probability masses for discrete parameters and exact moments for all parameters are fine).
Regarding performance, the running time of inference is polynomial in the numbers observed in the program and exponential in the number of program variables (\cref{sec:implementation}).
Despite these limitations, our approach shows that exact inference is possible in the first place, and our evaluation (\cref{sec:eval}) demonstrates support for many real-world models and efficient exact inference in practice.

\section{Bayesian Probabilistic Programming}
\label{sec:probprog}

Probabilistic programming languages extend ordinary programming languages with two additional constructs: one for \defn{sampling} from probability distributions and one for \defn{conditioning} on observed values.
We first discuss a simple program that can be written in our language using a simplified example based on population ecology (cf.~\cite{WinnerS16}).

Suppose you're a biologist trying to estimate the size of an animal population migrating into a new habitat.
The immigration of animals is often modeled using a Poisson distribution.
You cannot count the animals exhaustively (otherwise we wouldn't need estimation techniques), so we assume that each individual is observed independently with a certain probability; in other words, the count is binomially distributed.
For simplicity, we assume that the rate of the Poisson and the probability of the binomial distribution are known, say 20 and 0.1.
(For more realistic population models, see \cref{sec:eval}.)
As a generative model, this would be written as $X \sim \Poisson(20); Y \sim \Binomial(X, 0.1)$.

Suppose you observe $Y = 2$ animals.
The Bayesian inference problem is to compute the \defn{posterior distribution} $\Prob(X = x \mid Y = 2)$ given by Bayes' rule as
$\Prob(X = x \mid Y = 2) = \frac{\Prob(X = x) \; \Prob(Y = 2 \mid X = x)}{\Prob(Y = 2)}$,
where $\Prob(X = x)$ is called the \defn{prior probability}, $\Prob(Y = 2 \mid X = x)$ the \defn{likelihood} and $\Prob(Y = 2)$ the \defn{evidence} or \defn{normalization constant}.
\begin{example}
\label{ex:illustrative}
In our probabilistic programming language, this simplified population model would be expressed as:
\[ X \sim \Poisson(20); Y \sim \Binomial(X, 0.1); \Observe Y = 2; \]
\end{example}
The syntax looks similar to generative model notation except that the observations are expressible as a command.
Such a program denotes a joint probability distribution of its program variables, which are viewed as random variables.
The program statements manipulate this joint distribution.
After the first two sampling statements, the distribution has the probability mass function (PMF) $p_1(x,y) = \Prob[X = x, Y = y] = \Poisson(x; 20) \cdot \Binomial(y; x, 0.1)$.
Observing $Y = 2$ restricts this to the PMF $p_2(x,y) = \Prob[X = x, Y = y = 2]$, which equals $\Prob[X = x, Y = 2] = p_1(x, 2)$ if $y = 2$ and $0$ otherwise.
Note that this is not a probability, but a \defn{subprobability}, distribution because the total mass is less than 1.
So, as a final step, we need to \defn{normalize}, i.e. rescale the subprobability distribution back to a probability distribution.
This corresponds to the division by the evidence in Bayes' rule, yielding the PMF $p_3(x,y) = \frac{p_2(x,y)}{\sum_{x,y} p_2(x,y)}$.
To obtain the posterior $\Prob[X = x \mid Y = 2]$, which is a distribution of the single variable $X$, not the joint of $X$ and $Y$, we need to marginalize $p_3$ and find
$\sum_y p_3(x,y) = \frac{\sum_y p_2(x,y)}{\sum_{x,y} p_2(x,y)} = \frac{p_1(x,2)}{\sum_{x} p_1(x,2)} = \frac{\Prob[X = x, Y = 2]}{\Prob[Y = 2]} = \Prob[X = x \mid Y = 2]$, as desired.

\myparagraph{Programming constructs}
Next we describe our probabilistic programming language more formally.
It is based on the \defn{probabilistic guarded command language (pGCL)} from \cite{KlinkenbergBKKM20} but augments it with statistical features like conditioning on events and normalization, which is why we call it \defn{\mbox{Statistical} GCL (SGCL)}.
Each program operates on a fixed set of variables $\X =  \{ X_1,\allowbreak \dots,\allowbreak X_n \}$ taking values in $\nnr$.
A program consists of a list of statements $P_1; P_2; \dots; P_m$.
The simplest statement is $\Skip$, which does nothing and is useful in conditionals (like \texttt{pass} in Python).
Variables can be transformed using affine maps, e.g. $X_2 := 2 X_1 + 7 X_3 + 2$ (note that the coefficients must be nonnegative to preserve nonnegativity of the variables).
Programs can branch on the value of a (discrete) variable (e.g. $\Ite{X_k \in \{ 1, 3, 7 \}}{\dots}{\dots}$), and sample new values for variables from distributions (e.g. $X_k \sim \Binomial(10, 0.5)$ or $X_k \sim \Binomial(X_j, 0.5)$).
The supported distributions are Bernoulli, Categorical, Binomial, Uniform (both discrete and continuous), NegBinomial, Geometric, Poisson, Exponential, and Gamma.
They need to have constant parameters except for the \defn{\mbox{compound} \mbox{distributions}} $\Binomial(X, p)$, $\NegBinomial(X, p)$, $\Poisson(\lambda \cdot X)$, and $\Bernoulli(X)$, where $X$ can be a variable.
One can observe events involving discrete variables (e.g. $\Observe X_k \in \{3, 5 \}$) and values from (discrete) distributions directly (e.g. $\Observe 3 \sim \Binomial(X_j, 0.5)$).
Note that $\Observe m \sim D$ can be seen as a convenient abbreviation of $Y \sim D; \Observe Y = m$ with a fresh variable $Y$.
After an observation, the variable distribution is usually not a probability distribution anymore, but a \defn{subprobability distribution}  (``the numerator of Bayes' rule'').

\myparagraph{Syntax}
In summary, the syntax of programs $P$ has the following BNF grammar:
\begin{align*}
P & ::= \Skip \mid P_1;P_2 \mid X_k := a_1X_1 + \cdots + a_nX_n + c \mid \Ite{X_k \in A}{P_1}{P_2} \\
&\quad \mid X_k \sim D \mid X_k \sim D(X_j) \mid \Observe X_k \in A \mid \Observe m \sim D \mid \Observe m \sim D(X_j)
\end{align*}
where $P$, $P_1$ and $P_2$ are subprograms; $a_1, \dots, a_n, c \in \nnr$; $m \in \NN$; $A$ is a finite subset of the naturals; and $D$ is a supported distribution.
To reduce the need for temporary variables, the abbreviations $\plussample$, $\plusassign$, and $\Ite{m \sim D}{\dots}{\dots}$ are available (see \cref{sec:implementation}).
A fully formal description of the language constructs can be found in \cref{app:problang}.

\myparagraph{Restrictions}
Our language imposes several syntactic restrictions to guarantee that the generating function of any program admits a closed form, which enables exact inference (see \cref{sec:translation}):
\begin{inparaenum}[(a)]
\item only affine functions are supported (e.g. no $X^2$ or $\exp(X)$),
\item only comparisons between variables and constants are supported (e.g. no test for equality $X = Y$),
\item only observations from discrete distributions on $\NN$ and only comparisons of such random variables are supported (e.g. no $\Observe 1.5 \sim \Normal(0, 1)$),
\item a particular choice of distributions and their composites is supported,
\item loops or recursion are not supported.
\end{inparaenum}
A discussion of possible language extensions and relaxations of these restrictions can be found in \cref{sec:lang-extensions}.

\section{Generating Functions}
\label{sec:genfun}

Consider \cref{ex:illustrative}.
Even though the example is an elementary exercise in probability, it is challenging to compute the normalizing constant, because one needs to evaluate an infinite sum:
$\Prob[Y = 2] = \sum_{x \in \NN} \Prob[Y = 2 \mid X = x] \;\Prob[X = x]  = \sum_{x\in \NN} \binom{x}{2} 0.1^2 \, 0.9^{x - 2} \cdot \frac{20^x e^{-20}}{x!}$.
It turns out to be $2e^{-2}$ (see \cref{sec:gf-example}), but it is unclear how to arrive at this result in an automated way.
If $X$ had been a continuous variable, we would even have to evaluate an integral.
We will present a technique to compute such posteriors mechanically.
It relies on probability generating functions, whose definition includes an infinite sum or integral, but which often admit a closed form, thus enabling the exact computation of posteriors.

\myparagraph{Definition}
Probability generating functions are a well-known tool in probability theory to study random variables and their distributions, especially discrete ones.
The probability generating function of a random variable $X$ is defined to be the function $G(x) := \ExpVal[x^X]$.
For a discrete random variable supported on $\NN$, this can also be written as a \defn{power series} $G(x) = \sum_{n \in \NN} \Prob[X = n] \cdot x^n$.
Since we often deal with subprobability distributions, we omit ``probability'' from the name and refer to $G(x) := \ExpVal_{X \sim \mu}[x^X]$, where $\mu$ is a subprobability measure, simply as a \defn{generating function} (GF).
For continuous variables, it is often called \defn{factorial moment generating function}, but we stick with the former name in all contexts.
We will use the notation $\gf{\mu}$ or $\gf{X}$ for the GF of $\mu$ or $X$.
Note that for discrete random variables supported on $\NN$, the GF is always defined on $[-1, 1]$, and for continuous ones at $x = 1$, but it need not be defined at other $x$.

\begin{table}
\caption{GFs for common distributions with constant (left) and random variable parameters (right)}
\label{table:gf-distributions}

\hfill
\begin{subtable}{0.4\textwidth}
\begin{tabular}{lc}
\toprule
Distribution $D$ &$\gf{X \sim D}(x)$ \\
\midrule
$\Binomial(n,p)$ &$(p x + 1 - p)^n$ \\
$\Geometric(p)$ &$\frac{p}{1 - (1 - p) x}$ \\
$\Poisson(\lambda)$ &$e^{\lambda(x-1)}$ \\
$\Exponential(\lambda)$ &$\frac{\lambda}{\lambda - \log x}$ \\
\bottomrule
\end{tabular}
\end{subtable}
\hfill
\begin{subtable}{0.55\textwidth}
\begin{tabular}{lc}
\toprule
Distribution $D(Y)$ &$\gf{X \sim D(Y)}(x)$ \\
\midrule
$\Binomial(Y,p)$ &$\gf{Y}(1 - p + p x)$ \\
$\NegBinomial(Y,p)$ &$\gf{Y}\left(\frac{p}{1 - (1 - p) x}\right)$ \\
$\Poisson(\lambda \cdot Y)$ &$\gf{Y}(e^{\lambda(x-1)})$ \\
$\Bernoulli(Y)$ &$1 + (x-1)\cdot (\gf{Y})'(1)$ \\
\bottomrule
\end{tabular}
\end{subtable}
\hfill
\end{table}

\myparagraph{Probability masses and moments}
Many common distributions admit a \defn{closed form} for their GF (see \cref{table:gf-distributions}).
In such cases, GFs are a compact representation of a distribution, even if it has infinite or continuous support like the $\Poisson$ or $\Exponential$ distributions, respectively.
Crucially, one can extract \defn{probability masses} and \defn{moments} of a distribution from its generating function.
For discrete random variables $X$, $P[X = n]$ is the $n$-th coefficient in the power series representation of $G$, so can be computed as the \defn{Taylor coefficient} at $0$: $\frac{1}{n!}G^{(n)}(0)$ (hence the name ``\emph{probability} generating function'').
For a discrete or continuous random variable $X$, its expected value is $\ExpVal[X] = G'(1)$, and more generally, its $n$-th \defn{factorial moment} is $\ExpVal[X(X-1)\cdots(X - n + 1)] = \ExpVal[\frac{\D^n}{\D x^n} x^X]|_{x = 1} = G^{(n)}(1)$ (hence the name ``\emph{factorial moment} generating function'').
The raw and central moments can easily be computed from the factorial moments.
For instance, the variance is $\Var[X] = G''(1) + G'(1) - G'(1)^2$.
We will exploit these properties of generating functions through automatic differentiation.

\myparagraph{Multivariate case}
The definition of GFs is extended to \defn{multidimensional distributions} in a straightforward way: for random variables $\X = (X_1, \dots, X_n)$, their GF is the function $G(\x) := \ExpVal[\x^\X]$ where we write $\x := (x_1, \dots, x_n)$ and $\x^\X := x_1^{X_1}\cdots x_n^{X_n}$.
We generally follow the convention of using uppercase letters for random and program variables, and lowercase letters for the corresponding parameters of the generating function.
\defn{Marginalization} can also be expressed in terms of generating functions: to obtain the GF $\tilde G$ of the joint distribution of $(X_1, \dots, X_{n-1})$, i.e. to marginalize out $X_n$, one simply substitutes $1$ for $x_n$ in $G$: $\tilde G(x_1, \dots, x_{n-1}) = G(x_1,\dots,x_{n-1}, 1)$.
This allows us to compute probability masses and moments of a random variable in a joint distribution: we marginalize out all the other variables and then use the previous properties of the derivatives.

\subsection{Translating programs to generating functions}
\label{sec:translation}

The standard way of describing the meaning of probabilistic programs is assigning (sub-)probability distributions to them.
An influential example is Kozen's \defn{distribution transformer} semantics \cite{Kozen81}, where each program statement transforms the \defn{joint distribution} of all the variables $\X = (X_1, \dots, X_n)$.
Since Kozen's language does not include observations, we present the full semantics in \cref{app:problang}.
We call this the \defn{standard semantics} of a probabilistic program and write $\stdsem{P}(\mu)$ for the transformation of $\mu$ by the program $P$.
As a last step, the subprobability distribution $\mu$ has to be \defn{normalized}, which we write $\Normalize(\mu) := \frac{\mu}{\int \D \mu}$.
This reduces the Bayesian inference problem for a given program to computing its semantics, starting from the joint distribution $\Dirac(\zero_n)$ in which all $n$ variables are initialized to $0$ with probability $1$.
While mathematically useful, the distribution transformer semantics is hardly amenable to computation as it involves integrals and infinite sums.

Instead, we shall compute the \emph{generating function} of the posterior distribution represented by the probabilistic program.
Then we can extract posterior probability masses and moments using automatic differentiation.
Each statement in the programming language transforms the generating function of the distribution of program states, i.e. the joint distribution of the values of the variables $\X = (X_1, \dots, X_n)$.
Initially, we start with the constant function $G = \one$, which corresponds to all variables being initialized with 0 since $\ExpVal[x^0] = 1$.

\begin{table}\small
\caption{Generating function semantics of programming constructs}
\label{table:gf-semantics}
\begin{tabularx}{\textwidth}{lXl}
Language construct $P$ & $\gfsem{P}(G)(\x)$ \\
\hline
$\Skip$ & $G(\x)$ \\
$P_1; P_2$ & $\gfsem{P_2}(\gfsem{P_1}(G))(\x)$ \\
$X_k := \mathbf{a}^\top \X + c$ & $x_k^c \cdot G(\x')$ where $x'_k := x_k^{a_k}$ and $x'_i := x_i x_k^{a_i}$ for $i \ne k$ \\
$\Ite{X_k \in A}{P_1}{P_2}$ & $\gfsem{P_1}(G_{X_k \in A}) + \gfsem{P_2}(G - G_{X_k \in A})$ \newline where  $G_{X_k \in A}(\mathbf{x}) = \sum_{i \in A} \frac{\partial_k^iG(\mathbf{x}[k \mapsto 0])}{i!}x_k^i$ \\
$X_k \sim D$ & $G(\mathbf{x}[k \mapsto 1])\cdot \gf{D}(x_k)$ \\
$X_k \sim D(X_j)$ &
$G(\mathbf{x}[k \mapsto 1, j \mapsto x_j \cdot \gf{D(1)}(x_k)])$ \newline for $D \in \{\, \Binomial(-, p), \,\NegBinomial(-, p), \,\Poisson(\lambda \cdot -) \,\}$ \newline
$G(\x[k \mapsto 1]) + x_j(x_k - 1) \cdot \partial_j G(\x[k \mapsto 1])$ \quad for $D = \Bernoulli(-)$ \\
$\Observe X_k \in A$ & $G_{X_k \in A}(\mathbf{x}) = \sum_{i \in A} \frac{\partial_k^iG(\mathbf{x}[k \mapsto 0])}{i!}x_k^i$ \\
Normalization & $\Normalize(G) := \frac{G}{G(\one_n)}$ \\
\hline
\end{tabularx}
\end{table}

The \defn{generating function semantics} of a program $\gfsem{P}$ describes how to transform $G$ to the generating function $\gfsem{P}(G)$ for the distribution at the end.
It is defined in \cref{table:gf-semantics}, where the update notation $\x[i \mapsto a]$ denotes $(x_1, \dots, x_{i-1}, \allowbreak a, \allowbreak x_{i+1}, \dots, x_n)$ and $\one_n$ means the $n$-tuple $(1, \dots, 1)$.
The first five rules were already described in \cite{ChenKKW22}, so we only explain them briefly:
$\Skip$ leaves everything unchanged and $P_1;P_2$ chains two statements by transforming with $\gfsem{P_1}$ and then $\gfsem{P_2}$.
To explain linear assignments, consider the case of only two variables: $X_1 := 2X_1 + 3X_2 + 5$.
Then
\[ \gfsem{P}(G)(\x) = \ExpVal[x_1^{2X_1 + 3X_2 + 5} x_2^{X_2}] = x_1^5 \ExpVal[(x_1^2)^{X_1} (x_1^3 x_2)^{X_2}] = x_1^5 \cdot G(x_1^2, x_1^3x_2). \]
For conditionals $\Ite{X_k \in A}{P_1}{P_2}$, we split the generating function $G$ into two parts: one where the condition is satisfied ($G_{X_k \in A}$) and its complement ($G - G_{X_k \in A}$).
The former is transformed by the then-branch $\gfsem{P_1}$, the latter by the else-branch $\gfsem{P_2}$.
The computation of $G_{X_k \in A}$ is best understood by thinking of $G$ as a power series where we keep only the terms where the exponent of $x_k$ is in $A$.
Sampling $X_k \sim D$ from a distribution with constant parameters works by first marginalizing out $X_k$ and then multiplying by the generating function of $D$ with parameter $x_k$.

The first new construct is sampling $X_k \sim D(X_j)$ from compound distributions (see \cref{app:genfun} for a detailed explanation).
Observing events $X_k \in A$ uses $G_{X_k \in A}$ like in conditionals, as explained above.
Just like the subprobability distribution defined by a program has to be normalized as a last step, we have to normalize the generating function.
The normalizing constant is calculated by marginalizing out all variables: $G(1, \dots, 1)$.
So we obtain the generating function representing the normalized posterior distribution by rescaling with the inverse: $\Normalize(G) := \frac{G}{G(\one_n)}$.
These intuitions can be made rigorous in the form of the following theorem, which is proven in \cref{app:correctess-proof}.

\begin{theorem}
\label{thm:correctness}
The GF semantics is correct w.r.t.~the standard semantics: for any SGCL program $P$ and subprobability distribution $\mu$ on $\nnr^n$, we have $\gfsem{P}(\gf{\mu}) = \gf{\stdsem{P}(\mu)}$.
In particular, it correctly computes the GF of the posterior distribution of $P$ as $\Normalize(\gfsem{P}(\one))$.
Furthermore, there is some $R > 1$ such that $\gfsem{P}(\one)$ and $\Normalize(\gfsem{P}(\one))$ are defined on $\{ \x \in \RR^n \mid Q_i < x_i < R \}$ where $Q_i = -R$ if the variable $X_i$ is supported on $\NN$ and $Q_i = 0$ otherwise.
\end{theorem}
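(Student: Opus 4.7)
The plan is to prove the first claim by structural induction on the SGCL program $P$, verifying $\gfsem{P}(\gf{\mu}) = \gf{\stdsem{P}(\mu)}$ for each of the syntactic constructs in the BNF grammar, then deducing the second and third claims from it. The cases for $\Skip$, sequencing $P_1;P_2$, and conditionals are immediate by functoriality, once we observe that the operator $G \mapsto G_{X_k \in A}$ corresponds exactly to restriction of $\mu$ to $\{X_k \in A\}$: the $i$-th Taylor coefficient in $x_k$ of $G$ at $x_k = 0$ encodes $\ExpVal[\mathbf{1}[X_k = i]\prod_{j \ne k} x_j^{X_j}]$, so extracting only the $i \in A$ terms yields the GF of the restricted submeasure, which also handles the $\Observe X_k \in A$ case.

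For the remaining base cases, the verification amounts to unfolding $\gf{\nu}(\x) = \ExpVal_{X \sim \nu}[\x^X]$ and computing. For the linear assignment $X_k := \mathbf{a}^\top \X + c$ one factors out $x_k^c$ and re-collects the $x_k$ powers into each coordinate, yielding the substitution $\x'$ from the table. For $X_k \sim D$ with constant parameters, independence of the fresh sample from the previous state splits the expectation into a marginalized factor $G(\x[k \mapsto 1])$ times $\gf{D}(x_k)$. The most delicate base case is compound sampling $X_k \sim D(X_j)$: the key algebraic fact is the composition identity $\gf{X \sim D(Y)} = \gf{Y} \circ \gf{D(1)}$ for $D \in \{\Binomial(-,p), \NegBinomial(-,p), \Poisson(\lambda \cdot -)\}$, which follows by conditioning on $Y$ and using that a sum of $n$ iid $D(1)$-variables has GF $\gf{D(1)}^n$; Bernoulli admits no such factorization and requires its own direct computation using $\gf{\Bernoulli(Y)}(x) = \ExpVal[1 + (x-1) Y] = 1 + (x-1)\gf{Y}'(1)$.

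The second claim then follows by applying the first to $\mu = \Dirac(\zero_n)$, whose GF is the constant function $\one$, and noting that the GF-level normalization $G/G(\one_n)$ matches the measure-theoretic normalization since $G(\one_n) = \ExpVal_{\X \sim \mu}[\one^\X] = \int \D \mu$ equals the total subprobability mass.

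The hardest part will be the third claim about a common domain of analyticity. Here one proceeds again by structural induction, maintaining the invariant that after each statement the current GF admits an analytic extension to an open product domain containing $\one_n$, with radius bounded below by some $R > 1$ in each $\NN$-supported coordinate (and the appropriate interval in continuous ones). The obstruction is that some distributions, such as $\Exponential(\lambda)$ with GF $\lambda/(\lambda - \log x)$, have singularities whose location depends on the parameter, so each sampling or compound-sampling step can shrink $R$. However, affine substitutions and multiplication by analytic GFs preserve analyticity on appropriately rescaled polydiscs, composition with $\gf{D(1)}$ (which maps a neighborhood of $1$ to a neighborhood of $1$) preserves the invariant, and normalization merely divides by the nonzero constant $G(\one_n) > 0$. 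Since a program contains only finitely many statements, we may take the final $R$ as the minimum of finitely many admissible radii, which remains strictly greater than $1$.
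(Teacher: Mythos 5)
Your outline reproduces the paper's proof almost exactly: structural induction over the grammar, the same direct integral computations for affine assignment and constant-parameter sampling, the same key identity $\gf{D(n)} = (\gf{D(1)})^n$ for the binomial/negative-binomial/compound-Poisson cases with a separate direct computation for $\Bernoulli(X_j)$, the derivation of the second claim from $\mu = \Dirac(\zero_n)$ together with $G(\one_n) = \int \D\mu$, and the finite-minimum-of-radii argument for the domain claim. All of that is correct and is what the paper does.

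The one genuine gap is that you treat as an ``observation'' the identity $\frac{1}{i!}\partial_k^i G(\x[k\mapsto 0]) = \ExpVal\bigl[[X_k = i]\prod_{j\ne k} x_j^{X_j}\bigr]$ underlying $G_{X_k\in A}$ (and likewise the identity $\int X_j (\x[k\mapsto 1])^\X \D\mu(\X) = x_j\,\partial_j \gf{\mu}(\x[k\mapsto 1])$ needed for the $\Bernoulli(X_j)$ case). For purely discrete $\mu$ these follow from termwise differentiation of the power series, which is how prior work argues; but the theorem allows $\mu$ to have continuous components, so $G$ is a genuine integral and you must justify interchanging $\partial_k^i$ with $\int(\cdot)\,\D\mu$. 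The paper isolates exactly this in two lemmas that verify the hypotheses of the Leibniz integral rule --- dominating the differentiated integrand by a $\mu$-integrable function on $T(R,\mu)$, with separate treatment of $\NN$-supported coordinates (where $x_k$ may be $0$ or negative) and continuous coordinates (where $X_i - w_i$ may be negative) --- and explicitly identifies this as the place where the continuous generalization breaks the power-series argument of earlier GF semantics. Without supplying that interchange argument, your conditional/observe case and your Bernoulli-compound case rest on an unproven step; the rest of the proposal is sound.
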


\myparagraph{Novelty}
The semantics builds upon \cite{KlinkenbergBKKM20,ChenKKW22}.
To our knowledge, the GF semantics of the compound distributions $\Poisson(\lambda \cdot X_j)$ and $\Bernoulli(X_j)$ is novel, and the former is required to support most models in \cref{sec:eval}.
While the GF of observations has been considered in the context of a specific model \cite{WinnerS16}, this has not been done in the general context of a probabilistic programming language before.
More generally, previous works involving GFs only considered discrete distributions, whereas we also allow sampling from continuous distributions.
This is a major generalization and requires different proof techniques because the power series representation $\sum_{i \in \NN} \Prob[X = i] x^i$, on which the proofs in \cite{WinnerS16,KlinkenbergBKKM20,ChenKKW22} rely, is not valid for continuous distributions.

\begin{example}[GF translation]
\label{sec:gf-example}
Consider \cref{ex:illustrative}.
We can find the posterior distribution mechanically by applying the rules from the GF semantics.
We start with the GF $A(x,y) = \ExpVal[x^0 y^0] = 1$ corresponding to $X$ and $Y$ being initialized to 0.
Sampling $X$ changes this to GF $B(x,y) = A(1, y) e^{20(x-1)} = e^{20(x-1)}$.
Sampling $Y$ yields $C(x,y) = B(x(0.1y + 0.9), 1) = e^{2x(y + 9) - 20}$.
Observing $Y = 2$ yields $D(x,y) = \frac{1}{2!} y^2 \frac{\partial^2}{\partial y^2} C(x, 0) = 2x^2y^2 e^{18x - 20}$.
To normalize, we divide by $D(1,1) = 2e^{-2}$, obtaining $E(x,y) = \frac{D(x,y)}{D(1,1)} = x^2y^2 e^{18(x - 1)}$ since $A(x,y) = 1$.

As described above, we can extract from this GF the posterior probability of, for example, exactly 10 individuals $\Prob[X = 10] = \frac{1}{10!} \frac{\partial^{10}}{\partial x^{10}} E(0, 1) = 991796451840 e^{-18}$ and the expected value of the posterior $\ExpVal[X] = \frac{\partial}{\partial x} E(1,1) = 20$.
\end{example}

\section{Implementation \& Optimizations}
\label{sec:implementation}

The main difficulty in implementing the GF semantics is the computation of the partial derivatives.
A natural approach (as followed by \cite{ChenKKW22,KlinkenbergBCK23}) is to manipulate symbolic representations of the generating functions and to use computer algebra for the derivatives.
However this usually scales badly, as demonstrated by \citet{WinnerSS17}, because the size of the generating functions usually grows quickly with the data conditioned on.
To see why, note that every $\Observe X_k = d$ statement in the program is translated to a $d$-th partial derivative.
Since probabilistic programs tend to contain many data points, it is common for the total order of derivatives to be in the hundreds.
The size of the symbolic representation of a function can (and typically does) grow exponentially in the order of the derivative: the derivative of the product of two functions $f \cdot g$ is the sum of two products $f' \cdot g + f \cdot g'$, so the representation doubles in size.
Hence the running time would be $\Omega(2^d)$ where $d$ is the sum of all observed values, which is clearly unacceptable.

Instead, we exploit the fact that we do not need to generate the full representation of a GF, but merely to \emph{evaluate it and its derivatives}.
We implement our own \defn{automatic differentiation} framework for this because existing ones are not designed for computing derivatives of order greater than, say, 4 or 5.
In fact, it is more efficient to work with Taylor polynomials instead of higher derivatives directly.
\citet{WinnerSS17} already do this for the population model (with only one variable), and we extend this to our more general setting with multiple variables, requiring \defn{multivariate Taylor polynomials}.
In this approach, derivatives are the easy part as they can be read off the Taylor coefficients, but the composition of Taylor polynomials is the bottleneck.
\citet{WinnerSS17} use a naive $O(d^3)$ approach, which is fast enough for their single-variable use case.

\myparagraph{Running time}
For $n$ variables, naive composition of Taylor polynomials takes $O(d^{3n})$ time, where $d$ is the sum of all observations in the program, i.e. the total order of differentiation, i.e. the degree of the polynomial.
Note that this is polynomial in $d$, contrary to the symbolic approach, but exponential in the number of variables $n$.
This is not as bad as it seems because in many cases, the number of variables can be kept to one or two, as opposed to the values of data points (such as the models from \cref{sec:eval}).
In fact, we exploit the specific composition structure of generating functions to achieve $O(d^3)$ for $n = 1$ and $O(d^{n+3})$ for $n \ge 2$ in the worst case, while often being faster in practice.
Overall, our implementation takes $O(s d^{n+3})$ time in the worst case, where $s$ is the number of statements in the program, $d$ is the sum of all observed values, and $n$ is the number of program variables (see \cref{app:time-analysis}).

\myparagraph{Reducing the number of variables}
Given the exponential running time in $n$, it is crucial to reduce the number of variables when writing probabilistic programs.
For one thing, program variables that are no longer needed can often be reused for a different purpose later.
Furthermore, assignment and sampling can be combined with addition: $X_k \plusassign \dots$ stands for $X_{n + 1} := \dots; X_k := X_k + X_{n + 1}$ and $X_k \plussample D$ for $X_{n + 1} \sim D; X_k := X_k + X_{n + 1}$.
The GFs for these statements can easily be computed without introducing the temporary variable $X_{n + 1}$.
Similarly, in $\Observe$ statements and if-conditions, we use the shorthand $m \sim D$ with $m \in \NN$ for the event $X_{n + 1} = m$ where $X_{n + 1} \sim D$.
Probabilistic programs typically contain many such observations, in particular from compound distributions.
Hence it is worthwhile to optimize the generating function to avoid this extra variable $X_{n + 1}$.
\citet{WinnerS16} can avoid an extra variable for a compound binomial distribution in the context of a specific model.
We extend this to our more general setting with continuous variables, and also present optimized semantics for compound Poisson, negative binomial, and Bernoulli distributions.
In fact, this optimization is essential to achieving good performance for many of the examples in \cref{sec:eval}.
The optimized translation and its correctness proof can be found in \cref{app:optimized-observations}.

\myparagraph{Implementation}
Our tool \defn{Genfer} reads a program file and outputs the \defn{posterior mean}, \defn{variance}, \defn{skewness}, and \defn{kurtosis} of a specified variable.
For discrete variables supported on $\NN$, it also computes the \defn{posterior probability masses} up to a configurable threshold.
To have tight control over performance, especially for the multivariate Taylor polynomials, Genfer is written in Rust \cite{MatsakisK14}, a safe systems programming language.
Our implementation is available on GitHub: \href{https://github.com/fzaiser/genfer/}{github.com/fzaiser/genfer}

\myparagraph{Numerical issues}
Genfer can use several number formats for its computations: 64-bit floating point (the default and fastest), floating point with a user-specified precision, and rational numbers (if no irrational numbers occur).
To ensure that the floating-point results are numerically stable, we also implemented interval arithmetic to bound the rounding errors.
Initially, we found that programs with continuous distributions led to catastrophic cancellation errors, due to the logarithmic term in their GFs, whose Taylor expansion is badly behaved.
We fixed this problem with a slightly modified representation of the GFs, avoiding the logarithms (details in \cref{app:implementation}).
For all the examples in \cref{sec:eval-approx}, our results are accurate up to at least 5 significant digits.

\section{Empirical Evaluation}
\label{sec:eval}

\subsection{Comparison with exact inference methods}
\label{sec:eval-exact}

\begin{table}\small
\caption{Comparison of inference times of tools for exact inference on PSI's benchmarks \cite{GehrMV16}}
\label{table:baselines}
\centering
\setlength{\tabcolsep}{4pt}
\begin{tabular}{l|rr|rrrr}
Tool &Genfer (FP) &Dice (FP) &Genfer ($\mathbb{Q}$) &Dice ($\mathbb{Q}$) &Prodigy &PSI \\
\hline
alarm (F) &\textbf{0.0005s} &{0.0067s} &\textbf{0.0012s} &{0.0066s} &{0.011s} &{0.0053s} \\
clickGraph (C) &\textbf{0.11s} &unsupported &\textbf{3.4s} &unsupported &unsupported &{46s} \\
clinicalTrial (C) &\textbf{150s} &unsupported &\textbf{1117s} &unsupported &unsupported &timeout \\
clinicalTrial2 (C) &\textbf{0.0024s} &unsupported &\textbf{0.031s} &unsupported &unsupported &{0.46s} \\
digitRecognition (F) &\textbf{0.021s} &{0.83s} &\textbf{0.11s} &{2.7s} &{31s} &{146s} \\
evidence1 (F) &\textbf{0.0002s} &{0.0057s} &\textbf{0.0003s} &{0.0056s} &{0.0030s} &{0.0016s} \\
evidence2 (F) &\textbf{0.0002s} &{0.0056s} &\textbf{0.0004s} &{0.0057s} &{0.0032s} &{0.0018s} \\
grass (F) &\textbf{0.0008s} &{0.0067s} &\textbf{0.0044s} &{0.0067s} &{0.019s} &{0.014s} \\
murderMystery (F) &\textbf{0.0002s} &{0.0055s} &\textbf{0.0003s} &{0.0057s} &{0.0028s} &{0.0021s} \\
noisyOr (F) &\textbf{0.0016s} &{0.0085s} &{0.019s} &\textbf{0.0088s} &{0.21s} &{0.055s} \\
twoCoins (F) &\textbf{0.0002s} &{0.0054s} &\textbf{0.0003s} &{0.0057s} &{0.0032s} &{0.0017s}
\end{tabular}
\end{table}

We compare our tool Genfer with the following tools for exact Bayesian inference: Dice \cite{HoltzenBM20}, which uses weighted model counting, PSI \cite{GehrMV16}, which manipulates density functions using computer algebra, and Prodigy \cite{KlinkenbergBCK23}, which is based on generating functions like Genfer, but uses computer algebra instead of automatic differentiation.\footnote{
We were not able to run Hakaru \cite{NarayananCRSZ16}, which uses the computer algebra system Maple as a backend, but we do not expect it to produce better results than the other systems, given that PSI is generally more performant \cite{GehrMV16}.}
We evaluate them on the PSI benchmarks \cite{GehrMV16}, excluding those that only PSI supports, e.g. due to observations from continuous distributions.
Most benchmarks only use finite discrete distributions (labeled ``F''), but three feature continuous priors (labeled ``C'').

We measured the wall-clock inference time for each tool, excluding startup time and input file parsing, and recorded the minimum from 5 consecutive runs with a one-hour timeout.\footnote{
PSI was run with the experimental flag \texttt{-{}-dp} on finite discrete benchmarks for improved performance.}
Dice and Genfer default to floating-point (FP) numbers, whereas Prodigy and PSI use rational numbers, which is slower but prevents rounding errors.
For a fair comparison, we evaluated all tools in rational mode and separately compared Dice with Genfer in FP mode.
The results (\cref{table:baselines}) demonstrate Genfer's speed even on finite discrete models, despite our primary focus on models with infinite support.

\subsection{Comparison with approximate inference methods}
\label{sec:eval-approx}

It is impossible to compare our approach with other exact inference methods on realistic models with infinite support (\cref{sec:benchmarks}):
the scalable systems Dice \cite{HoltzenBM20} and SPPL \cite{SaadRM21} don't support such priors and the symbolic solvers PSI \cite{GehrMV16} and Prodigy \cite{KlinkenbergBCK23} run out of memory or time out after an hour.

\myparagraph{Truncation}
As an alternative, we considered approximating the posterior by truncating discrete distributions with infinite support.
This reduces the problem to finite discrete inference, which is more amenable to exact techniques.
We decided against this approach because \citet{WinnerS16} already demonstrated its inferiority to GF-based exact inference on their graphical model.
Moreover, it is harder to truncate general probabilistic programs, and even impossible for continuous priors.

\myparagraph{Monte-Carlo inference}
Hence, we compare our approach with Monte Carlo inference methods.
Specifically, we choose the Anglican \cite{TolpinMW15} probabilistic programming system because it offers the best built-in support for discrete models with many state-of-the-art inference algorithms.
Other popular systems are less suitable: Gen \cite{Cusumano-Towner19} specializes in programmable inference; Stan \cite{stan17}, Turing \cite{turing18}, and Pyro \cite{pyro19} mainly target continuous models; and WebPPL's \cite{GoodmanS14} discrete inference algorithms are less extensive than Anglican's (e.g. no support for interacting particle MCMC).

\myparagraph{Methodology}
The approximation error of a Monte Carlo inference algorithm depends on its \emph{settings}%
\footnote{See \url{https://probprog.github.io/anglican/inference/} for a full list of Anglican's options.}
(e.g. the number of particles for SMC) and decreases with the number of samples the longer it is run.
To ensure a fair comparison, we use the following setup:
for each inference problem, we run several inference algorithms with various \emph{configurations} (settings and sampling budgets) and report the approximation error and the elapsed time.
To measure the quality of the approximation, we report the \defn{total variation distance (TVD)} between the exact solution and the approximate posterior distribution, as well as the \defn{approximation errors} of the posterior mean $\mu$, standard deviation $\sigma$, skewness (third standardized moment) $S$, and kurtosis $K$.
To ensure invariance of our error metrics under translation and scaling of the distribution, we compute the error of the mean as $\frac{|\hat \mu - \mu|}{\sigma}$, where $\mu$ is the true posterior mean, $\hat \mu$ its approximation, and $\sigma$ the true posterior standard deviation; the relative error of the standard deviation $\sigma$ (because $\sigma$ is invariant under translation but not scaling), and the absolute error of skewness and kurtosis (because they are invariant under both translation and scaling).
To reduce noise, we average all these error measures and the computation times over 20 runs and report the standard error as error bars.
We run several well-known \defn{inference algorithms} implemented in Anglican: importance sampling (IS), Lightweight Metropolis-Hastings (LMH), Random Walk Metropolis-Hastings (RMH), Sequential Monte Carlo (SMC), Particle Gibbs (PGibbs), and interacting particle MCMC (IPMCMC).
For comparability, in all experiments, each algorithm is run with two sampling budgets and, if possible, two different settings (one being the defaults) for a total of four configurations.
The \defn{sampling budgets} were 1000 or 10000, because significantly lower sample sizes gave unusable results and significantly higher sample sizes took much more time than our exact method.
We discard the first 20\% of the samples, a standard procedure called ``burn-in''.

Note that this setup is generous to the approximate methods because we only report the average time for one run of each configuration.
However, in practice, one does not know the best configuration, so the algorithms need to be run several times with different settings.
By contrast, our method requires only one run because the result is exact.

\subsection{Benchmarks with infinite-support distributions}
\label{sec:benchmarks}

\begin{figure}
\begin{subfigure}[t]{0.3\textwidth}
\centering
\includegraphics[width=\textwidth]{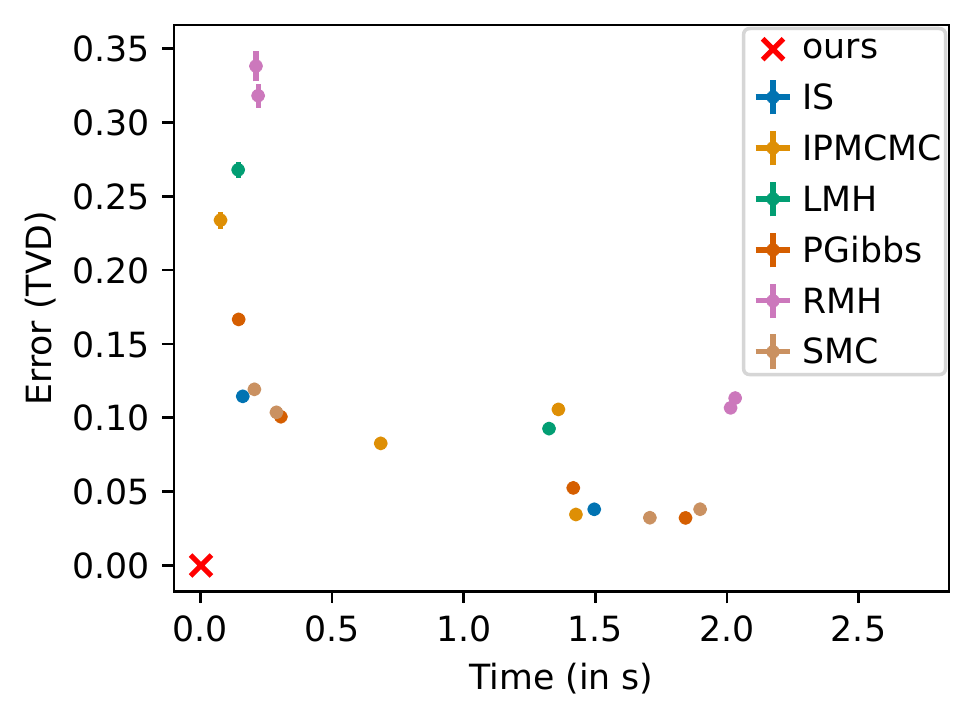}
\caption{Original model}
\label{fig:population-original}
\end{subfigure}
\hfill
\begin{subfigure}[t]{0.3\textwidth}
\centering
\includegraphics[width=\textwidth]{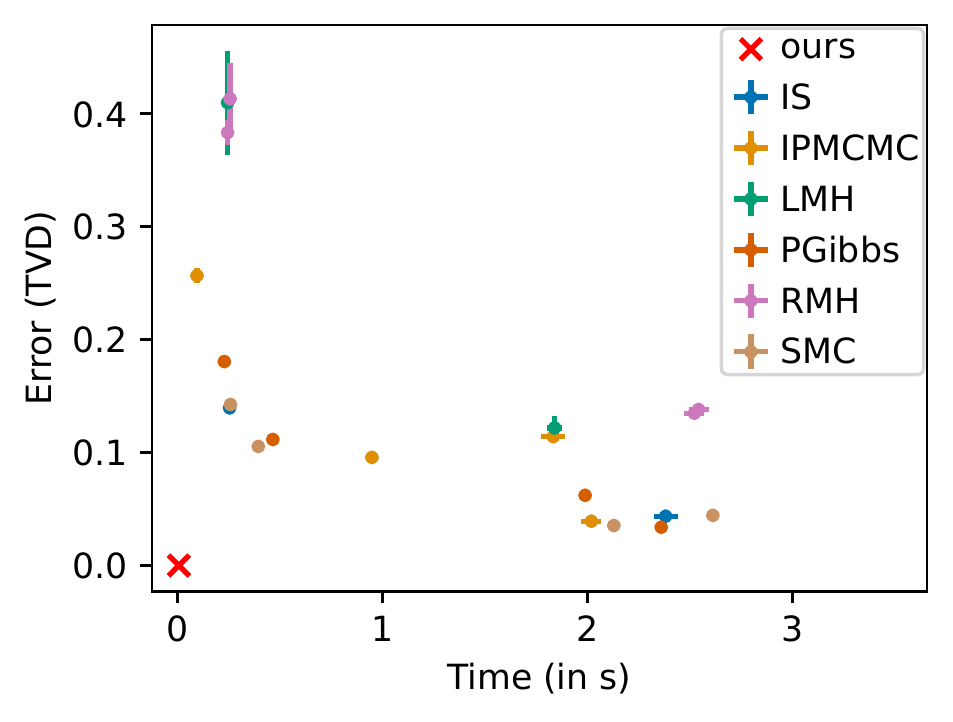}
\caption{Modified arrival rate}
\label{fig:population-modified}
\end{subfigure}
\hfill
\begin{subfigure}[t]{0.3\textwidth}
\centering
\includegraphics[width=\textwidth]{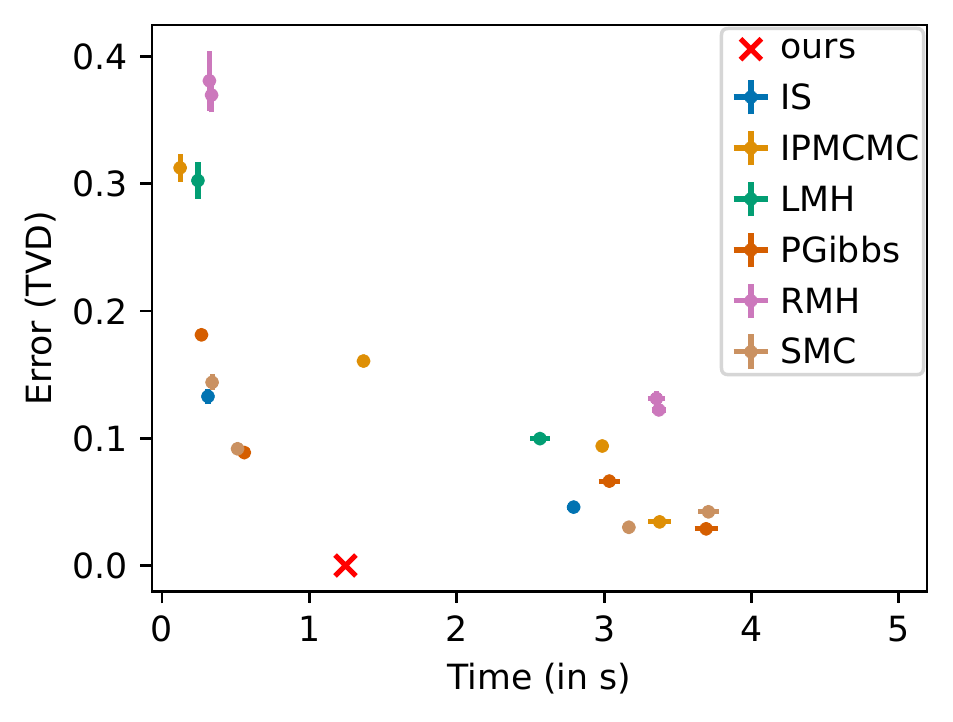}
\caption{Two-type branching process}
\label{fig:population-two}
\end{subfigure}
\caption{Comparison of the population model and its modifications with approximate inference:
mean and standard error of the computation time and TVD over 20 repeated runs.}
\label{fig:population}
\vspace{-1em}
\end{figure}

\myparagraph{Population ecology}
Our first benchmark comes from \cite{WinnerS16,WinnerSS17} and models animal populations.
We have seen a simplified version in \cref{ex:illustrative}.
Here we model a population $N_k$ at time steps $k = 0, \dots, m$.
At each time step, there is a Poisson-distributed number of new arrivals, which are added to the binomially distributed number of survivors from the previous time step.
Each individual is observed with a fixed probability $\delta$, so the number of observed individuals is binomially distributed:
\[\begin{array}{ll}
&\mathit{New}_k \sim \Poisson(\lambda_k); \quad \mathit{Survivors}_k \sim \Binomial(N_{k-1}, \delta); \\
&N_k := \mathit{New}_k + \mathit{Survivors}_k; \quad \Observe y_k \sim \Binomial(N_k, \rho);
\end{array}\]
where the model parameters $\lambda_k \in \RR, \delta \in [0,1]$ are taken from \cite{WinnerS16}; the detection probability $\rho$ is set to $0.2$ (\cite{WinnerS16} considers a range of values, but we pick one for space reasons); and the observed number $y_k \in \NN$ of individuals in the population at time step $k$ is simulated from the same ground truth as \cite{WinnerS16}.
The goal is to infer the final number $N_m$ of individuals in the population.
We set the population size model parameter and hence the observed values which influence the running time to be 4 times larger than the largest in \cite{WinnerS16} (see details in \cref{app:benchmarks}).
The results (\cref{fig:population-original}) show that our method is superior to MCMC methods in both computation time and accuracy since it is exact.

\myparagraph{Modifications}
While this model was already solved exactly in \cite{WinnerS16}, our probabilistic programming approach makes it trivial to modify the model, since the whole inference is automated and one only needs to change a few lines of the program:
\begin{inparaenum}[(a)]
\item we can model the possibility of natural disasters affecting the offspring rate with a conditional: $\mathit{Disaster} \sim \Bernoulli(0.1); \Ite{\mathit{Disaster} = 1}{\mathit{New}_k \sim \Poisson(\lambda')}{\mathit{New}_k \sim \Poisson(\lambda)}$, or
\item instead of a single population, we can model populations of two kinds of individuals that interact, i.e. a \emph{multitype} branching process (see \cref{fig:population-two}).
\end{inparaenum}
None of these modifications can be handled by \cite{WinnerS16} or \cite{WinnerSS17}.
The results of the first modification (\cref{fig:population-modified}) are very similar.
The more complex second modification takes longer to solve exactly, but less time than approximate inference with 10000 samples, and achieves zero error.

\begin{figure}
\centering
\begin{subfigure}[t]{0.24\textwidth}
\centering
\includegraphics[width=\textwidth]{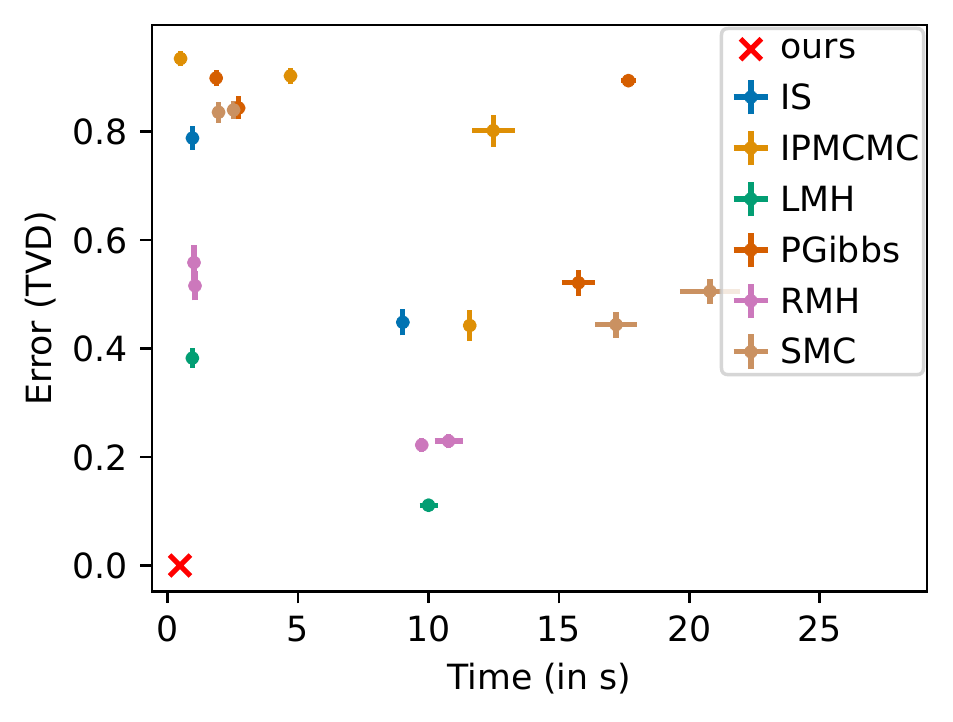}
\caption{Switchpoint: TVD error}
\label{fig:switchpoint-tvd}
\end{subfigure}
\hfill
\begin{subfigure}[t]{0.24\textwidth}
\centering
\includegraphics[width=\textwidth]{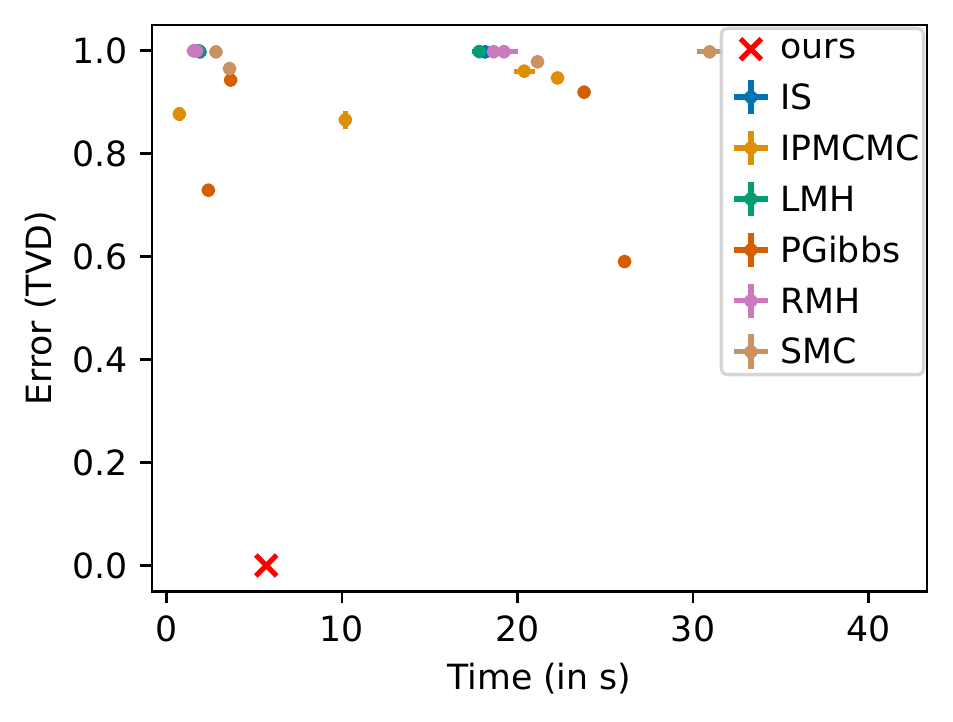}
\caption{Mixture: TVD error}
\label{fig:mixture-tvd}
\end{subfigure}
\hfill
\begin{subfigure}[t]{0.24\textwidth}
\centering
\includegraphics[width=\textwidth]{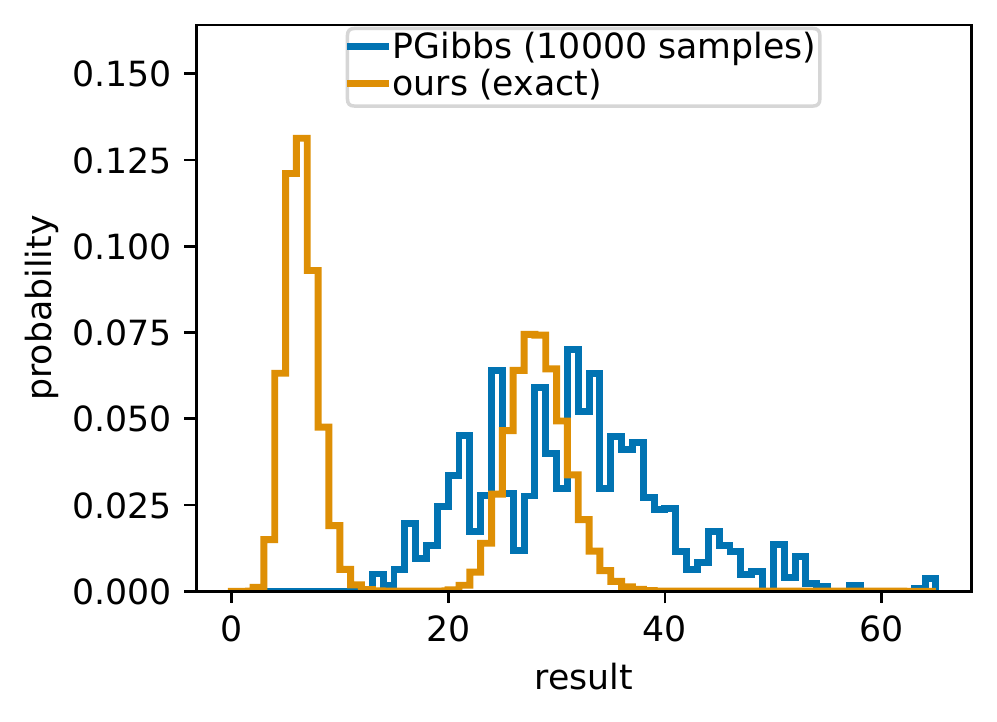}
\caption{Mixture: histogram}
\label{fig:mixture-histogram}
\end{subfigure}
\hfill
\begin{subfigure}[t]{0.24\textwidth}
\centering
\includegraphics[width=\textwidth]{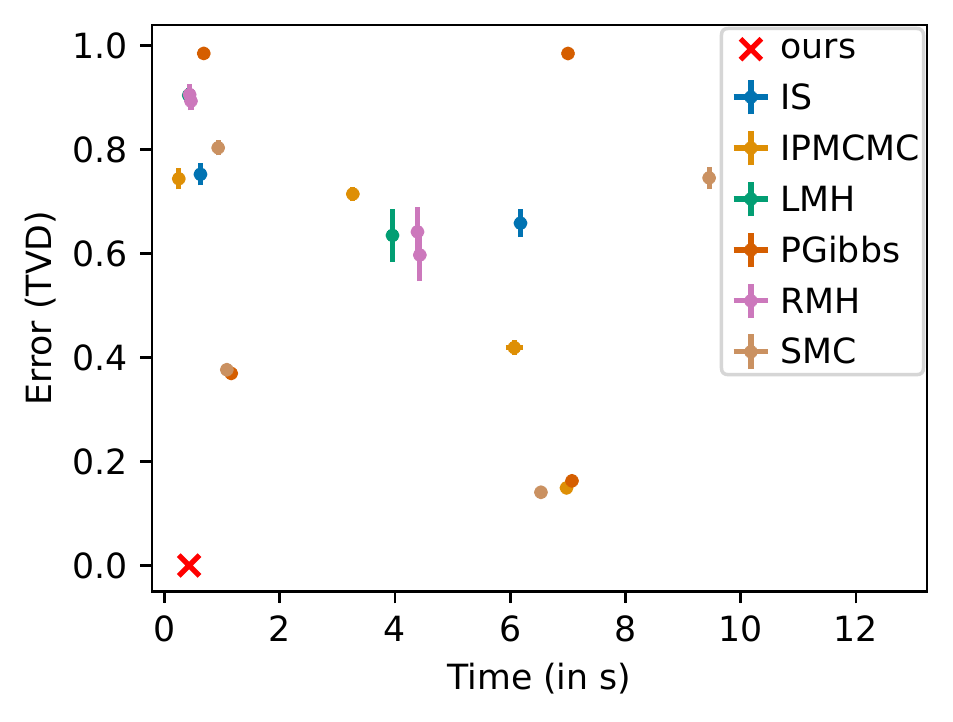}
\caption{HMM: TVD error}
\label{fig:hmm-tvd}
\end{subfigure}
\caption{Plots for the switchpoint, mixture, and hidden Markov model.}
\vspace{-1em}
\end{figure}

\myparagraph{Switchpoint model}
Our second benchmark is Bayesian switchpoint analysis, which is about detecting a change in the frequency of certain events over time.
We use the model from \cite{SalvatierWF16} with continuous priors and its 111 real-world data points about the frequency of coal-mining accidents.
We compare both the moment errors (\cref{fig:comparison-moments}) and the TVD errors (\cref{fig:switchpoint-tvd}).
In both cases, the approximations are less accurate and take longer than our exact method.

\begin{figure}
\begin{subfigure}[t]{0.24\textwidth}
\centering
\includegraphics[width=\textwidth]{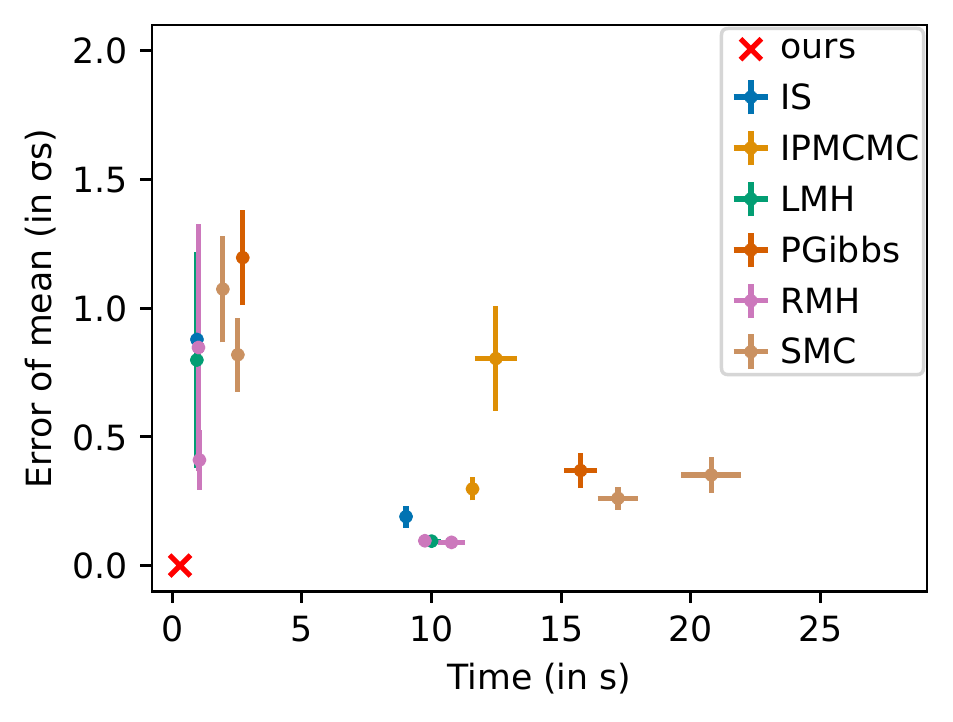}
\caption{Error of the mean}
\end{subfigure}
\hfill
\begin{subfigure}[t]{0.24\textwidth}
\centering
\includegraphics[width=\textwidth]{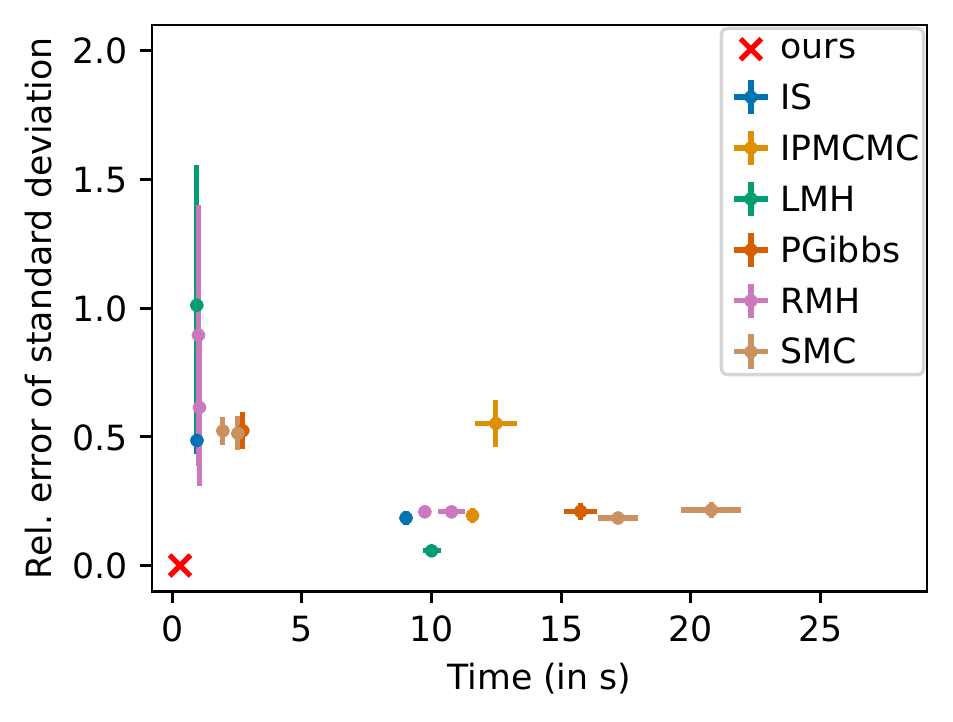}
\caption{Error of the std.~dev.}
\end{subfigure}
\hfill
\begin{subfigure}[t]{0.24\textwidth}
\centering
\includegraphics[width=\textwidth]{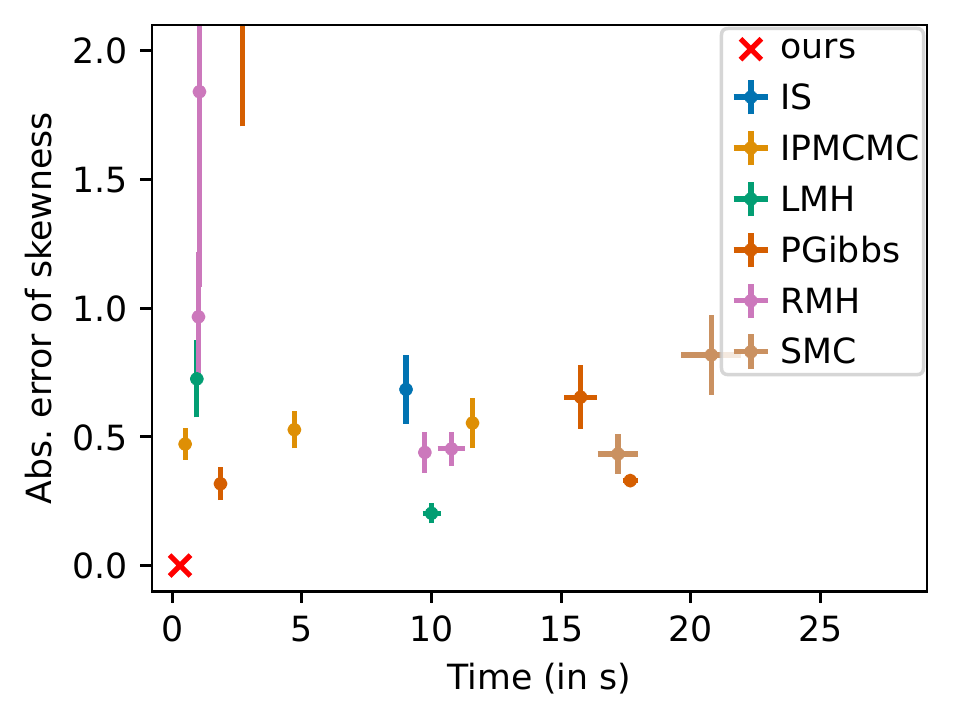}
\caption{Error of the skewness}
\end{subfigure}
\hfill
\begin{subfigure}[t]{0.24\textwidth}
\centering
\includegraphics[width=\textwidth]{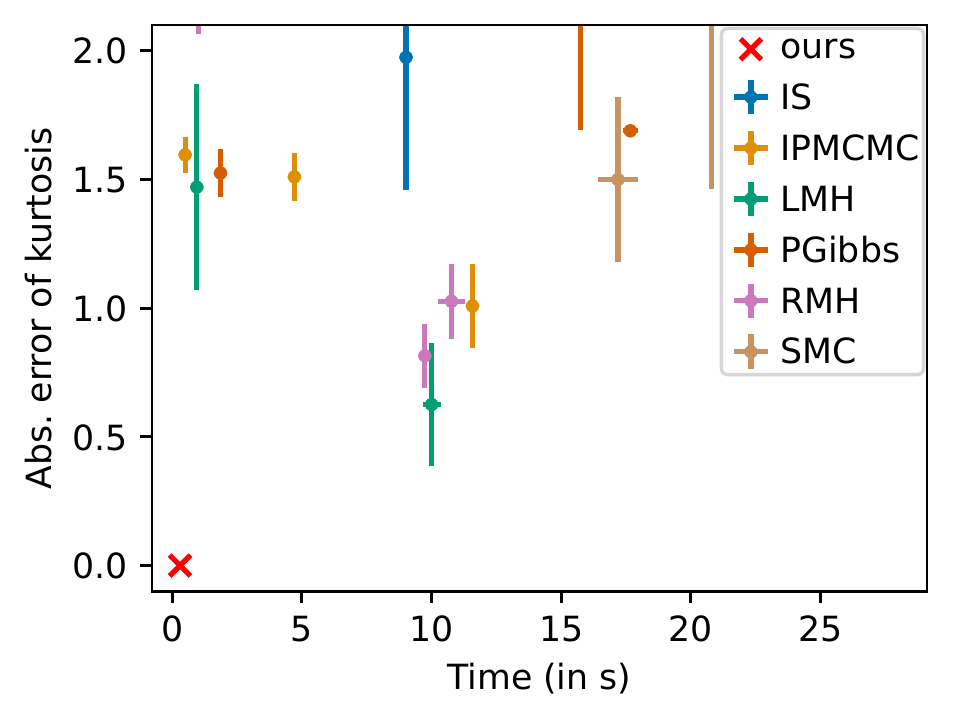}
\caption{Error of the kurtosis}
\end{subfigure}
\caption{Comparison of moments for the switchpoint model: approximate inference vs our method.}
\label{fig:comparison-moments}
\vspace{-1em}
\end{figure}

\myparagraph{Mixture model}
We consider a binary mixture model on the same data set, with equal mixture weights and a geometric prior for the rates: each data point is observed from a mixture of two Poisson distributions with different rates and the task is to infer these rates.
Due to their multimodality, mixture models are notoriously hard for approximate inference methods, which is confirmed in \cref{fig:mixture-tvd}.
Even the runs with the lowest error cover only one of the two modes (cf.~the sample histogram in \cref{fig:mixture-histogram}).

\myparagraph{Hidden Markov model}
We use a hidden Markov model based on \cite[Section 2.2]{SaadRM21}, but involving infinite (geometric) priors.
It is a two-state system with known transition probabilities and the rate for the observed data depends on the hidden state.
We run this model on 30 simulated data points.
For this model as well, our method clearly outperforms approximate methods (\cref{fig:hmm-tvd}).

To our knowledge, our approach is the first to find an exact solution to these problems, except the very first problem without the modifications, which appeared in \cite{WinnerS16}.
For brevity, we only presented the most important aspects of these benchmarks, relegating their encoding as probabilistic programs to \cref{app:benchmarks}.
Code and reproduction instructions are provided in the supplementary material.

\section{Conclusion}

By leveraging generating functions, we have developed and proven correct a framework for exact Bayesian inference on discrete models, even with infinite support and continuous priors.
We have demonstrated competitive performance on a range of models specified in an expressive probabilistic programming language, which our tool Genfer processes automatically.

\myparagraph{Future work}
It is a natural question how our method could be integrated with a more general probabilistic programming system.
For example, in a sampling-based inference algorithm, one could imagine using generating functions to solve subprograms exactly if this is possible.
More generally, it would be desirable to explore how the compositionality of the GF translation can be improved.
As it stands, our GF translation describes the joint distribution of all program variables -- we never reason “locally” about a subset of the variables.
A compositional approach would likely facilitate the application of the GF method to functional probabilistic languages like Anglican.

\clearpage

\begin{ack}
We would like to thank Maria Craciun, Hugo Paquet, Tim Reichelt, and Dominik Wagner for providing valuable input on this work.
We are especially grateful to Mathieu Huot for insightful discussions and very helpful feedback on a draft of this paper.

This research was supported by the Engineering and Physical Sciences Research Council (studentship 2285273, grant EP/T006579) and the National Research Foundation, Singapore, under its RSS Scheme (NRF-RSS2022-009).
For the purpose of Open Access, the authors have applied a CC BY public copyright license to any Author Accepted Manuscript version arising from this submission.
\end{ack}

\bibliographystyle{abbrvnat}
\bibliography{references}


\appendix

\clearpage

\section{Details on the Probabilistic Programming Language SGCL}
\label{app:problang}

\subsection{A minimal grammar}

A minimal grammar of SGCL looks like this:
\begin{align*}
P & ::= \Skip \mid P_1;P_2 \mid X_k := a_1X_1 + \cdots + a_nX_n + c \mid \Ite{X_k \in A}{P_1}{P_2} \\
  & \mid \Fail \mid X_k \sim D \mid X_k \sim D(X_j)
\end{align*}
where $a_1,\dots,a_n,c \in \nnr$, $p \in [0,1]$ are literals, and $A$ is a finite subset of $\NN$.
The $X_k$ for $k = 1, \dots, n$ are variables that can take on values in $\nnr$.
To explain, $\Skip$ leaves everything unchanged, $P_1;P_2$ runs $P_1$ and then $P_2$, and $X_k := \mathbf{a}^\top \X + c$ performs an affine transformation
where $\mathbf{a} \in \nnr^n$ and $c \in \nnr$ (to ensure that the support of $X_k$ is a subset of $\nnr$).
The if construct allows branching based on comparisons of a variable $X_k$ with constant natural numbers, with the requirement that $X_k$ be a discrete variable on $\NN$, and $X_k \sim D$ samples $X_k$ from a primitive distribution (with constant parameters)
\begin{align*}
\mathcal D &= \{ \Bernoulli(p), \Categorical(p_0,\dots,p_m), \Binomial(m, p), \Uniform\{l..m\}, \\
&\qquad \NegBinomial(m, p), \Geometric(p), \Poisson(\lambda), \\
&\qquad \Exponential(\alpha), \GammaDist(\alpha, \beta), \Uniform[a, b] \\
&\qquad \mid p \in [0, 1], \alpha, \beta \in \nnr, l, m \in \NN, a, b \in \nnr, a \le b \}
\end{align*}
whereas $X_k \sim D(X_j)$ samples $X_k$ from a compound distribution (with a parameter that is again a variable)
\begin{align*}
\mathcal D(X_j) &= \{ \Binomial(X_j, p), \NegBinomial(X_j, p), \Poisson(\lambda \cdot X_j), \Bernoulli(X_j) \\
&\qquad \mid p \in [0, 1], j \in \{1, \dots, n\}, \lambda \in \nnr \}.
\end{align*}

\subsection{Syntactic sugar}
For convenience, we add some syntactic sugar, which is expanded as follows:
\begin{align*}
X_k \plusassign \mathbf{a}^\top \X + c &\leadsto \X_k := (\mathbf{a}[k \mapsto a_k + 1])^\top \X + c \\
X_k \plussample D &\leadsto X_{n+1} \sim D; X_k \plusassign X_{n+1} \quad \text{($X_{n+1}$ fresh)} \\
\Observe \phi &\leadsto \Ite{\phi}{\Skip}{\Fail} \\
\Ite{m \sim D}{P_1}{P_2} &\leadsto X_{n+1} \sim D; \Ite{X_{n+1} = m}{P_1}{P_2} \quad \text{($X_{n+1}$ fresh)} \\
\Ite{X_k = m}{P_1}{P_2} &\leadsto \Ite{X_k \in \{m\}}{P_1}{P_2} \\
\Ite{\lnot \phi}{P_1}{P_2} &\leadsto \Ite{\phi}{P_2}{P_1} \\
\Ite{\phi_1 \land \phi_2}{P_1}{P_2} &\leadsto \Ite{\phi_1}{\Ite{\phi_2}{P_1}{P_2}}{P_2} \\
\Ite{\phi_1 \lor \phi_2}{P_1}{P_2} &\leadsto \Ite{\phi_1}{P_1}{\Ite{\phi_2}{P_1}{P_2}}
\end{align*}

\subsection{Standard measure semantics}

\paragraph{Notation and Conventions}
Throughout the text, we always assume that $\NN$ carries the discrete $\sigma$-algebra $\mathcal P(\NN)$ and $\RR$ the Borel $\sigma$-algebra $\mathcal B(\RR)$.
Products of these spaces (e.g. $\NN \times \RR$) are endowed with the product $\sigma$-algebra.
In the following, we write $\x$ for $(x_1, \dots, x_n)$ and use the update notation $\x[i \mapsto a]$ to denote $(x_1, \dots, x_{i-1}, a, x_{i+1}, \dots, x_n)$.
In fact, we even write updates $\x[i_1 \mapsto a_1, \dots, i_k \mapsto a_k]$, which stands for $\x[i_1 \mapsto a_1]\cdots[i_k \mapsto a_k]$.
Finally, we write $\one_k$ for the $k$-tuple $(1,\dots,1)$ and similarly for $\zero_k$.
We write $\zero$ for the zero measure or the constant zero function.
We also use Iverson brackets $[\phi]$ for a logical condition $\phi$: $[\phi]$ is 1 if $\phi$ is satisfied and 0 otherwise.
For example, the indicator function $\one_S$ of a set $S$ can be defined using Iverson brackets: $\one_S(x) = [x \in S]$.
For a measure $\mu$ on $\RR^n$, we define $\mu_{X_k \in A}(S) = \mu(\{ \X \in S \mid X_k \in A \})$.

\paragraph{Standard transformer semantics}
The standard semantics for such a probabilistic program uses distribution transformers \cite{Kozen81}.
Since our language includes conditioning, the distribution of program states will not always be a probability distribution, but a subprobability distribution $\mu$ on $\nnr^n$ for $n$ variables, i.e. a function $\mu: \mathcal B(\nnr^n) \to [0, 1]$ with $\mu(\nnr^n) \le 1$.
The semantics of a program $P$ is thus a measure transformer $\stdsem{P}: \Meas(\nnr^k) \to \Meas(\nnr^k)$.
We assume that the initial measure is the probability distribution where every variable $X_k$ is 0 with probability 1, i.e. $\X \sim \Dirac(\zero_n)$.
The semantics is shown in \cref{fig:standard-semantics} where $\X$ is a vector of length $n$.

\begin{figure}
\begin{align*}
\stdsem{\Skip}(\mu) &= \mu \\
\stdsem{P_1; P_2}(\mu) &= \stdsem{P_2}(\stdsem{P_1}(\mu)) \\
\stdsem{X_k := \mathbf{a}^\top \X + c}(\mu)(S) &= \mu(\{ \X \in \RR^n \mid \X[k \mapsto \mathbf{a}^\top \X + c] \in S \}) \\
\stdsem{\Ite{X_k \in A}{P_1}{P_2}}(\mu) &= \stdsem{P_1}(\mu_{X_k \in A}) + \stdsem{P_2}(\mu - \mu_{X_k \in A}) \\
\text{where } \mu_{X_k \in A}(S) &= \mu(\{ \X \in S \mid X_k \in A \}) \\
\stdsem{X_k \sim D}(\mu)(S) &= \int \int [\X[k \mapsto Y] \in S] \D D(Y) \D\mu(\X) \\
\stdsem{X_k \sim D(X_j)}(\mu)(S) &= \int \int [\X[k \mapsto Y] \in S] \D D(X_j)(Y) \D \mu(\X) \\
\stdsem{\Fail}(\mu) &= \zero \\
\Normalize(\mu) &= \frac{\mu}{\int \D \mu(\X)}
\end{align*}
\caption{The standard measure semantics $\stdsem{-}$.}
\label{fig:standard-semantics}
\end{figure}

The first 5 rules were essentially given by \cite{Kozen81}, with slightly different notation and presentation.
Kozen defines some of the measures on rectangles ($S = S_1 \times \cdots \times S_n$) only, which uniquely extends to a measure on all sets in the product $\sigma$-algebra.
By contrast, we directly define the measure on all sets in the $\sigma$-algebra using integrals.

We quickly explain the rules.
The $\Skip$ command does nothing, so the distribution $\mu$ of the program is unchanged.
Chaining commands $P_1;P_2$ has the effect of composing the corresponding transformations of $\mu$.
Affine transformations of program variables yield a measure of $S$ that is given by the original measure of the preimage of the event $S$ under the affine map.
For conditionals $\Ite{X_k \in A}{P_1}{P_2}$, we consider only outcomes satisfying $X_k \in A$ in the first branch and the complement in the second branch and then sum both possibilities.
Sampling from distributions $X_k \sim D$ is a bit more complicated.
Note that $\mu(S) = \int [\X \in S] \D\mu (\X)$.
To obtain the measure after sampling $X_k \sim D$, we care about the new state $\X[k \mapsto Y]$ being in $S$, where $Y$ has distribution $D$.
This is exactly expressed by the double integral.
The rule for compound distributions $X_k \sim D(X_j)$ is essentially the same as the one for distributions with constant parameters $X_k \sim D$, except the dependency on $X_j$ needs to be respected.

The two new rules (not derivable from Kozen's work) are for $\Fail$ and the implicit normalization $\Normalize(\mu)$ as a final step at the end of a program.
The rule for $\Fail$ departs from Kozen's semantics in the sense that it yields a subprobability distribution (i.e. the total mass can be less than 1).
Intuitively, $\Fail$ ``cuts off'' or assigns the probability 0 to certain branches in the program.
This is called \emph{hard conditioning} in the PPL literature.
For example, $\Ite{X_1 = 5}{\Skip}{\Fail}$ has the effect of conditioning on the event that $X_1 = 5$ because all other scenarios are assigned probability 0.

The normalization construct corresponds to the division by the evidence in Bayes' rule and scales the subprobability distribution of a subprogram $P$ back to a probability distribution.
The semantics is $\frac{\mu}{\int \D \mu(\X)}$, so the distribution $\mu$ is scaled by the inverse of the total probability mass of $\mu$, thus normalizing it.

\section{Details on Generating Functions}
\label{app:genfun}

\begin{table}\small
\begin{subtable}{0.51\textwidth}
\centering
\caption{GFs for common distributions}
\label{table:gf-distributions-full}
\begin{tabular}{lc}
\toprule
Distribution $D$ &$\gf{X \sim D}(x)$ \\
\midrule
$\Dirac(a)$ &$x^a$ \\
$\Bernoulli(p)$ &$p x + 1 - p$ \\
$\Categorical(p_0, \dots, p_n)$ &$\sum_{i=0}^n p_i x^i$ \\
$\Binomial(n,p)$ &$(p x + 1 - p)^n$ \\
$\Uniform\{a..b\}$ &$\frac{x^a-x^{b+1}}{(b-a+1)(1-x)}$ \\
\midrule
$\NegBinomial(r,p)$ &$\left(\frac{p}{1 - (1 - p) x}\right)^r$ \\
$\Geometric(p)$ &$\frac{p}{1 - (1 - p) x}$ \\
$\Poisson(\lambda)$ &$e^{\lambda(x-1)}$ \\
\midrule
$\Exponential(\lambda)$ &$\frac{\lambda}{\lambda - \log x}$ \\
$\GammaDist(\alpha, \beta)$ &$\left(\frac{\beta}{\beta - \log x}\right)^{\alpha}$ \\
$\Uniform[a,b]$ &$\begin{cases}
  \frac{x^b - x^a}{(b-a)\log x} &x \ne 1 \\
  1 &x = 1
\end{cases}$ \\
\bottomrule
\end{tabular}
\end{subtable}
\begin{subtable}{0.48\textwidth}
\centering
\caption{GFs for common compound distributions}
\label{table:gf-compound-distributions-full}
\begin{tabular}{lc}
\toprule
Distribution $D(Y)$ &$\gf{X \sim D(Y)}(x)$ \\
\midrule
$\Binomial(Y,p)$ &$\gf{Y}(1 - p + p x)$ \\
$\NegBinomial(Y,p)$ &$\gf{Y}\left(\frac{p}{1 - (1 - p) x}\right)$ \\
$\Poisson(\lambda \cdot Y)$ &$\gf{Y}(e^{\lambda(x-1)})$ \\
$\Bernoulli(Y)$ &$1 + (x-1)\cdot (\gf{Y})'(1)$ \\
\bottomrule
\end{tabular}
\end{subtable}
\end{table}

An exhaustive list of the supported distributions and their generating functions can be found in \cref{table:gf-distributions-full,table:gf-compound-distributions-full}.

\subsection{Generating function semantics}

\begin{figure}
\begin{align*}
\gfsem{\Skip}(G) &= G \\
\gfsem{P_1; P_2}(G) &= \gfsem{P_2}(\gfsem{P_1}(G)) \\
\gfsem{X_k := \mathbf{a}^\top \X + c}(G)(\mathbf{x}) &= x_k^c \cdot G(\x') \\
&\quad\text{where } x'_k := x_k^{a_k} \text{ and } x'_i := x_i x_k^{a_i} \text{ for } i \neq k \\
\gfsem{\Ite{X_k \in A}{P_1}{P_2}}(G) &= \gfsem{P_1}(G_{X_k \in A}) + \gfsem{P_2}(G - G_{X_k \in A}) \\
\text{where } G_{X_k \in A}(\mathbf{x}) &= \sum_{i \in A} \frac{\partial_k^iG(\mathbf{x}[k \mapsto 0])}{i!}x_k^i \quad \text{and } A \subset \NN \text{ finite} \\
&\text{if} X_k \text{ has support in } \NN \\
\gfsem{X_k \sim D}(G)(\mathbf{x}) &= G(\mathbf{x}[k \mapsto 1]) \gf{D}(x_k) \\
\gfsem{X_k \sim D(X_j)}(G)(\mathbf{x}) &= G(\mathbf{x}[k \mapsto 1, j \mapsto x_j \cdot \gf{D(1)}(x_k)]) \\
&\text{for } D \in \{ \Binomial(-, p), \NegBinomial(-, p), \Poisson(\lambda \cdot -) \} \\
\gfsem{X_k \sim \Bernoulli(X_j)}(G)(\mathbf{x}) &= G(\x[k \mapsto 1]) + x_j(x_k - 1) \cdot \partial_j G(\x[k \mapsto 1]) \\
&\text{if } X_j \text{ has support in } [0,1] \\
\gfsem{\Fail}(G) &= \zero \\
\Normalize(G) &= \frac{G}{G(\one_n)}
\end{align*}
\caption{The generating function (GF) semantics $\gfsem{-}$.}
\label{fig:gf-semantics-full}
\end{figure}

The full generating function semantics can be found in \cref{fig:gf-semantics-full}.
The first five rules in \cref{fig:gf-semantics-full} were presented in \cite{KlinkenbergBKKM20,ChenKKW22}.
The GF for sampling from $\Binomial(X_j, p)$ was first given in \cite{WinnerS16}.
\citet{ChenKKW22} also implicitly describe the GF semantics of sampling from $\Binomial(X_j, p)$ and $\NegBinomial(X_j, p)$ by expressing it as summing $X_j$ iid samples from $\Bernoulli(p)$ and $\Geometric(p)$, respectively.

Let us discuss each rule in detail.
The $\Skip$ command leaves the distribution and hence the generating function unchanged.
Chaining commands $P_1;P_2$ is again achieved by composing the individual transformers for each command.
To obtain some intuition for the affine transformation case $X_k := \mathbf{a}^\top \X + c$, suppose we only have 2 variables: $X_1 := 2X_1 + 3X_2 + 5$.
The idea is that the transformed generating function is given by
$\gfsem{X_1 := 2X_1 + 3X_2 + 5}(G)(\x) = \ExpVal[x_1^{2X_1 + 3X_2 + 5} x_2^{X_2}] = x_1^5 \ExpVal[(x_1^2)^{X_1} (x_1^3 x_2)^{X_2}] = x_1^5 \cdot G(x_1^2, x_1^3x_2)$.

The semantics of $\Ite{X_k \in A}{P_1}{P_2}$ uses $G_{X_k \in A}$.
If $G$ is the generating function for a measure $\mu$ then $G_{X_k \in A}$ is the generating function for the measure $\mu_{X_k \in A}(S) := \mu(\{ \x \in S \mid x_k \in A \})$.
To get intuition for this, remember that for discrete variables $X_1, X_2$, we have $G(x_1, x_2) = \sum_{X_1,X_2} \mu(\{ (X_1, X_2) \}) \cdot x_1^{X_1}x_2^{X_2}$.
If we want to obtain the generating function for $X_1 = 5$, we want to keep only terms where $x_1$ has exponent 5.
In other words, we need the Taylor coefficient $\frac{\partial_1^5 G(0, x_2)}{5!}$.
This should give some intuition for $G_{X_k \in A}$.
The semantics of $\Ite{X_k \in A}{P_1}{P_2}$ transforms $G_{X_k \in A}$ in the then-branch and the complement $G - G_{X_k \in A}$ in the else branch, summing the two possibilities.

For sampling, say $X_1 \sim D$, we first marginalize out the old $X_1$ by substituting 1 for $x_1$ in $G$: $\ExpVal[x_2^{X_2}] = \ExpVal[1^{X_1}x_2^{X_2}] = G(1, x_2)$.
Then we multiply with the GF of the new distribution $D$.
The rules for compound distributions $D(X_j)$ are more involved, so we offer the following intuition.

For the compound distributions $D(n) = \Binomial(n, p), \NegBinomial(n, p), \Poisson(\lambda \cdot n)$, we make essential use of their property that $\gf{D(n)}(x) = (\gf{D(1)}(x))^n$.
So if $G(x) := \ExpVal[x^X]$ is the GF of $X$ and $Y \sim D(X)$, then $\ExpVal_{X,Y}[x^X y^Y] = \ExpVal_X[x^X \ExpVal_{Y \sim D(X)}[y^Y]] = \ExpVal_X[x^X \gf{D(X)}(y)] = \ExpVal_X[x^X (\gf{D(1)}(y))^X] = \ExpVal[(x \cdot \gf{D(1)}(y))^X] = G(x \cdot \gf{D(1)}(y))$.
This is not a fully formal argument, but it should give some intuition for the rule.

For $D(p) = \Bernoulli(p)$, the intuition is that if $G(x) := \ExpVal[x^X]$ is the GF of $X$ and if $Y \sim \Bernoulli(X)$, then $H(x,y) = \ExpVal[x^X y^Y] = \ExpVal[x^X ((1 - X) y^0 + X y^1)] = \ExpVal[x^X - (y - 1)Xx^X] = G(x) - (y-1)xG'(x)$.

The GF semantics of $\Fail$ is the zero function because its distribution is zero everywhere.
For normalization, note that substituting $\one_n$ in the generating function computes the marginal probabilities, similarly to the substitution of 1 in the semantics of sampling.
It scales the GF of the subprogram $P$ by the inverse of the normalizing constant obtained by that substitution.

We hope these explanations give some intuition for the GF semantics, but of course, proof is required.

\subsection{Correctness proof}
\label{app:correctess-proof}

For the correctness proof, we need to describe the set of $\x \in \RR^n$ where the generating function is defined.
For $R > 0$ and a measure $\mu$ on $\nnr^n$, define it to be the product of open intervals
\[ T(R,\mu) := (Q_1, R) \times \cdots \times (Q_n, R) \]
where $Q_i = -R$ if the $i$-th component of $\mu$ is supported on $\NN$ (i.e. $\mu(S) = \mu(\{ \X \in S \mid X_i \in \NN \})$ for any measurable $S$) and $Q_i = 0$ otherwise.
The case split is necessary because the generating functions are better-behaved, and thus have a larger domain, if the measure is supported on $\NN$.
In fact, it is important that $x_i = 0$ is possible in the $\NN$-case because otherwise we would not be able to extract probability masses from the GF.

\begin{theorem}[Correctness]
The GF semantics is correct w.r.t.~the standard semantics $\stdsem{-}$: for any SGCL program $P$ and subprobability distribution $\mu$ on $\nnr^n$, we have $\gfsem{P}(\gf{\mu}) = \gf{\stdsem{P}(\mu)}$.
In particular, it correctly computes the GF of the posterior distribution of $P$ as $\Normalize(\gfsem{P}(\one))$.

If $\gf{\mu}$ is defined on $T(R,\mu)$ for some $R > 1$ then $\gfsem{P}(\gf{\mu})$ is defined on $T(R',\stdsem{P}(\mu))$ for some $R' > 1$.
In particular, the GFs $\gfsem{P}(\one)$ and thus $\Normalize(\gfsem{P}(\one))$ are defined on $T(R,\stdsem{P}(\mu))$ for some $R > 1$.
\end{theorem}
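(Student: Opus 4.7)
The plan is to prove both assertions simultaneously by structural induction on the program $P$, strengthening the induction hypothesis to carry along the domain condition. For each syntactic construct, I would verify (i) the algebraic identity $\gfsem{P}(\gf{\mu})(\x) = \gf{\stdsem{P}(\mu)}(\x)$ pointwise on some $T(R',\stdsem{P}(\mu))$, and (ii) that such an $R' > 1$ exists given an $R > 1$ for the input. The normalization and posterior claims are then an immediate corollary, since starting from the Dirac measure at $\zero_n$ gives $\gf{\Dirac(\zero_n)} = \one$, which is defined on all of $\RR^n$.

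The easy cases are $\Skip$ and sequential composition (direct unfolding plus the inductive hypothesis). For affine assignments $X_k := \mathbf{a}^\top\X + c$, I would compute $\gf{\stdsem{P}(\mu)}(\x) = \int \x^{\X[k \mapsto \mathbf{a}^\top\X + c]} \D\mu(\X)$ and regroup factors to match $x_k^c \cdot \gf{\mu}(\x')$; the domain is preserved because $|x'_i| \le R\cdot R^{\|\mathbf{a}\|_1}$, so shrinking $R$ slightly (it suffices to find $R' > 1$ with $(R')^{1+\|\mathbf{a}\|_1} \le R$) keeps us in $T(R,\mu)$. For primitive sampling $X_k \sim D$, Fubini (applicable since the GF integrand is bounded on $T(R,\mu)$) separates the integral into $\gf{\mu}(\x[k\mapsto 1]) \cdot \gf{D}(x_k)$, and the domain is handled by checking that each $\gf{D}$ in the table is analytic on an open interval around $[-1,1]$ (or around $1$ for the continuous distributions).

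The conceptually delicate cases are conditionals/observations and compound distributions, and this is where I expect the main obstacle. For $\Observe X_k \in A$ and the branching rule, I need to show that the operation $G \mapsto G_{X_k \in A} = \sum_{i \in A} \frac{\partial_k^i G(\x[k\mapsto 0])}{i!} x_k^i$ computes $\gf{\mu_{X_k\in A}}$. For discrete-supported $X_k$, one can justify this by differentiating the integral definition $G(\x) = \int \x^{\X}\D\mu(\X)$ under the integral sign (legitimate on the open domain since $\x^{\X}$ is dominated by $R^{\sum X_i}$, which is $\mu$-integrable because $\gf{\mu}(R,\dots,R) < \infty$), then evaluating at $x_k = 0$ to extract the coefficient $\mu(X_k = i)$ term. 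For compound distributions $D(X_j) \in \{\Binomial,\NegBinomial,\Poisson\}$, I would use the tower property: $\gf{\stdsem{X_k \sim D(X_j)}(\mu)}(\x) = \int \x^{\X[k\mapsto 0]}\ExpVal_{Y\sim D(X_j)}[x_k^Y]\D\mu(\X)$, and rely on the multiplicative identity $\gf{D(n)}(x) = \gf{D(1)}(x)^n$ to absorb the factor into the $x_j$ component of $\x$. The $\Bernoulli(X_j)$ case is separate because $X_j$ is continuous on $[0,1]$: here I would just expand $\ExpVal[x_k^Y \mid X_j] = 1 + (x_k-1)X_j$ and again differentiate under the integral to replace the factor $X_j$ with $\partial_j$ applied to $G$.

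The hardest part will be the uniform domain control for compound-sampling and nested observations. The substitution $x_j \mapsto x_j \cdot \gf{D(1)}(x_k)$ can exit $T(R,\cdot)$ unless $\gf{D(1)}(x_k)$ stays close to $1$; I would exploit continuity of $\gf{D(1)}$ at $1$ (value $1$) to choose $R' > 1$ so small that $|x_j \cdot \gf{D(1)}(x_k)| < R$ for $\x \in T(R',\cdot)$. For observations, a similar argument works: partial derivatives $\partial_k^i G$ are analytic on the open domain $T(R,\mu)$, so evaluating them at $x_k = 0$ (allowed in the $\NN$-supported case) gives analytic functions of the remaining variables on a slightly smaller polydisk, and $G_{X_k\in A}$ inherits this domain since $A$ is finite. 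Composing these shrinkage steps through the finitely many statements of $P$ yields a single $R' > 1$ that works for the whole program, which closes the induction.
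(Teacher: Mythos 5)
Your proposal follows essentially the same route as the paper: structural induction over the program syntax with a strengthened hypothesis tracking the domain $T(R,\mu)$, the coefficient-extraction identity $\sum_{i\in A}\frac{\partial_k^i G(\x[k\mapsto 0])}{i!}x_k^i$ for conditionals and observations, the multiplicativity $\gf{D(n)}=\gf{D(1)}^n$ plus the tower property for compound sampling (with the $\Bernoulli(X_j)$ case handled separately via $\ExpVal[x_k^Y\mid X_j]=1+(x_k-1)X_j$ and differentiation under the integral), and per-statement shrinkage of the radius $R$. The one place the paper is noticeably more careful is the justification of the interchange of differentiation and integration (its dedicated lemmas verify the Leibniz rule by dominating the \emph{derivative} of the integrand, whose polynomial prefactors in $\X$ must be absorbed into a slightly larger radius $R''<R$, with a case split on whether a component is supported on $\NN$); your domination of the integrand alone by $R^{\sum X_i}$ is slightly too coarse for this, but the fix is routine and in the same spirit.
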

\begin{proof}
As usual for such statements, we prove this by induction on the structure of the program.
The program $P$ can take one of the following forms:

\textbf{Skip.}
For the trivial program, we have $\gfsem{P}(G) = G$ and $\stdsem{P}(\mu) = \mu$, so the claim is trivial.

\textbf{Chaining.}
If $P$ is $P_1; P_2$, we know by the inductive hypothesis that $\gfsem{P_1}(\gf{\mu}) = \gf{\stdsem{P_1}(\mu)}$ and similarly for $P_2$.
Taking this together, we find by the definitions of the semantics:
\begin{align*}
\gfsem{P_1; P_2}(\gf{\mu}) &= \gfsem{P_2}(\gfsem{P_1}(\gf{\mu})) = \gfsem{P_2}(\gf{\stdsem{P_1}(\mu)}) \\
&= \gf{\stdsem{P_2}(\stdsem{P_1}(\mu))} = \gf{\stdsem{P_1;P_2}(\mu)}.
\end{align*}
The claim about the domain of $\gfsem{P}(G)$ follows directly from the inductive hypotheses.

\textbf{Affine assignments.}
If $P$ is $X_k := \mathbf{a}^\top \X + c$, we have $\stdsem{P}(\mu)(S) = \mu(\{ \X \in \RR^n \mid \X[k \mapsto \mathbf{a}^\top \X + c] \in S \})$ and thus
\begin{align*}
\gf{\stdsem{P}(\mu)}(\x) &= \int \x^\X \D(\stdsem{P}(\mu))(\X) \\
&= \int x_1^{X_1} \cdots x_{k-1}^{X_{k-1}} x_k^{a_1 X_1 + \cdots + a_n X_n + c} x_{k+1}^{X_{k+1}} \cdots x_n^{X_n} \D \mu(X_1, \dots, X_n) \\
&= x_k^c \cdot \int (x_1 x_k^{a_1})^{X_1} \cdots (x_{k-1} x_k^{a_{k-1}})^{X_{k-1}} (x_k^{a_k})^{X_k} \cdot \\
&\qquad \qquad \qquad (x_{k+1} x_k^{a_{k+1}})^{X_{k+1}} \cdots (x_n x_k^{a_n})^{X_n} \D \mu(\X) \\
&= x_k^c \cdot \gf{\mu}(x_1 x_k^{a_1}, \dots, x_{k-1} x_k^{a_{k-1}}, x_k^{a_k}, x_{k+1} x_k^{a_{k+1}}, \dots, x_n x_k^{a_n}) \\
&= \gfsem{P}(\gf{\mu})(\x)
\end{align*}
The claim that it is defined on some $T(R',\stdsem{P}(\mu))$ holds if we choose $R' = \min(\sqrt{R}, \min_{i=1}^n \sqrt[2a_i]{R}) > 1$ because then $|x_i x_k^{a_i}| < \sqrt{R} (\sqrt[2a_i]{R})^{a_i} \le R$ for $i \ne k$ and $|x_k^{a_k}| < (\sqrt[2a_k]{R})^{a_k} \le \sqrt{R} \le R$.

\textbf{Conditionals.}
If $P$ is $\Ite{X_k \in A}{P_1}{P_2}$, the GF semantics defines $(\gf{\mu})_{X_k \in A}$, which is the same as $\gf{\mu_{X_k \in A}}$ by \cref{lem:conditional-gf}.
Using this fact, the linearity of $\gf{-}$, and the induction hypothesis, we find
\begin{align*}
\gf{\stdsem{P}(\mu)} &= \gf{\stdsem{P_1}(\mu_{X_k \in A})} + \gf{\stdsem{P_2}(\mu - \mu_{X_k \in A})} \\
&= \gfsem{P_1}(\gf{\mu_{X_k \in A}}) + \gfsem{P_2}(\gf{\mu} - \gf{\mu_{X_k \in A}}) \\
&= \gfsem{P_1}((\gf{\mu})_{X_k \in A}) + \gfsem{P_2}(\gf{\mu} - (\gf{\mu})_{X_k \in A}) \\
&= \gfsem{P}(\gf{\mu})
\end{align*}
Also by \cref{lem:conditional-gf}, we know that $(\gf{\mu})_{X_k \in A}$ (and thus $(\gf{\mu} - (\gf{\mu})_{X_k \in A})$) is defined on $T(R,\mu)$, so  by the induction hypothesis, both $\gfsem{P_1}((\gf{\mu})_{X_k \in A})$ and $\gfsem{P_2}(\gf{\mu} - \gf{\mu_{X_k \in A}})$ are defined on some $T(R')$.
Hence the same holds for $\gfsem{P}(\gf{\mu})$.

\textbf{Sampling.}
If $P$ is $X_k \sim D$, we find:
\begin{align*}
\gf{\stdsem{P}(\mu)}(\x) &= \int \x^\X \D(\stdsem{P}(\mu))(\X) \\
&= \int \int \x^{\X[k \mapsto Y]} \D D(Y) \D\mu(\X) \\
&= \int (\x[k \mapsto 1])^\X \int x_k^Y \D D(Y) \D \mu(\X) \\
&= \int (\x[k \mapsto 1])^\X \D \mu(\X) \cdot \int x_k^Y \D D(Y) \\
&= \gf{\mu}(\x[k \mapsto 1]) \gf{D}(x_k) \\
&= \gfsem{P}(\gf{\mu})(\x)
\end{align*}
Regarding the domain of $\gfsem{P}(\gf{\mu})$, note that $\gf{D}$ is defined on all of $\RR$ for any finite distribution and for the Poisson distribution (cf. \cref{table:gf-distributions-full}).
For $D \in \{ \Geometric(p), \NegBinomial(n, p) \}$, the GF $\gf{D}$ is defined on $(-\frac{1}{1-p}, \frac{1}{1-p})$, so we can pick $R' := \min(R, \frac{1}{1-p}) > 1$.
The GF of $\Exponential(\lambda)$ is defined for $\log x < \lambda$, i.e. $x < \exp(\lambda)$, so we can pick $R' := \min(R, e^\lambda) > 1$.
The GF of $\GammaDist(\alpha, \beta)$ is defined for $\log x < \beta$, i.e. $x < \exp(\beta)$, so we can pick $R' := \min(R, e^\beta) > 1$.
The GF of $\Uniform[a,b]$ is defined on $\RR$, so $R' := R$ works.

\textbf{Compound distributions.}
If $P$ is $X_k \sim D(X_j)$, we use the fact that $\gf{D(n)}(\x) = (\gf{D(1)}(\x))^n$ for $D(n) \in \{ \Binomial(n, p), \NegBinomial(n, p), \Poisson(\lambda \cdot n) \}$, which can easily be checked by looking at their generating functions (cf. \cref{table:gf-compound-distributions-full}).
\begin{align*}
\gf{\stdsem{P}(\mu)}(\x) &= \int \x^\X \D(\stdsem{P}(\mu))(\X) \\
&= \int \int \x^{\X[k \mapsto Y]} \D D(X_j)(Y) \D\mu(\X) \\
&= \int (\x[k \mapsto 1])^\X \int x_k^Y \D D(X_j)(Y) \D \mu(\X) \\
&= \int (\x[k \mapsto 1])^\X \gf{D(X_j)}(x_k) \D \mu(\X) \\
&= \int (\x[k \mapsto 1])^\X (\gf{D(1)}(x_k))^{X_j} \D \mu(\X) \\
&= \int (\x[k \mapsto 1, j \mapsto x_j \cdot \gf{D(1)}(x_k)])^\X \D \mu(\X) \\
&= \gf{\mu}(\x[k \mapsto 1, j \mapsto x_j \cdot \gf{D(1)}(x_k)]) \\
&= \gfsem{P}(\gf{\mu})(\x)
\end{align*}
Regarding the domain of $\gfsem{P}(\gf{\mu})$, we have to ensure that $|x_j \cdot \gf{D(1)}(x_k)| < R$.
For the $\Binomial(1, p)$ distribution, we choose $R' = \sqrt{R} > 1$ such that $|x_j| |1 - p + p x_k| < \sqrt{R} (1 - p + p \sqrt{R}) \le \sqrt{R} ((1 - p) \sqrt{R} + p \sqrt{R}) < R$.
For the $\NegBinomial(1)$ distribution, we choose $R' = \min(\sqrt{R}, \frac{\sqrt{R} - p}{\sqrt{R}(1-p)}) = \min(\sqrt{R}, 1 + \frac{p(\sqrt{R} - 1)}{\sqrt{R} - p\sqrt{R}}) > 1$ because for any $x \in (-R', R)$, we have
\[ \left|\frac{p}{1 - (1-p)x_j}\right| < \frac{p}{1 - (1-p)R'} \le \frac{p}{1 - \frac{\sqrt{R} - p}{\sqrt{R}}} = \frac{p \sqrt{R}}{p} = \sqrt{R} \]
and thus $\left|x_j \frac{p}{1 - (1-p)x_k}\right| < R$.
For the $\Poisson(\lambda)$ distribution, we choose $R' = \min(\sqrt{R}, 1 + \frac{\log(R)}{2 \lambda}) > 1$ because then $|x_j \cdot \exp(\lambda (x_k - 1))| < \sqrt{R} \exp(\lambda \frac{\log(R)}{2 \lambda}) = R$.

For the Bernoulli distribution $D(X_j) = \Bernoulli(X_j)$, we reason as follows:
\begin{align*}
\gf{\stdsem{P}(\mu)}(\x) = \cdots &= \int (\x[k \mapsto 1])^\X \gf{D(X_j)}(x_k) \D \mu(\X) \\
&= \int (\x[k \mapsto 1])^\X (1 + X_j(x_k - 1)) \D \mu(\X) \\
&= \int (\x[k \mapsto 1])^\X \D \mu(\X) + (x_k - 1) \cdot \int X_j(\x[k \mapsto 1])^\X \D \mu(\X) \\
&= \gf{\mu}(\x[k \mapsto 1]) + (x_k - 1) \cdot \int x_j \cdot \partial_j (\x[k \mapsto 1])^\X \D \mu(\X) \\
&= \gf{\mu}(\x[k \mapsto 1]) + x_j(x_k - 1) \cdot \partial_j \int (\x[k \mapsto 1])^\X \D \mu(\X) \\
&= \gf{\mu}(\x[k \mapsto 1]) + x_j(x_k - 1) \cdot \partial_j \gf{\mu}(\x[k \mapsto 1]) \\
&= \gfsem{P}(\gf{\mu})(\x)
\end{align*}
Note that the interchange of integral and differentiation is allowed by \cref{lem:interchange}.
It also implies that $\gfsem{P}(\gf{\mu})(\x)$ is defined for $x_j \in (0, R)$ by \cref{lem:interchange}.

\textbf{Fail.}
If $P$ is $\Fail$, then $\stdsem{P}(\mu)$ is the zero measure and $\gfsem{P}(\gf{\mu})$ is the zero function, so the claim holds trivially.
This GF is clearly defined everywhere.

\textbf{Normalization.}
For normalization, we make use of the linearity of $\gf{-}$.
\begin{align*}
\gf{\Normalize(\mu)} &= \mathsf{gf}\left( \frac{\mu}{\int \D \mu(\X)} \right) \\
&= \frac{\gf{\mu}}{\int \D \mu(\X)}  \\
&= \frac{\gf{\mu}}{\int 1^{X_1}\cdots 1^{X_n} \D \mu(\X)} \\
&= \frac{\gf{\mu}}{\gf{\mu}(\one_n)} \\
&= \Normalize(\gf{\mu})
\end{align*}
Furthermore, if $\gf{\mu}$ is defined on $T(R,\mu)$, then so is $\Normalize(\gf{\mu})$ on $T(R,\Normalize(\mu)) = T(R,\mu)$.

This finishes the induction and proves the claim.
\end{proof}

\begin{lemma}
\label{lem:conditional-gf}
Let $A \subset \NN$ be a finite set.
Let $\mu$ be a measure on $\RR^n$ such that its $k$-th component is supported on $\NN$, i.e. $\mu(S) = \mu(\{ \X \in S \mid X_k \in \NN \})$ for all measurable $S \subseteq \nnr$.
Define $\mu_{X_k \in A}(S) := \mu(\{ \X \in S \mid X_k \in A \})$.
Let $G := \gf{\mu}$.
Then
\[ \gf{\mu_{X_k \in A}}(\x) = G_{X_k \in A}(\x) := \sum_{i \in A} \frac{\partial_k^i G(\x[k \mapsto 0])}{i!} \cdot x_k^i \]
Furthermore, if $G$ is defined on $T(R,\mu)$ then so is $G_{X_k \in A}$ on $T(R,\mu_{X_k \in A}) = T(R, \mu)$.
\end{lemma}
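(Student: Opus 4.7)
The plan is to unfold the definition of $\gf{\mu_{X_k \in A}}$, use the hypothesis that $X_k$ is $\NN$-supported together with the finiteness of $A$ to decompose the indicator $[X_k \in A]$ into a finite sum of atomic indicators $[X_k = i]$, and then identify each summand with a Taylor coefficient of $G$ in the $x_k$ direction evaluated at $x_k = 0$. Concretely, I would begin by writing
\[
\gf{\mu_{X_k \in A}}(\x) \;=\; \int \x^\X \, \D\mu_{X_k\in A}(\X) \;=\; \int [X_k \in A]\, \x^\X \, \D\mu(\X) \;=\; \sum_{i \in A} \int [X_k = i]\, \x^\X \, \D\mu(\X),
\]
where the interchange of sum and integral is legitimate because $A$ is finite. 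Using $\x^\X = x_k^{X_k}\cdot (\x[k \mapsto 1])^\X$, each summand factors as $x_k^i \cdot \int [X_k = i] (\x[k \mapsto 1])^\X \D\mu(\X)$.

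Next I would compute the Taylor coefficient on the right-hand side and show that it produces precisely the integral $\int [X_k = i] (\x[k \mapsto 1])^\X \D\mu(\X)$. Differentiating $G(\x) = \int (\x[k \mapsto 1])^\X x_k^{X_k} \D\mu(\X)$ formally $i$ times with respect to $x_k$ yields
\[
\partial_k^i G(\x) \;=\; \int (\x[k \mapsto 1])^\X \, X_k(X_k-1)\cdots(X_k-i+1)\, x_k^{X_k - i} \, \D\mu(\X).
\]
Setting $x_k = 0$ and using that $X_k \in \NN$ (so $X_k(X_k-1)\cdots(X_k-i+1) \cdot 0^{X_k - i}$ vanishes unless $X_k = i$, where by convention $0^0 = 1$ and the factorial factor equals $i!$) gives $\partial_k^i G(\x[k \mapsto 0]) = i! \int [X_k = i] (\x[k \mapsto 1])^\X \D\mu(\X)$, which combines with the previous step to produce the claimed formula.

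The domain claim is a bookkeeping exercise: since $\mu$ and $\mu_{X_k \in A}$ have the same coordinatewise support types (the $k$-th component of $\mu_{X_k \in A}$ is still $\NN$-supported, because $A \subset \NN$), we have $T(R, \mu_{X_k \in A}) = T(R, \mu)$. Because $G$ is defined on the open box $T(R, \mu)$ and $X_k$ is $\NN$-supported, the $x_k$-slice $y \mapsto G(\x[k \mapsto y])$ is given for each fixed $\x$ by an absolutely convergent power series on $(-R, R)$ (the series $\sum_n c_n(\x[k \mapsto 0]) y^n$ with coefficients the $k$-marginal probability masses times monomials in the other variables); hence it is real-analytic and all $\partial_k^i G(\x[k \mapsto 0])$ exist. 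The finite sum $\sum_{i \in A} \frac{\partial_k^i G(\x[k\mapsto 0])}{i!} x_k^i$ is then a polynomial in $x_k$ whose coefficients, viewed as functions of the remaining coordinates, inherit the defining domain from $G$, so $G_{X_k \in A}$ is defined on all of $T(R, \mu)$.

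The main obstacle will be justifying the differentiation under the integral sign used to pass from $G$ to $\partial_k^i G$. I would invoke \cref{lem:interchange} (cited elsewhere in the proof for the Bernoulli compound case) or prove it directly via dominated convergence: for $x_k$ in a slightly smaller closed sub-interval of $(-R, R)$, the integrand $|(\x[k \mapsto 1])^\X \cdot X_k \cdots (X_k - i + 1) x_k^{X_k - i}|$ is dominated by a $\mu$-integrable function obtained from the absolutely convergent majorant at some $x_k' \in (|x_k|, R)$, using the $\NN$-support of $X_k$ to bound the factorial prefactor by a constant multiple of $(x_k'/x_k)^{X_k}$. Once this justification is in place, the rest of the argument is routine algebraic manipulation.
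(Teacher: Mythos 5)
Your proposal is correct and follows essentially the same route as the paper's proof: both decompose over the finite set $A$ into atomic events $[X_k = i]$, both rest on the identity $\frac{\dd^i}{\dd x^i} x^l \big|_{x=0} = [i = l]\cdot i!$ for $l \in \NN$, and both justify the exchange of differentiation and integration by \cref{lem:interchange} (the paper inserts the derivative identity into the integrand and then pulls the derivative outside, whereas you differentiate $G$ under the integral and then evaluate at $0$ — the same computation read in the opposite direction). The domain claim is likewise handled as in the paper, via the guarantee in \cref{lem:interchange} that both sides remain defined on $T(R,\mu)$.
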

\begin{proof}
Since $\mu_{X_k \in A} = \sum_{i \in A} \mu_{X_k = i}$ where we write $\mu_{X_k = i} := \mu_{X_k \in \{i\}}$, we have:
\begin{align*}
\gf{\mu_{X_k = i}}(\x) &= \int \x^\X \D \mu_{X_k = i}(\X) \\
&= \int x_1^{X_1} \cdots x_{k-1}^{X_{k-1}} \cdot x_k^{X_k} \cdot x_{k+1}^{X_{k+1}} \cdots x_n^{X_n} \cdot [X_k = i] \D \mu(\X) \\
&= x_k^i \int x_1^{X_1} \cdots x_{k-1}^{X_{k-1}} \cdot x_{k+1}^{X_{k+1}} \cdots x_n^{X_n} \cdot \left(\frac{1}{i!} \left.\frac{\partial^i x_k^{X_k}}{\partial x_k^i}\right|_{x_k = 0} \right) \D \mu(\X)
\end{align*}
The reason for the last step is that for the function $f(x) = x^l$, we have
\[ f^{(i)}(x) = \begin{cases}
  0 &\text{if } l < i \\
  x^{l-i} \cdot \prod_{j=l-i+1}^l j &\text{if } l \ge i
\end{cases} \]
Since $0^{l-i}$ is 0 for $l > i$ and 1 for $l = i$, we find that $f^{(i)}(0) = [i = l] \cdot i!$.
Above, we used this fact with $l = X_k$ and $x = x_k$.
We continue:
\begin{align*}
\cdots &= x_k^i \int \frac{1}{i!} \left.\left(x_1^{X_1} \cdots x_{k-1}^{X_{k-1}} \cdot \frac{\partial^i x_k^{X_k}}{\partial x_k^i} \cdot x_{k+1}^{X_{k+1}} \cdots x_n^{X_n}\right)\right|_{x_k = 0} \D \mu(\X) \\
&= \left. \int \frac{\partial^i}{\partial x_k^i} \x^\X \D \mu(\X) \right|_{x_k = 0} \cdot \frac{x_k^i}{i!} \\
&= \left. \frac{\partial^i}{\partial x_k^i} \int \x^\X \D \mu(\X) \right|_{x_k = 0} \cdot \frac{x_k^i}{i!} \\
&= \partial_k^i G(\x[k \mapsto 0]) \cdot \frac{x_k^i}{i!}
\end{align*}
as desired.
Note that the integral and differentiation operators can be interchanged by \cref{lem:interchange}.
\end{proof}

\begin{lemma}
\label{lem:interchange}
If the integral $\int \x^\X \D \mu(\X)$ for a measure $\mu$ on $\nnr^n$ is defined for all $\x \in T(R, \mu)$, we have
\[ \partial_{i_1} \cdots \partial_{i_m} \int \x^\X \D \mu(\X) = \int \partial_{i_1} \cdots \partial_{i_m} \x^\X \D \mu(\X) \]
and both sides are defined for all $\x \in T(R, \mu)$.

Furthermore, the right-hand side has the form $\int p(\X) \x^{\X - \w} \D \mu(\X)$ for a polynomial $p$ in $n$ variables and $\w \in \NN^n$, with the property that $p(\X) = 0$ whenever $X_j < w_j$ for some $j$, in which case we define $p(\X) \x^{\X - \w} := 0$, even if $\x^{\X - \w}$ is undefined.
\end{lemma}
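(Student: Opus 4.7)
The plan is to induct on $m$, the number of differentiations. The base case $m = 0$ is immediate with $p \equiv 1$ and $\w = \zero$. For the inductive step, I would apply the induction hypothesis to $\partial_{i_2}\cdots\partial_{i_m}$, putting the integral in the form $\int q(\X)\,\x^{\X - \mathbf{v}}\,\D\mu(\X)$ for a polynomial $q$ and $\mathbf{v}\in\NN^n$ with the claimed vanishing, and then commute the remaining derivative $\partial_{i_1}$ past the integral using the dominated-convergence version of the Leibniz rule.

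The main obstacle is producing an integrable dominating function for the differentiated integrand in some neighborhood of each $\x_0\in T(R,\mu)$. To this end, I would pick a closed box $K = \prod_j[a_j,b_j]\subset T(R,\mu)$ containing $\x_0$ in its interior and set $C := \max_j\max(|a_j|,|b_j|) < R$. A direct estimate gives a bound of the shape
\[ \bigl|\partial_{i_1}\bigl(q(\X)\,\tilde\x^{\X-\mathbf{v}}\bigr)\bigr| \;\le\; P(\X)\cdot C^{X_1+\cdots+X_n} \quad \text{for all } \tilde\x\in K, \]
with $P$ a polynomial in $\X$. Choosing any $C'\in(C,R)$, the point $(C',\dots,C')$ still lies in $T(R,\mu)$, so $\int C'^{X_1+\cdots+X_n}\,\D\mu(\X) = \gf{\mu}(C',\dots,C') < \infty$ by hypothesis. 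Since polynomial growth in $\X$ is swallowed by $(C/C')^{X_1+\cdots+X_n}\to 0$, there is a constant $M$ with $P(\X)\,C^{X_1+\cdots+X_n} \le M\cdot C'^{X_1+\cdots+X_n}$, yielding a $\mu$-integrable dominator. This simultaneously justifies the swap and shows the resulting integral is defined on all of $T(R,\mu)$.

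Differentiating directly gives
\[ \partial_{i_1}\bigl(q(\X)\,\x^{\X - \mathbf{v}}\bigr) \;=\; q(\X)\,(X_{i_1}-v_{i_1})\,\x^{\X - \w}, \qquad \w := \mathbf{v} + \mathbf{e}_{i_1}, \]
so we set $p(\X) := q(\X)\,(X_{i_1}-v_{i_1})$. For the vanishing property, if $X_j<w_j$ for some $j\ne i_1$ then $X_j<v_j$ and $q$ vanishes by the IH; if instead $X_{i_1}<w_{i_1}=v_{i_1}+1$, then either $X_{i_1}<v_{i_1}$ (so $q$ vanishes by IH) or $X_{i_1}=v_{i_1}$ (so the new factor vanishes). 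The convention $p(\X)\,\x^{\X-\w}:=0$ is only needed when some $x_j=0$, which by definition of $T(R,\mu)$ forces the $j$-th component of $\mu$ to be supported on $\NN$; on integer $X_j<w_j$ the falling-factorial factor of $p$ in $X_j$ really does vanish, so the convention is consistent with the pointwise value of $p$ wherever the integrand is $\mu$-nontrivial. Iterating this step gives the desired identity for all $m$.
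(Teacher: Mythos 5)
Your proposal is correct and follows the same overall skeleton as the paper: induction on $m$, peeling off one derivative at a time and justifying the interchange via the Leibniz rule with a dominating function; your identification of $p(\X) = q(\X)(X_{i_1}-v_{i_1})$, $\w = \mathbf{v}+\mathbf{e}_{i_1}$, and the vanishing property matches the paper's inductive bookkeeping. Where you genuinely diverge is in the construction of the dominator, which is the technical heart of the argument (the paper isolates it as a separate lemma). The paper keeps the dominating function in the same shape as the integrand, namely a constant times $|p(\X)|\,(\x[i\mapsto R''])^{\X-\w}$, splitting into cases according to whether the $i$-th component of $\mu$ is supported on $\NN$ (so $x_i$ may be nonpositive) and whether $X_i>w_i$ or $0\le X_i\le w_i$, and then invokes the inductive integrability of that shape at a shifted evaluation point. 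You instead collapse everything to a pure exponential bound $M\cdot(C')^{X_1+\cdots+X_n}$ on a compact box $K\subset T(R,\mu)$, absorbing the polynomial factor into the gap $(C/C')^{\sum_j X_j}\to 0$ and reducing integrability to the single hypothesis $\gf{\mu}(C',\dots,C')<\infty$. This is a more uniform domination that avoids the paper's case split on the sign of $x_i$ (the box automatically supplies $|x_j|\le C<R$, and the ranges $0\le X_j<w_j$ contribute only bounded factors), at the cost of being slightly less self-contained about why the bound $|\tilde x_j|^{X_j-w_j}\le \mathrm{const}\cdot C^{X_j}$ holds in the continuous-component, negative-exponent regime --- you should spell out that there $\tilde x_j\ge a_j>0$ and the exponent lies in the compact set $[-w_j,0]$. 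Your explicit remark that the convention $p(\X)\x^{\X-\w}:=0$ is only ever invoked for $\NN$-supported components, where $X_j$ is an integer $\mu$-almost everywhere and the falling-factorial factor genuinely vanishes, is in fact more careful than the paper's corresponding sentence.
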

\begin{proof}
The proof is by induction on $m$.
If $m = 0$ then the statement is trivial.
For the induction step $m > 0$, we may assume that
\[ \partial_{i_2} \cdots \partial_{i_m} \int \x^\X \D \mu(\X) = \int \partial_{i_2} \cdots \partial_{i_m} \x^\X \D \mu(\X) = \int p(\X) \x^{\X - \w} \D \mu(\X) \]
By \cref{lem:single-interchange}, we reason
\begin{align*}
\partial_{i_1} \cdots \partial_{i_m} \int \x^\X \D \mu(\X) &= \partial_{i_1} \int p(\X) \x^{\X - \w} \D \mu(\X) \\
&= \int (X_{i_1} - w_{i_1}) p(\X) \x^{\X - \w[i_1 \mapsto w_{i_1} + 1]} \D \mu(\X) \\
&= \int \partial_{i_1} \cdots \partial_{i_m} \x^\X \D \mu(\X)
\end{align*}
which establishes the induction goal with the polynomial $\tilde p(\X) := (X_{i_1} - w_{i_1}) p(\X)$ and $\tilde \w := w[i_1 \mapsto w_{i_1} + 1]$.
The property $\tilde p(\X) = 0$ whenever $X_j < \tilde w_j$ for some $j$ still holds due to the added factor $(X_{i_1} - w_{i_1})$ taking care of the case $X_{i_1} = w_{i_1} = \tilde w_{i_1} - 1$.
\end{proof}

\begin{lemma}
\label{lem:single-interchange}
Let $\w \in \NN^n$, $p$ be a polynomial in $n$ variables, and $\mu$ a measure on $\nnr^n$.
We define $p(\X)\x^{\X - \w} := 0$ whenever $p(\X) = 0$, even if $x_j \le 0$ and  $X_j < w_j$ for some $j$.
Suppose $p(\X)\x^{\X - \w}$ is defined $\mu$-almost everywhere and $\mu$-integrable for all $\x \in T(R,\mu)$.
Then
\begin{align*}
\frac{\partial}{\partial x_i} \int p(\X) \x^{\X - \w} \D \mu(\X) &= \int \frac{\partial}{\partial x_i} p(\X) \x^{\X - \w} \D \mu(\X) \\
&= \int (X_i - w_i) p(\X) \x^{\X - \w[i \mapsto w_i + 1]} \D \mu(\X)
\end{align*}
holds and is defined for all $\x \in T(R,\mu)$.
\end{lemma}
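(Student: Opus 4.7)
The result is a differentiation-under-the-integral sign statement, so the natural tool is the dominated convergence theorem applied to difference quotients. I would fix an arbitrary $\x^* \in T(R, \mu)$ and show the identity at $\x^*$, choosing a closed box $K = \prod_j [a_j, b_j] \subset T(R, \mu)$ with $\x^*$ in its interior, so each $a_j > Q_j$ and $b_j < R$.

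The first task is to verify the pointwise derivative formula $\frac{\partial}{\partial x_i}[p(\X)\x^{\X - \w}] = (X_i - w_i)p(\X)\x^{\X - \w[i \mapsto w_i + 1]}$ on the set where $p(\X) \ne 0$. In the $\NN$-supported case the integrand is a Laurent polynomial in $x_i$ (and the convention $p(\X)\x^{\X - \w} := 0$ whenever $p(\X) = 0$ takes care of the Laurent terms that would otherwise be ill-defined at $x_i = 0$). In the continuous case we have $x_i \in [a_i, b_i]$ with $a_i > 0$, so $x_i \mapsto x_i^{X_i - w_i}$ is smooth. Applying the mean value theorem to the difference quotient for $\x, \x + he_i \in K$, we get
\[
\frac{p(\X)(\x + he_i)^{\X - \w} - p(\X)\x^{\X - \w}}{h} = (X_i - w_i)p(\X)\,\xi^{X_i - w_i - 1}\prod_{j \ne i}x_j^{X_j - w_j}
\]
for some $\xi$ between $x_i$ and $x_i + h$, wherever $p(\X) \ne 0$; both sides are $0$ by convention otherwise.

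The crux is constructing a $\mu$-integrable dominator for this difference quotient, uniformly for $\x, \x + he_i \in K$. For each coordinate $j$, let $M_j := \max(|a_j|,|b_j|)$. In the $\NN$-supported case, whenever $p(\X) \ne 0$ and the factor $(X_i - w_i)$ is nonzero one has $X_i \ge w_i + 1$, so $|\xi|^{X_i - w_i - 1} \le M_i^{X_i - w_i - 1}$; in the continuous case, $a_i > 0$ gives $|\xi|^{X_i - w_i - 1} \le \max(a_i^{X_i - w_i - 1}, b_i^{X_i - w_i - 1})$, which can in turn be bounded by $a_i^{-1}\max(a_i^{X_i - w_i}, b_i^{X_i - w_i})$. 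Analogously $|x_j|^{X_j - w_j} \le \max(a_j^{X_j - w_j}, b_j^{X_j - w_j})$ for $j \ne i$. The difference quotient is therefore dominated by $|X_i - w_i|\,|p(\X)|$ times an expression of the form $\prod_j M_j'^{X_j - w_j}$ for constants $M_j' \in [Q_j, R)$. Now pick $\x^+ \in T(R,\mu)$ componentwise strictly larger in absolute value than these $M_j'$. The standard estimate $|X_i - w_i|\,\prod_j M_j'^{X_j} \le C\prod_j |x_j^+|^{X_j}$ (absorbing a polynomial in $\X$ into the gap between $M_j'$ and $|x_j^+|$) yields a dominating function $C\,|p(\X)\,(\x^+)^{\X - \w}|$, which is $\mu$-integrable by the hypothesis of the lemma applied at $\x^+$.

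With this dominator in hand, the dominated convergence theorem lets $h \to 0$ inside the integral and delivers exactly the desired interchange, and the same bound shows that the right-hand side integral is defined for every $\x \in T(R,\mu)$. The main obstacle is the uniformity of the dominator across the discrete regime (where $x_i$ may be negative or zero but $X_i \in \NN$ is forced into the ``safe'' range by $p$ or by the $(X_i - w_i)$ factor) and the continuous regime (where $x_i > 0$ is strict but $X_i$ is a real exponent); the trick of replacing $M_j'$ by a slightly larger $|x_j^+|$ inside $T(R,\mu)$ unifies both cases.
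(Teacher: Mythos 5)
Your proposal is correct and takes essentially the same route as the paper: both reduce to differentiation under the integral sign (you unfold the Leibniz rule into the mean value theorem plus dominated convergence, the paper cites the rule and checks its domination hypothesis), both split on whether the relevant component of $\mu$ is supported on $\NN$ (using that $\{X_i < w_i \wedge p(\X) \neq 0\}$ is $\mu$-null there), and both build the dominating function by the same key trick of absorbing the linear factor $|X_i - w_i|$ into the gap between two radii and invoking the integrability hypothesis at a slightly larger point of $T(R,\mu)$. No gaps; the only differences are organizational.
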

\begin{proof}
The proof is about verifying the conditions of the Leibniz integral rule.
The nontrivial condition is the boundedness of the derivative by an integrable function.
The proof splits into two parts, depending on whether the $i$-th component of $\mu$ is supported on $\NN$.
If it is, then $x_i \in (-R, R)$ by the definition of $T(R, \mu)$ and can be nonpositive, but $X_i - w_i < 0$ and $p(\X) \ne 0$ happens $\mu$-almost never.
Otherwise, we have $x \in (0, R)$ and it is thus guaranteed to be positive, but $X_i - w_i$ may be negative.
Hence the two cases need slightly different treatment.

We first deal with the case that the $i$-th component is not supported on $\NN$.
Let $0 < \epsilon < R' < R'' < R$.
We first prove $\frac{\partial}{\partial x_i} \int p(\X) \x^\X \D \mu(\X) = \int \frac{\partial}{\partial x_i} p(\X) \x^\X \D \mu(\X)$ for any $x_i \in (\epsilon, R')$, using the Leibniz integral rule.
For this purpose, we need to bound the derivative:
\begin{align*}
\left|\frac{\partial}{\partial x_i} p(\X) \x^{\X - \w}\right| &\le |(X_i - w_i) p(\X)| \x^{\X - \w[i \mapsto w_i + 1]} \le \frac{1}{x_i} |X_i - w_i| |p(\X)| \x^{\X - \w} \\
&\le
  \frac{1}{\epsilon} |X_i - w_i| |p(\X)| \x^{\X - \w}
\end{align*}

If $X_i > w_i$, we can choose $M > 1$ sufficiently large such that $|X_i - w_i| x_i^{X_i - w_i} \le (X_i - w_i) R'^{X_i - w_i} \le R''^{X_i - w_i}$ whenever $|X_i - w_i| > M$.
(Such an $M$ can be found because the exponential function $y \mapsto \left(\frac{R''}{R'}\right)^y$ grows more quickly than the linear function $y \mapsto y$.)
If $X_i - w_i \le M$ then we also have $|X_i - w_i| x_i^{X_i - w_i} \le M \cdot R''^{X_i - w_i}$.
So the derivative is thus bounded on the set $\{ \X \mid X_i > w_i \}$ by $\frac{M}{\epsilon} \cdot |p(\X)| (\x[i \mapsto R''])^{\X - \w}$, which is integrable by assumption.

On the complement set $\{ \X \mid 0 \le X_i \le w_i \}$, we find $|X_i - w_i| x_i^{X_i - w_i} \le |X_i - w_i| \epsilon^{X_i - w_i} \le w_i \epsilon^{X_i - w_i}$.
Hence the derivative is bounded by $\frac{w_i}{\epsilon} \cdot |p(\X)| (\x[i \mapsto \epsilon])^{\X - \w}$, which is again integrable by assumption.
So the derivative is bounded by an integrable function and thus, by the Leibniz integral rule, interchanging differentiation and the integral is valid for all $x_i \in (\epsilon, R')$ and thus for all $x_i \in (0, R)$.

Next, consider the case that the $i$-th component of $\mu$ is supported on $\NN$.
Let $0 < R' < R'' < R$.
Note that the set $\{ \X \mid X_i < w_i \land p(\X) \ne 0 \}$ has measure zero because otherwise the integrand $p(\X)\x^{\X - \w}$ would not be defined for $x_i \le 0$ due to the negative exponent $X_i - w_i$.
Hence, for $\mu$-almost all $\X$, we have $p(\X) = [\X \ge \w] p(\X)$ and can bound the partial derivative for $|x_i| < R'$:
\begin{align*}
\left|\frac{\partial}{\partial x_i} p(\X) \x^{\X - \w}\right| &\le |(X_i - w_i) p(\X)| |\x|^{\X - \w[i \mapsto w_i + 1]} \\
&\le [\X \ge \w] |X_i - w_i| |p(\X)| |\x|^{\X - \w[i \mapsto w_i + 1]} \\
&\le [\X \ge \w[i \mapsto w_i + 1]] |X_i - w_i| |p(\X)| |\x|^{\X - \w[i \mapsto w_i + 1]} \\
&\le [\X \ge \w[i \mapsto w_i + 1]] |X_i - w_i| |p(\X)| |\x[i \mapsto R']|^{\X - \w[i \mapsto w_i + 1]} \\
&\le [\X \ge \w] |X_i - w_i| |p(\X)| |\x[i \mapsto R']|^{\X - \w}
\end{align*}
because the factor $|X_i - w_i|$ vanishes for $X_i = w_i$ and thus ensures that negative values for the exponent $\X - \w[i \mapsto w_i + 1]$ don't matter, so that we can use the monotonicity of $|\x|^{\X - \w[i \mapsto w_i + 1]}$ in $|\x|$.

Similarly to the first case, for a sufficiently large $M > \max(1, w_i)$, we have $|X_i - w_i| R'^{X_i - w_i} \le R''^{X_i - w_i}$ whenever $X_i - w_i > M$.
If $X_i - w_i \le M$ then $[\X \ge \w]|X_i - w_i| \le M$ and we also have $[\X \ge \w]|X_i - w_i| R'^{X_i - w_i} \le M \cdot R''^{X_i - w_i}$.
The derivative is thus bounded by $M \cdot |p(\X)| |\x[i \mapsto R'']|^{\X - \w}$, which is integrable by assumption.
By the Leibniz integral rule, interchanging differentiation and the integral is valid for all $x_i \in (-R', R')$ and thus for all $x_i \in (-R, R)$.
\end{proof}

\begin{remark}
As an example of what can go wrong with derivatives at zero if the measure is not supported on $\NN$, consider the Dirac measure $\mu = \Dirac(\frac12)$.
It has the generating function $G(x) := \gf{\mu}(x) = \sqrt{x}$ defined for $x \in \nnr$.
Its derivative is $G'(x) := \frac{1}{2\sqrt{x}}$, which is not defined for $x = 0$.
\end{remark}

\subsection{Possible extensions to the probabilistic programming language}
\label{sec:lang-extensions}

The syntax of our language guarantees that the generating function of the distribution of any probabilistic program admits a closed form.
But are the syntactic restrictions necessary or are there other useful programming constructs that can be supported?
We believe the following constructs preserve closed forms of the generating function and could thus be supported:
\begin{itemize}
  \item \emph{Additional distributions with constant parameters:} any such distribution could be supported as long as its GF is defined on $[0, 1 + \epsilon)$ for some $\epsilon > 0$ and its derivatives can be evaluated.
  \item \emph{Additional compound distributions:} we spent a lot of time trying to find additional compound distributions with a closed-form GF since this would be the most useful for probabilistic models, but we were largely unsuccessful.
  The only such distributions we found are $\Binomial(m, X_k)$, which could just be written as a sum of $m$ iid $\Bernoulli(X_k)$-variables, and $\GammaDist(X_k, \beta)$ with shape parameter $X_k$, which could be translated to a GF with the same idea as for $\Binomial(X_k, p)$, $\NegBinomial(X_k, p)$, and $\Poisson(\lambda \cdot X_k)$.
  \item \emph{Modulo, subtraction, iid sums:} The event $X_k \mod 2 = 0$, and the statements $X_k := \max(X_k - m, 0)$ (a form of subtraction that ensures nonnegativity), and $X_k := \mathsf{sum\_iid}(D, X_l)$ (i.e. $X_k$ is assigned the sum of $X_l$ independent random variables with distribution $D$) can also be supported as shown in \cite{ChenKKW22,KlinkenbergBCK23}.
  \item \emph{Nonlinear functions on variables with finite support}: if a variable has finite support, arbitrary functions on it could be performed by exhaustively testing all values in its domain.
  \item \emph{Soft conditioning:} another supportable construct could be $\Score a^{X_k}$ for $a \in [0, 1]$, which multiplies the likelihood of the current path by $a^{X_k}$.
  We think $\Score q(X_k)$ where $q(t) := \sum_{i=0}^m q_i t^i$ is a polynomial with coefficients $q_i \in [0, 1]$ and with $q \le 1$ on the domain of $X_k$ could also be supportable using similar techniques as for observations from $\Bernoulli(X_k)$.
\end{itemize}
None of these extensions seemed very useful for real-world models, in particular, they are not needed for our benchmarks.
For this reason we did not formalize, implement, or prove them.

Support for loops is another big question.
Bounded loops are easy enough to deal with by fully unrolling them.
However exact inference for programs with unbounded loops seems extremely difficult, unless the inference algorithm is given additional information, e.g. a probabilistic loop invariant as in \cite{ChenKKW22} or at least a loop invariant template with a few holes to be filled with real constants \cite{KlinkenbergBCK23}.
However, finding such a loop invariant for nontrivial programs seems exceedingly difficult.
Furthermore, the variables of looping programs may have infinite expected values, so the generating function may not be definable at the point 1. But in our semantics, evaluating at 1 is very important for marginalization.
For these reasons, we consider exact inference in the presence of loops an interesting but very hard research problem.

\section{Details on implementation and optimizations}
\label{app:implementation}

\paragraph{Overview}
Our tool takes an SGCL program $P$ with $n$ variables as input, and a variable $X_i$ whose posterior distribution is to be computed.
It translates it to the GF $\gfsem{P}(\one)$ according to the GF semantics and normalizes it: $G := \Normalize(\gfsem{P}(\one))$.
This $G$ is the GF of the posterior distribution of the program, from which it extracts the posterior moments of $X_i$ and its probability masses (if $X_i$ is discrete).

\paragraph{Computation of moments}
First, it computes the first four (raw) posterior moments from the factorial moments, which are obtained by differentiation of $G$:
\begin{align*}
M_1 := \ExpVal[X_i] &= \partial_i G(\one_n) \\
M_2 := \ExpVal[X_i^2] &= \ExpVal[(X_i)(X_i - 1)] + \ExpVal[X_i] = \partial_i^2 G(\one_n) + \partial_i G(\one_n) \\
M_3 := \ExpVal[X_i^3] &= \ExpVal[X_i(X_i - 1)(X_i - 2)] + 3\ExpVal[X_i(X_i-1)] + \ExpVal[X_i] \\
&= \partial_i^3 G(\one_n) + 3 \partial_i^2 G(\one_n) + \partial_i G(\one_n) \\
M_4 := \ExpVal[X_i^4] &= \ExpVal[X_i(X_i - 1)(X_i - 2)(X_i - 3)] + 6\ExpVal[X_i(X_i - 1)(X_i - 2)] \\
&\quad + 7\ExpVal[X_i(X_i - 1)] + \ExpVal[X_i] \\
&= \partial_i^4 G(\one_n) + 6 \partial_i^3 G(\one_n) + 7 \partial_i^2 G(\one_n) + \partial_i G(\one_n)
\end{align*}

From the raw moments, it computes the first four (centered/standardized) moments, i.e. the expected value ($\mu := \ExpVal[X_i]$), the variance ($\sigma^2 := \Var[X_i]$), the skewness ($\ExpVal[(X_i - \mu)^3]/\sigma^3$) and the kurtosis ($\ExpVal[(X_i - \mu)^4]/\sigma^4$):
\begin{align*}
\mu := \ExpVal[X_i] &= M_1 \\
\sigma^2 := \Var[X_i] &= \ExpVal[(X_i - \mu)^2] = \ExpVal[X_i^2] - 2\mu \ExpVal[X_i] + \mu^2 = M_2 - \mu^2  \\
\Skew[X_i] &= \frac{\ExpVal[(X_i - \mu)^3]}{\sigma^3} = \frac{\ExpVal[X_i^3] - 3\mu\ExpVal[X_i^2] + 3\mu^2\ExpVal[X_i] - \mu^3}{\sigma^3} \\
&= \frac{M_3 - 3\mu M_2 + 2\mu^3}{\sigma^3} \\
\Kurt[X_i] &= \frac{\ExpVal[(X_i - \mu)^4]}{\sigma^4} = \frac{\ExpVal[X_i^4] - 4\mu\ExpVal[X_i^3] + 6\mu^2\ExpVal[X_i^2] - 4\mu^3\ExpVal[X_i] + \mu^4}{\sigma^4} \\
&= \frac{M_4 - 4\mu M_3 + 6\mu^2 M_2 -3\mu^4}{\sigma^4}
\end{align*}

\paragraph{Computation of probability masses}
If $X_i$ is a discrete variable, the tool computes all the probability masses $\Prob[X_i = m]$ until the value of $m$ such that the tail probabilities are guaranteed to be below a certain threshold, which is set to $\Prob[X_i \ge m] \le \frac{1}{256}$.
This is achieved by setting $m := \mu + 4 \sqrt[4]{\ExpVal[(X_i - \mu)^4]}$ where $\mu := \ExpVal[X_i]$.
Then we find
\[ \Prob[X_i \ge m] \le \Prob\left[|X_i - \mu| \ge 4 \sqrt[4]{\ExpVal[(X_i - \mu)^4]}\right] = \Prob\left[(X - \mu)^4 \ge 256 \ExpVal[(X_i - \mu)^4]\right] \le \frac{1}{256} \]
by Markov's inequality.
Note that in practice, the tail probabilities are typically much smaller than $\frac{1}{256}$, usually in the order of $10^{-5}$.
For all $k \in \{ 0, \dots, m \}$, the tool computes the posterior probability $\Prob[X_i = k]$ by marginalizing out all the other variables (substituting 1 for them) and computing the Taylor coefficient at $x_i = 0$:
\[ \Prob[X_i = k] = \frac{1}{k!} \partial_i^k G(\one_n[i \mapsto 0]). \]
It would be desirable to compute posterior \emph{densities} for continuous distributions as well.
In fact, there are mathematical ways of recovering the probability density function from a generating function via an inverse Laplace transform.
However, this cannot be automated in practice because it requires solving integrals, which is intractable.

\paragraph{Implementation details}
Our tool Genfer is implemented in Rust \cite{MatsakisK14}, a safe systems programming language.
The main reasons were low-level control and performance: the operations on the Taylor polynomials to evaluate derivatives of the generating function need to be fast and are optimized to exploit the structure of GFs arising from probabilistic programs (see \cref{app:time-analysis}).
C or C++ would have satisfied the performance criterion as well, but Rust's language features like memory safety, enums (tagged unions), and pattern matching made the implementation a lot more pleasant, robust, and easier to maintain.
The first author's experience with Rust was another contributing factor.

The coefficients of the Taylor polynomials are stored in a multidimensional array provided by the \texttt{ndarray} library\footnote{\url{https://crates.io/crates/ndarray}}.
Arbitrary-precision floating-point numbers and unbounded rational numbers are provided by the \texttt{rug} library\footnote{\url{https://crates.io/crates/rug}}, which is an interface to the GNU libraries GMP (for rationals) and MPFR (for floats).
Our implementation is available on GitHub: \href{https://github.com/fzaiser/genfer/}{github.com/fzaiser/genfer}

\subsection{Automatic differentiation and Taylor polynomials}
\label{app:taylor}

The main difficulty in implementing the GF semantics is the computation of the (partial) derivatives.
This seems to be a unique situation where we want to compute $d$-th derivatives where $d$ is in the order of hundreds.
The reason is that the $\Observe X_k = d$ construct is translated into a $d$-th (partial) derivative and if $d$ is a real-world observation, it can be large.
We have not come across another application of automatic differentiation that required derivatives of a total order of more than 10.

As a consequence, when we tried \emph{PyTorch}, a mature machine learning library implementing automatic differentiation, the performance for derivatives of high order was very poor.
Therefore, we decided to implement our own automatic differentiation framework.
Our approach is to compute the Taylor expansion in all variables up to some order $d$ of the generating function instead of a symbolic expression of the $d$-th derivative.
This has the advantage of growing polynomially in $d$, not exponentially.
Contrary to \cite{WinnerSS17}, it is advantageous to use Taylor coefficients instead of the partial derivatives because the additional factorial factors are can easily lead to overflows.

\paragraph{Taylor polynomials}
More formally, we define the \emph{Taylor polynomial} $\Taylor_\w^d(G)$ of a function $G : \RR^n \to \RR$ at $\w \in \RR^n$ of order $d$ as the polynomial
\[ \Taylor_\w^d(G) := \sum_{\balpha \in \NN^n: |\balpha| \le d} \frac{1}{\balpha!} \partial^\balpha G(\w) \cdot (\x - \w)^\balpha \]
where we used multi-index notation: $|\balpha|$ stands for $\alpha_1 + \cdots + \alpha_n$; $\balpha!$ for $\alpha_1!\cdots \alpha_n!$; and $\partial^\balpha$ for $\partial_1^{\alpha_1}\dots \partial_n^{\alpha_n}$.
The coefficients $c_\balpha := \frac{1}{\balpha!} \partial^\balpha G(\w)$ are the \emph{Taylor coefficients} of $G$ at $\w$ and are stored in a multidimensional array in our implementation.

\paragraph{Operations}
The operations on generating functions (\cref{app:genfun}) are implemented in terms of their effect on the Taylor expansions.
Let $G, H : \RR^n \to \RR$ be two functions with Taylor expansions $\Taylor_\w^d(G) = \sum_{|\balpha| \le d} g_\balpha (\x - \w)^\balpha$ and $\Taylor_\w^d(H) = \sum_{|\balpha| \le d} h_\balpha (\x - \w)^\balpha$.
Addition $F = G + H$ of two generating functions is implemented by adding the coefficients: $\Taylor_\w^d(G + H) = \sum_{|\balpha| \le d} (g_\balpha + h_\balpha) (\x - \w)^\balpha$.
Scalar multiplication $F = c \cdot G$ is implemented by multiplying the coefficients: $\Taylor_\w^d(c \cdot G) = \sum_{|\balpha| \le d} (c \cdot g_\balpha) (\x - \w)^\balpha$.
Multiplication $F = G \cdot H$ of two generating functions is implemented by the Cauchy product:
\[ \Taylor_\w^d(G \cdot H) = \sum_{|\balpha| \le d} \left( \sum_{\balpha_1 + \balpha_2 = \balpha} g_{\balpha_1} h_{\balpha_2} \right) (\x - \w)^\balpha. \]
Division, exponentiation, logarithms, and powers can be implemented as well (see \cite[Chapter~13]{GriewankW08} for details).
Partial derivatives essentially correspond to shifting the index and a multiplication:
\[ \Taylor_\w^{d-1}(\partial_k G) = \sum_{|\balpha| \le d - 1} (\alpha_k + 1) g_{\balpha[k \mapsto \alpha_k + 1]} \x^\balpha \]
The most complicated case is composition/substitution, i.e. $F(\x) = G(\x[k \mapsto H(\x)])$.
For this, we let $\w' := w[k \mapsto h_\zero]$ where $\Taylor_\w^d(H) = \sum_{|\bbeta| \le d} h_\bbeta (\x - \w)^\bbeta$ and we let $\Taylor_{\w'}^d(G) = \sum_{|\balpha| \le d} g_\balpha (\x - \w')^\balpha$.
Then we have (where ``h.o.t.'' stands for ``higher order terms'')
\begin{align*}
G(\x[k \mapsto H(\x)]) &= G\left(\x\left[k \mapsto \textstyle\sum_{|\bbeta| \le d} h_\bbeta (\x - \w)^\bbeta + \text{h.o.t.}\right]\right) \\
&= \sum_{|\balpha| \le d} g_\balpha \left(\x\left[k \mapsto h_\zero + \textstyle\sum_{1 \le |\bbeta| \le d} h_\balpha (\x - \w)^\bbeta + \text{h.o.t.}\right] - \w' \right)^\balpha + \text{h.o.t.} \\
&= \sum_{|\balpha| \le d} g_\balpha \left((\x-\w)\left[k \mapsto \textstyle\sum_{1 \le |\bbeta| \le d} h_\balpha (\x - \w)^\bbeta \right]\right)^\balpha + \text{h.o.t.}
\end{align*}
because by definition $\x[k \mapsto h_\zero] - w' = (\x - \w)[k \mapsto 0]$.
This means that we can obtain $\Taylor_\w^d(F)$ by substituting $\sum_{1 \le |\bbeta| \le d} h_\balpha (\x - \w)^\bbeta$ for $x_k - w'_k$ in $\Taylor_{\w'}^d(G)$.

\begin{example}[Calculations with Taylor polynomials]
Here we show how the expectation for \cref{ex:illustrative} can be computed using Taylor polynomials.
We use the same naming as in \cref{sec:gf-example}.
Recall that to compute $\ExpVal[X]$, we need to evaluate $\partial_1 E(1,1)$, so we compute the Taylor expansion of $E$ at $(1,1)$ of order 1, i.e. $\Taylor_{(1,1)}^1(E)$.
To compute $E(x,y) = \frac{D(x,y)}{D(1,1)}$, we need $D(x,y)$, so we need $\Taylor_{(1,1)}^1(D)$.
For $D(x,y) = \frac{1}{2!} y^2 \partial_y^2 C(x,0)$, we need the Taylor expansion of $C$ at $(1,0)$ of order $1 + 2 = 3$.
For $C(x,y) = B(x(0.1y+0.9), 1)$, we need the Taylor expansion of $B$ at $(1(0.1\cdot 0 + 0.9), 1) = (0.9, 1)$ of order 3.
Since $B(x,y) = \exp(20(x-1)) = \exp(20((x-0.9) - 0.1)) = \exp(20(x - 0.9) - 2)$, this Taylor expansion is
\[ \Taylor_{(0.9, 1)}^3(B) = e^{-2} + 20e^{-2} \cdot (x - 0.9) + 200e^{-2} \cdot (x - 0.9)^2 + \frac{4000}{3}e^{-2} (x - 0.9)^3. \]
Since $C(1 + (x-1), 0 + y) = B((1 + (x-1))(0.1y+0.9), 1) = B(0.9 + 0.9(x-1) + 0.1y + 0.1(x-1)y, 1)$, the Taylor expansion of $C$ at $(1, 0)$ is given by replacing $(x-0.9)$ with $0.9(x-1) + 0.1y + 0.1(x-1)y$ and we get
\begin{align*}
\Taylor_{(1, 0)}^3(C) &= e^{-2}\big( 1 + 2y + 2y^2 + \frac{4}{3}y^3 + 18(x-1) + 38(x-1)y + 40(x-1)y^2 \\
  &\quad + 162(x-1)^2 + 360(x-1)^2y + 972(x-1)^3 \big)
\end{align*}
For $D(x,y) = \frac{1}{2!} y^2 \partial_y^2 C(x,0)$, we compute the second partial derivative wrt.~$y$, substitute $y = 0$ and then multiply by $y^2 = (1 + (y-1))^2$, yielding
\begin{align*}
D(x,y) &= \frac{1}{2!} y^2 e^{-2} \left.\left( 4 + 8y + 80(x-1) + \text{higher order terms} \right)\right|_{y=0} \\
&= e^{-2} (4 + 80(x-1) + \text{higher order terms}) (1 + (y-1))^2 \\
&= e^{-2} (4 + 80(x-1) + 8(y-1) + \text{higher order terms})
\end{align*}
As a consequence, we find that $D(1,1) = 4e^{-2}$, so $\Taylor_{(1,1)}^1(E) = 1 + 20(x-1) + 2(y-1)$.
So $\ExpVal[X] = \partial_x E(1, 1) = 20$, as desired.
\end{example}

\paragraph{Memoizing intermediate results}
When computing the Taylor expansion of a function $G$, it is important to memoize the intermediate Taylor expansions of subexpressions of $G$ if they occur more than once.
For example, conditionals are translated to a generating function that uses the previous generating function twice.
Evaluating it repeatedly would double the running time for each conditional.
With memoization, it is possible to handle programs with lots of branching (e.g.~$>2^{100}$ paths in the mixture model, cf.~\cref{table:benchmarks}) in a reasonable amount of time.

\subsection{Optimizing Observations from Compound Distributions}
\label{app:optimized-observations}

Observations from compound distributions are generally the bottleneck for the running time because the construct $\Observe d \sim D(X_k)$ is syntactic sugar for $X_{n+1} \sim D(X_k); \Observe X_{n+1} = d$, which introduces a new variable, which is immediately discarded after the observation.
This expansion has the semantics
\begin{align*}
\gfsem{\Observe d \sim D(X_k)}(G)(\x) &= \gfsem{\Observe X_{n+1} = d}(\gfsem{X_{n+1} \sim D(X_k)}(G))(x_1, \dots, x_{n+1}) \\
&= \frac{1}{d!}\frac{\partial^d}{\partial x_{n+1}^d} \gfsem{X_{n+1} \sim D(X_k)}(G)(x_1, \dots, x_{n+1}) \bigg|_{x_{n+1} = 0}
\end{align*}
containing an extra variable $x_{n+1}$, which worsens the running time of the Taylor approach significantly.
We can find more efficient ways of computing this semantics.

\begin{theorem}
\label{thm:binomial-compound-observation}
Observing from compound binomial distributions can be implemented without introducing a new variable:
\[ \gfsem{\Observe d \sim \Binomial(X_k, p)}(G) = \frac{1}{d!} (p x_k)^d \cdot \partial_k^d G(\x[k \mapsto (1 - p) x_k]) \]
\end{theorem}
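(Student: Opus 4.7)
The plan is to reduce the claimed equation to the unoptimized desugaring $\Observe d \sim \Binomial(X_k, p) \leadsto X_{n+1} \sim \Binomial(X_k, p);\, \Observe X_{n+1} = d$, then apply the GF semantics rules for these two constructs (already proved correct in \cref{thm:correctness}) and show that the $d$-th partial derivative in $x_{n+1}$ collapses to the formula on the right-hand side. Marginalizing out the fresh auxiliary variable $X_{n+1}$ at the end (by substituting $x_{n+1} = 1$ in the factor $x_{n+1}^d$ from the observation rule) will yield a function of $\x = (x_1, \dots, x_n)$ alone, matching the statement.

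More concretely, first I would apply the compound-sampling rule from \cref{fig:gf-semantics-full} with the fresh variable $X_{n+1}$ depending on $X_k$ and $\gf{\Binomial(1, p)}(x_{n+1}) = 1 - p + p x_{n+1}$. Since $G$ does not depend on $x_{n+1}$ before the sampling, the substitution $x_{n+1} \mapsto 1$ in that rule is vacuous, giving the intermediate GF
\[ G'(\x, x_{n+1}) \;=\; G\bigl(\x[k \mapsto x_k(1 - p + p x_{n+1})]\bigr). \]
Next, I would apply the observation rule on $X_{n+1} = d$, which produces $\tfrac{1}{d!}\,x_{n+1}^d \cdot \partial_{n+1}^d G'(\x, 0)$. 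After marginalizing out $X_{n+1}$, the factor $x_{n+1}^d$ evaluates to $1$, leaving $\tfrac{1}{d!}\,\partial_{n+1}^d G'(\x, 0)$.

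The main computational step — and the only place where anything could plausibly go wrong — is the iterated chain rule. Writing $\phi(x_{n+1}) := x_k(1 - p + p x_{n+1})$, we have $\phi'(x_{n+1}) = p x_k$, which is constant in $x_{n+1}$. Consequently, higher-order derivatives along $\phi$ do not pick up extra Fa\`a~di~Bruno terms: a short induction on $d$ shows
\[ \partial_{n+1}^d\bigl[\,G(\x[k \mapsto \phi(x_{n+1})])\,\bigr] \;=\; (p x_k)^d \cdot (\partial_k^d G)\bigl(\x[k \mapsto \phi(x_{n+1})]\bigr). \]
Setting $x_{n+1} = 0$ gives $\phi(0) = (1 - p) x_k$, and dividing by $d!$ yields exactly the claimed right-hand side.

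The ``hard part'' is really just this clean behaviour of the chain rule, which hinges on $\phi$ being affine in $x_{n+1}$; the rest is bookkeeping about which variable is marginalized when. One technical point to double-check is that all derivatives and integrals can be interchanged freely — but this is already guaranteed for SGCL programs by \cref{lem:interchange} applied to the distribution underlying $G$, so no additional analytic work is required.
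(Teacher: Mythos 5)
Your proposal is correct and follows essentially the same route as the paper's proof: desugar to $X_{n+1} \sim \Binomial(X_k,p);\,\Observe X_{n+1} = d$, apply the compound-sampling rule to get $G(\x[k \mapsto x_k(1-p+px_{n+1})])$, and iterate the chain rule using the fact that the inner derivative $p x_k$ is constant in $x_{n+1}$, before evaluating at $x_{n+1}=0$. The paper performs exactly this computation (its starred step is your induction), so there is nothing further to add.
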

\begin{proof}
A proof for the discrete setting was given in \cite{WinnerSS17}.
In our setting, which allows continuous distributions, we cannot use the usual argument about power series anymore.
Instead, we reason as follows:
\begin{align*}
&\gfsem{\Observe d \sim \Binomial(X_k, p)}(G) \\
&= \frac{1}{d!}\frac{\partial^d}{\partial x_{n+1}^d} \gfsem{X_{n+1} \sim \Binomial(X_k)}(G)(x_1, \dots, x_{n+1}) \bigg|_{x_{n+1} = 0} \\
&= \frac{1}{d!}\frac{\partial^d}{\partial x_{n+1}^d} G(\x[k \mapsto x_k \cdot (px_{n+1} + 1 -p)]) \bigg|_{x_{n+1} = 0} \\
&\overset{*}{=} \frac{1}{d!} \partial_k^d G(\x[k \mapsto x_k \cdot (p x_{n+1} + 1 -p)]) \cdot (p x_k)^d \bigg|_{x_{n+1} = 0} \\
&= \frac{1}{d!} (p x_k)^d \partial_k^d G(\x[k \mapsto (1 - p)x_k])
\end{align*}
where $*$ follows by iterating the following argument:
\begin{align*}
&\frac{\partial}{\partial x_{n+1}} G(\x[k \mapsto x_k \cdot (px_{n+1} + 1 -p)]) \\
&= \partial_k G(\x[k \mapsto x_k \cdot (px_{n+1} + 1 -p)]) \cdot \frac{\partial (x_k \cdot (px_{n+1} + 1 -p))}{\partial x_{n+1}} \\
&= \partial_k G(\x[k \mapsto x_k \cdot (px_{n+1} + 1 -p)]) \cdot p x_k
\end{align*}
which holds by the chain rule.
\end{proof}

\begin{theorem}
Observing from compound Poisson distributions can be implemented without introducing a new variable:
\[ \gfsem{\Observe d \sim \Poisson(\lambda X_k)}(G) = \frac{1}{d!} D_{k,\lambda}^d(G)(\x[k \mapsto e^{-\lambda} x_k]) \]
where
\[ D_{k, \lambda}(G)(\x) := \lambda x_k \cdot \partial_k G(\x). \]
\end{theorem}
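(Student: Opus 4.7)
My plan is to mimic the structure of the proof of Theorem \ref{thm:binomial-compound-observation}, replacing the linear substitution $x_k \mapsto x_k(p x_{n+1} + 1 - p)$ by the exponential substitution $x_k \mapsto x_k \cdot e^{\lambda(x_{n+1}-1)}$ arising from the Poisson GF. Concretely, I would first unfold $\Observe d \sim \Poisson(\lambda X_k)$ as the sugar $X_{n+1} \sim \Poisson(\lambda X_k); \Observe X_{n+1} = d$ and apply the compound-sampling rule (with $\gf{\Poisson(\lambda \cdot 1)}(x_{n+1}) = e^{\lambda(x_{n+1}-1)}$) together with the observation rule, obtaining
\[
\gfsem{\Observe d \sim \Poisson(\lambda X_k)}(G)(\x) \;=\; \frac{1}{d!}\,\frac{\partial^d}{\partial x_{n+1}^d}\, G\bigl(\x[k \mapsto x_k \cdot e^{\lambda(x_{n+1}-1)}]\bigr)\bigg|_{x_{n+1}=0}.
\]

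The crux of the proof is then a chain-rule lemma that replaces the $x_{n+1}$-derivatives by applications of the operator $D_{k,\lambda}$. Write $u(t) := x_k \cdot e^{\lambda(t-1)}$, which satisfies the key ODE $u'(t) = \lambda\, u(t)$. For any smooth $F$, the ordinary chain rule gives
\[
\frac{d}{dt}\, F\bigl(\x[k \mapsto u(t)]\bigr) \;=\; \partial_k F\bigl(\x[k \mapsto u(t)]\bigr) \cdot u'(t) \;=\; \lambda\, u(t) \cdot \partial_k F\bigl(\x[k \mapsto u(t)]\bigr),
\]
and the right-hand side is exactly $D_{k,\lambda}(F)(\x[k \mapsto u(t)])$, since substituting $u(t)$ for $x_k$ in the defining expression $\lambda x_k \partial_k F$ yields $\lambda u(t) \partial_k F$ evaluated at the substituted point.

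Armed with this identity, a straightforward induction on $d$ shows
\[
\frac{\partial^d}{\partial x_{n+1}^d}\, G\bigl(\x[k \mapsto u(x_{n+1})]\bigr) \;=\; D_{k,\lambda}^d(G)\bigl(\x[k \mapsto u(x_{n+1})]\bigr),
\]
applying the lemma at each step with $F := D_{k,\lambda}^{j-1}(G)$. Finally I set $x_{n+1} = 0$, noting $u(0) = e^{-\lambda} x_k$, and divide by $d!$ to obtain the claim.

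The main obstacle is bookkeeping: one must carefully distinguish between $D_{k,\lambda}$ acting on its argument function \emph{before} substitution (so that the $\lambda x_k$ factor becomes $\lambda u$ under the substitution) versus a naive reading in which the factor $\lambda x_k$ would stay fixed. The reason the proof closes so cleanly is the algebraic coincidence $u'(t) = \lambda u(t)$, which is the Poisson analogue of the constant derivative $u'(t) = p x_k$ that makes the binomial case collapse to a simple $(p x_k)^d$ prefactor; here the non-constant derivative forces the iterated operator $D_{k,\lambda}^d$ rather than a mere power.
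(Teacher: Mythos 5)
Your proposal is correct and follows essentially the same route as the paper: unfold the syntactic sugar, establish the single-derivative identity $\frac{\partial}{\partial x_{n+1}} F(\x[k \mapsto x_k e^{\lambda(x_{n+1}-1)}]) = D_{k,\lambda}(F)(\x[k \mapsto x_k e^{\lambda(x_{n+1}-1)}])$ via the chain rule and the relation $u' = \lambda u$, iterate by induction, then substitute $x_{n+1} = 0$ and divide by $d!$. Your remark about $D_{k,\lambda}$ acting on the function before substitution is exactly the point the paper's computation implicitly relies on, so nothing is missing.
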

\begin{proof}
Let $F: \RR^n \to \RR$ be a smooth function.
Then
\begin{align*}
&\frac{\partial}{\partial x_{n+1}} F(\x[k \mapsto x_k \cdot e^{\lambda(x_{n+1}-1)}]) \\
&= \partial_k F(\x[k \mapsto x_k \cdot e^{\lambda(x_{n+1}-1)}]) \cdot x_k e^{\lambda(x_{n+1}-1)} \cdot \lambda \\
&= D_{k,\lambda}(F)(\x[k \mapsto x_k \cdot e^{\lambda(x_{n+1} - 1)}]).
\end{align*}
Inductively, we get
\[ \frac{\partial^d}{\partial x_{n+1}^d} G(\x[k \mapsto x_k \cdot e^{\lambda(x_{n+1} - 1)}]) = D_{k,\lambda}^d(G)(\x[k \mapsto x_k \cdot e^{\lambda(x_{n+1} - 1)}]). \]
Substituting $x_{n+1} \mapsto 0$ in the final expression and dividing by $d!$ yields:
\begin{align*}
&\gfsem{\Observe d \sim \Poisson(\lambda X_k)}(G) \\
&= \frac{1}{d!}\frac{\partial^d}{\partial x_{n+1}^d} \gfsem{X_{n+1} \sim \Poisson(\lambda X_k)}(G)(x_1, \dots, x_{n+1}) \bigg|_{x_{n+1} = 0} \\
&= \frac{1}{d!}\frac{\partial^d}{\partial x_{n+1}^d} G(\x[k \mapsto x_k \cdot e^{\lambda(x_{n+1} - 1)}]) \bigg|_{x_{n+1} = 0} \\
&= \frac{1}{d!} D_{k,\lambda}^d(G)(\x[k \mapsto e^{-\lambda} x_k])
\end{align*}
\end{proof}

\begin{theorem}
Observing from compound negative binomial distributions can be implemented without introducing a new variable:
\[ \gfsem{\Observe d \sim \NegBinomial(X_k, p)}(G) = \frac{1}{d!} \sum_{i=0}^d \partial_k^i F(\x[k \mapsto p \cdot x_k]) \cdot p^i x_k^i \cdot (1-p)^d L_{d,i} \]
where the numbers $L_{d,i}$ are known as Lah numbers and defined recursively as follows:
\begin{align*}
L_{0,0} &:= 1 \\
L_{d, i} &:= 0 &\text{for } i < 0 \text{ or } d < i \\
L_{d+1,i} &:= (d+i) L_{d,i} + L_{d, i-1} &\text{for } 0 \le i \le d + 1
\end{align*}
\end{theorem}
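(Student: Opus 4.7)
The plan is to follow the same pattern as for the compound binomial and Poisson cases: expand the syntactic sugar for the observation into a $d$-fold differentiation of the sampling GF, then compute that derivative in closed form. By definition,
\[
\gfsem{\Observe d \sim \NegBinomial(X_k,p)}(G)(\x) = \frac{1}{d!}\,\frac{\partial^d}{\partial t^d}\,G\!\left(\x\!\left[k\mapsto x_k\cdot h(t)\right]\right)\bigg|_{t=0},
\]
where $h(t):=\tfrac{p}{1-(1-p)t}$ comes from the table of generating functions applied to $\NegBinomial(1,p)$. Writing $q:=1-p$ and $u(t):=x_k h(t)$, note the key identity $h'(t)=\tfrac{q\,h(t)}{1-qt}$, hence $u'(t)=\tfrac{q}{1-qt}\,u(t)$, and similarly $\tfrac{d}{dt}(1-qt)^{-d}=d\cdot q\,(1-qt)^{-d-1}$.

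The central step I would carry out is to prove, by induction on $d$, the stronger statement
\[
\frac{\partial^d}{\partial t^d} G(\x[k\mapsto u(t)]) \;=\; \frac{q^d}{(1-qt)^d}\sum_{i=0}^{d} \partial_k^{\,i} G(\x[k\mapsto u(t)])\cdot u(t)^i\cdot L_{d,i}.
\]
The base case $d=0$ is immediate from $L_{0,0}=1$. For the inductive step, I differentiate the right-hand side once more with respect to $t$, which produces three kinds of terms: (a) from differentiating $\partial_k^{\,i}G(\x[k\mapsto u])$, using the chain rule and $u'=\tfrac{q}{1-qt}u$, I obtain a copy of $\partial_k^{\,i+1}G\cdot u^{i+1}$ with an extra factor $\tfrac{q}{1-qt}$; (b) from differentiating $u^i$, I obtain $i\,u^i\cdot\tfrac{q}{1-qt}$; (c) from differentiating $(1-qt)^{-d}$, I obtain $d\,(1-qt)^{-d-1}\cdot q$. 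Collecting, each term acquires exactly one extra factor of $\tfrac{q}{1-qt}$, giving the required prefactor $q^{d+1}/(1-qt)^{d+1}$, and after re-indexing the sum in (a) with $j=i+1$, the coefficient of $\partial_k^{\,j}G\cdot u^j$ becomes $L_{d,j-1}+(d+j)L_{d,j}$. By the Lah recurrence this equals $L_{d+1,j}$, closing the induction.

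Evaluating at $t=0$ gives $h(0)=p$, so $u(0)=p\,x_k$ and $(1-qt)^{-d}|_{t=0}=1$, yielding
\[
\frac{\partial^d}{\partial t^d} G(\x[k\mapsto x_k h(t)])\bigg|_{t=0} = q^d\sum_{i=0}^{d} \partial_k^{\,i} G(\x[k\mapsto p\,x_k])\cdot (p\,x_k)^i\cdot L_{d,i},
\]
and dividing by $d!$ produces the claimed formula (with $p^i x_k^i$ factored out of $(px_k)^i$). The only delicate point is the bookkeeping in the inductive step—making sure that the three contributions combine to give precisely the Lah recurrence $L_{d+1,i}=(d+i)L_{d,i}+L_{d,i-1}$, with the correct boundary behaviour at $i=0$ and $i=d+1$ handled by $L_{d,-1}=L_{d,d+1}=0$. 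Everything else (chain rule, derivative of $h$, evaluation at $t=0$) is mechanical.
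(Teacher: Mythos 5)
Your proposal is correct and follows essentially the same route as the paper's proof: both reduce the observation to a $d$-fold derivative of the compound-sampling GF in the auxiliary variable, establish the same closed form by induction using the Lah recurrence $L_{d+1,i}=(d+i)L_{d,i}+L_{d,i-1}$, and then evaluate at $0$. The only cosmetic difference is that you factor the power of $y(t)=\frac{1}{1-(1-p)t}$ as $(1-qt)^{-d}u(t)^i$, so the coefficient $(d+i)$ arises from two product-rule terms ($d$ plus $i$) rather than from differentiating a single factor $y^{d+i}$ as in the paper.
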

\begin{proof}
Let $F: \RR^n \to \RR$ be a smooth function and $y: \RR \to \RR$ given by $y(z) := \frac{1}{1 - (1-p)z}$.
Then $y'(x) = -\frac{1}{(1 - (1-p)x)^2} \cdot (-(1-p)) = (1-p)y(x)^2$ and
\begin{align*}
&\frac{\partial}{\partial x_{n+1}} F(\x[k \mapsto \frac{x_k \cdot p}{1 - (1-p)x_{n+1}}]) \\
&= \partial_k F(\x[k \mapsto x_kp \cdot y(x_{n+1})]) \cdot p x_k \cdot (1-p)y(x_{n+1})^2
\end{align*}
We claim that we can write
\begin{align*}
&\frac{\partial^d}{\partial x_{n+1}^d} F(\x[k \mapsto \frac{p x_k}{1 - (1-p)x_{n+1}}]) \\
&= \sum_{i=0}^d \partial_k^i F(\x[k \mapsto x_k \cdot y(x_{n+1})]) \cdot p^i x_k^i \cdot (1-p)^d L_{d,i} \cdot y(x_{n+1})^{d+i}.
\end{align*}

The claim is proved by induction.
First the case $d = 0$:
\begin{align*}
F(\x[k \mapsto \frac{p x_k}{1 - (1-p)x_{n+1}}]) &= F(\x[k \mapsto \frac{p x_k}{1 - (1-p)x_{n+1}}]) \cdot (1-p)^0 x_k^0 L_{0,0} \cdot y(x_{n+1})^0
\end{align*}
Next, the induction step:
\begin{align*}
&\frac{\partial^{d+1}}{\partial x_{n+1}^{d+1}} F(\x[k \mapsto \frac{p x_k}{1 - (1-p)x_{n+1}}]) \\
&= \frac{\partial}{\partial x_{n+1}} \left( \sum_{i=0}^d \partial_k^i F(\x[k \mapsto p x_k \cdot y(x_{n+1})]) \cdot p^i x_k^i \cdot (1-p)^d L_{d,i} \cdot y(x_{n+1})^{d+i} \right) \\
&= \sum_{i=0}^d \partial_k^{i+1} F(\x[k \mapsto p x_k \cdot y(x_{n+1})]) \cdot p x_k y'(x_{n+1}) \cdot p^i x_k^i (1-p)^d L_{d,i} y(x_{n+1})^{d+i} \\
&\quad + \sum_{i=0}^d \partial_k^i F(\x[k \mapsto p x_k \cdot y(x_{n+1})]) \cdot p^i x_k^i (1-p)^d \cdot L_{d,i}(d+i) y(x_{n+1})^{d+i-1} \cdot y'(x_{n+1}) \\
&= \sum_{i=0}^d \partial_k^{i+1} F(\x[k \mapsto p x_k \cdot y(x_{n+1})]) \cdot p^{i+1} x_k^{i+1} \cdot (1-p)^{d+1} y(x_{n+1})^2 \cdot L_{d,i} y(x_{n+1})^{d+i} \\
&\quad + \sum_{i=0}^d \partial_k^i F(\x[k \mapsto p x_k \cdot y(x_{n+1})]) \cdot p^i x_k^i \cdot (1-p)^{d+1} y(x_{n+1})^2 \cdot (d+i)L_{d,i} y(x_{n+1})^{d+i-1} \\
&= \sum_{i=0}^{d+1} \partial_k^i F(\x[k \mapsto p x_k \cdot y(x_{n+1})]) \cdot p^i x_k^i \cdot (1-p)^{d+1} L_{d,i-1} y(x_{n+1})^{d+i+1} \\
&\quad + \sum_{i=0}^{d+1} \partial_k^i F(\x[k \mapsto p x_k \cdot y(x_{n+1})]) \cdot p^i x_k^i \cdot (1-p)^{d+1} (d+i)L_{d,i} y(x_{n+1})^{d+i+1} \\
&= \sum_{i=0}^{d+1} \partial_k^i F(\x[k \mapsto p x_k \cdot y(x_{n+1})]) \cdot p^i x_k^i \cdot (1-p)^{d+1} ((d+i)L_{d,i} + L_{d,i-1}) y(x_{n+1})^{d+i+1} \\
&= \sum_{i=0}^{d+1} \partial_k^i F(\x[k \mapsto p x_k \cdot y(x_{n+1})]) \cdot p^i x_k^i \cdot (1-p)^{d+1} L_{d+1,i} y(x_{n+1})^{d+i+1}
\end{align*}

Substituting $x_{n+1} \mapsto 0$ in the final expression and dividing by $d!$ yields:
\begin{align*}
&\gfsem{\Observe d \sim \NegBinomial(X_k, p)}(G) \\
&= \frac{1}{d!}\frac{\partial^d}{\partial x_{n+1}^d} \gfsem{X_{n+1} \sim \NegBinomial(X_k, p)}(G)(x_1, \dots, x_{n+1}) \bigg|_{x_{n+1} = 0} \\
&= \frac{1}{d!}\frac{\partial^d}{\partial x_{n+1}^d} G(\x[k \mapsto \frac{x_k \cdot p}{1 - (1-p)x_{n+1}}]) \bigg|_{x_{n+1} = 0} \\
&= \frac{1}{d!} \sum_{i=0}^d \partial_k^i F(\x[k \mapsto p \cdot x_k]) \cdot p^i x_k^i \cdot (1-p)^d L_{d,i}
\end{align*}
\end{proof}

\begin{theorem}
Observing from compound Bernoulli distributions can be implemented without introducing a new variable:
\[ \gfsem{\Observe d \sim \Bernoulli(X_k)}(G) = \begin{cases}
  G(\x) - x_k \partial_j G(\x) & d = 0 \\
  x_k \cdot \partial_j G(\x) & d = 1 \\
  0 & \text{otherwise}
\end{cases} \]
\end{theorem}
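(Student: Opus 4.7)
The plan is to unfold the syntactic sugar $\Observe d \sim \Bernoulli(X_k) \leadsto X_{n+1} \sim \Bernoulli(X_k); \Observe X_{n+1} = d$ (followed by implicit marginalization of the fresh variable $X_{n+1}$), apply the GF rules for sampling and observation already given in \cref{fig:gf-semantics-full}, and then exploit the fact that the intermediate GF is \emph{affine} in $x_{n+1}$, so only the cases $d \in \{0,1\}$ contribute.

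\textbf{Step 1: compute the intermediate GF after sampling.} By the rule for compound Bernoulli sampling,
\[
H(\x, x_{n+1}) \;:=\; \gfsem{X_{n+1} \sim \Bernoulli(X_k)}(G)(\x, x_{n+1}) \;=\; G(\x[n+1 \mapsto 1]) + x_k(x_{n+1} - 1)\,\partial_k G(\x[n+1 \mapsto 1]).
\]
The crucial observation is that $H$ is an affine function of $x_{n+1}$: writing $A(\x) := G(\x[n+1 \mapsto 1]) - x_k\,\partial_k G(\x[n+1 \mapsto 1])$ and $B(\x) := x_k\,\partial_k G(\x[n+1 \mapsto 1])$, we have $H(\x, x_{n+1}) = A(\x) + x_{n+1}\,B(\x)$, and neither $A$ nor $B$ depends on $x_{n+1}$.

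\textbf{Step 2: apply observation and marginalize.} By the observation rule, $\gfsem{\Observe X_{n+1} = d}(H)(\x, x_{n+1}) = \tfrac{1}{d!} \partial_{n+1}^d H(\x, 0)\, x_{n+1}^d$. Since $X_{n+1}$ is fresh and unused afterwards, we marginalize it out by substituting $x_{n+1} := 1$ (as is done implicitly for the analogous compound binomial, Poisson, and negative binomial optimizations earlier in the appendix). This reduces the computation to evaluating $\tfrac{1}{d!}\partial_{n+1}^d H(\x, 0)$.

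\textbf{Step 3: case analysis on $d$.} Because $H$ is affine in $x_{n+1}$, its $d$-th derivative in $x_{n+1}$ vanishes for $d \ge 2$, yielding $0$. For $d = 0$ we get $H(\x, 0) = A(\x) = G(\x) - x_k\,\partial_k G(\x)$, where the $\x[n+1 \mapsto 1]$ updates collapse to $\x$ since $\x$ no longer contains the $x_{n+1}$-component after marginalization. For $d = 1$, $\partial_{n+1} H(\x, 0) = B(\x) = x_k\,\partial_k G(\x)$ by the same reasoning. This matches the three cases in the statement.

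\textbf{Main obstacle.} There is no substantive difficulty: the argument is a direct consequence of affineness of $H$ in $x_{n+1}$, together with correct bookkeeping of the substitution $x_{n+1} \mapsto 1$ that marginalizes out the auxiliary variable. The only mild subtlety is justifying this final marginalization step in the same way as in the compound binomial/Poisson/negative-binomial optimizations, which is already implicit in those theorems and does not require a new idea here.
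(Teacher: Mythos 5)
Your proposal is correct and follows essentially the same route as the paper's proof: expand the syntactic sugar, apply the compound-Bernoulli sampling rule to get a GF that is affine in the auxiliary variable $x_{n+1}$, and then observe that the $d$-th derivative at $x_{n+1}=0$ gives exactly the three cases. You are slightly more explicit than the paper about marginalizing out the fresh variable via $x_{n+1} \mapsto 1$ and about the collapse of the $\x[n+1 \mapsto 1]$ updates, but this is the same argument.
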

\begin{proof}
We argue as follows:
\begin{align*}
&\gfsem{\Observe d \sim \Bernoulli(X_k)}(G)(\x) \\
&= \frac{1}{d!}\frac{\partial^d}{\partial x_{n+1}^d} \gfsem{X_{n+1} \sim \Bernoulli(X_k)}(G)(x_1, \dots, x_{n+1}) \bigg|_{x_{n+1} = 0} \\
&= \frac{1}{d!}\frac{\partial^d}{\partial x_{n+1}^d} \left( G(\x) + x_k(x_{n+1} - 1) \cdot \partial_k G(\x) \right) \bigg|_{x_{n+1} = 0} \\
\end{align*}
If $d = 0$, the right-hand side is $G(\x) - x_k \partial_k G(\x)$.
If $d = 1$, the right-hand side is $x_k \partial_k G(\x)$.
For larger $d \geq 2$, the $d$-th derivative vanishes, which completes the proof.
\end{proof}

\subsection{Analysis of the running time}
\label{app:time-analysis}

If the program makes observations (or compares variables with constants) $d_1, \dots, d_m$, the running time of the program will depend on these constants.
Indeed, each observation of (or comparison of a variable with) $d_i$ requires the computation of the $d_i$-th partial derivative of the generating function.
In our Taylor polynomial approach, this means that the highest order of Taylor polynomials computed will be $d := \sum_{i=1}^m d_i$.
If the program has $n$ variables, storing the coefficients of this polynomial alone requires $O(d^n)$ space.
This can be reduced if the generating function is a low-degree polynomial, in which case we do not store the zero coefficients of the terms of higher order.

Naively, the time complexity of multiplying two such Taylor polynomials is $O(d^{2n})$ and that of composing them is $O(d^{3n})$.
However, we can do better by exploiting the fact that the polynomials are sparse.
In the generating function semantics, we never multiply two polynomials with $n$ variables.
One of them will have at most two variables.
This brings the running time for multiplications down to $O(d^{n+2})$.
Furthermore, the GF semantics has the property that when composing two generating functions, we can substitute variables one by one instead of simultaneously, bringing the running time down to $O(n\cdot d^{2n + 1})$.
Even more, at most one variable is substituted and the substituted terms have at most two variables, which brings the running time down to $O(d^{n+3})$.

\begin{theorem}
Our exact inference method can evaluate the $k$-th derivative of a generating function in $O(s\cdot d^{n + 3})$, and $O(sd^3)$ for $n = 1$, where $s$ is the number of statements, $d$ (for data) is $k$ plus the sum of all values that are observed or compared against in the program, and $n$ is the number of variables of the program.
\end{theorem}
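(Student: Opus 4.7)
The plan is to bound the cost of processing each of the $s$ SGCL statements by $O(d^{n+3})$ (and by $O(d^3)$ when $n=1$), and then sum. The invariant throughout is that every Taylor polynomial maintained by the implementation has total degree at most $d$ in $n$ variables, hence at most $\binom{n+d}{n} = O(d^n)$ coefficients. This invariant holds because the order of differentiation needed for a sequence of observations and conditional splits is bounded by the sum of the indices involved, which is at most $d$ by the definition in the statement.

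I would proceed by case analysis over the GF semantics. The trivial cases $\Skip$, $\Fail$, and sequencing are free. Affine assignments $X_k := \mathbf{a}^\top \X + c$ act by coordinate-wise index shifts and rescalings on the coefficient array, implemented one coordinate at a time for $O(nd^n)$ cost. Conditionals and observations of the form $X_k \in A$ compute $G_{X_k \in A}$ by zeroing out slices indexed by $\alpha_k \notin A$, costing $O(d^n)$. Sampling $X_k \sim D$ from a constant-parameter distribution first marginalises $x_k$ ($O(d^n)$) and then convolves along the $x_k$-fiber with the univariate series $\gf{D}(x_k)$, for a total of $O(d^{n+1})$.

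The crux, and the main obstacle, is the composition appearing in compound distributions $X_k \sim D(X_j)$ and in the optimized compound observations of Appendix E. The semantics substitutes a bivariate expression such as $x_j \cdot \gf{D(1)}(x_k)$ for $x_j$ in the current GF. A naive $n$-dimensional composition of Taylor polynomials would cost $O(d^{3n})$; to avoid that I would (i) precompute the univariate powers $\gf{D(1)}(x_k)^m$ for $m=0,\dots,d$ by iterated truncated Cauchy products, at total cost $O(d^3)$, and (ii) substitute variable-by-variable using a Horner-style expansion in $x_j$, so that each of the $O(d)$ Horner steps multiplies the running $n$-variable polynomial by the bivariate factor $x_j\cdot \gf{D(1)}(x_k)$. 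Because this factor has only $O(d)$ monomials, each step costs at most $O(d^{n+1})$, keeping the whole substitution within $O(d^{n+3})$ (and within $O(d^3)$ when $n=1$, matching the classical bound for univariate polynomial composition). The optimized observations proved in \cref{thm:binomial-compound-observation} and its siblings follow the same pattern, rewriting a $d$-fold derivative as a precomputed univariate object combined with a single substitution, so the same budget applies.

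Finally, conditionals must be memoised so that the common input $G$ and its slice $G_{X_k\in A}$ are shared between the two branches rather than recomputed; under this standard caching, each of the $s$ statements is processed once with a Taylor polynomial of size $O(d^n)$, summing to the claimed $O(sd^{n+3})$ overall (and $O(sd^3)$ for $n=1$).
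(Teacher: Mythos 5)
Your proposal is correct and follows essentially the same route as the paper's proof: bound the stored Taylor coefficients by $O(d^n)$, observe that every multiplication and substitution arising from the GF semantics involves a factor or substituted term in at most two variables, and evaluate the composition Horner-style in $x_j$ so that each statement costs $O(d^{n+3})$ (resp.\ $O(d^3)$ for $n=1$), summing to $O(s\,d^{n+3})$. Your explicit appeal to memoization for conditionals makes precise a point the paper's proof leaves implicit, and your per-step accounting is in fact slightly tighter than the paper's, but the decomposition and the key structural observation are the same.
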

\begin{proof}
The above arguments already analyze the bottlenecks of the algorithm.
But for completeness' sake, we consider all operations being performed on generating functions and their Taylor expansions (cf.~\cref{app:taylor}).

The maximum order of differentiation needed is $d$ and since there are $n$ variables, the maximum number of Taylor coefficients that need to be stored is $O(d^n)$.
Addition of two such Taylor polynomials and scalar multiplications trivially take $O(d^n)$ time.
Since in the GF semantics, a GF is only ever multiplied by one with at most two variables, the multiplications take $O(d^{n+2})$ time.
Division only happens with a divisor of degree at most 1, which can also be implemented in $O(d^{n+1})$ time \cite[Chapter~13]{GriewankW08}.
Exponentiation $\exp(\dots)$, logarithms $\log(\dots)$, and powers $(\dots)^m$ are only performed on polynomials with one variable, where they can be implemented in $O(d^2)$ time \cite[Table 13.2]{GriewankW08}.
Finally, the only substitutions that are required are ones where a term with at most two variables is substituted for a variable, i.e. of the form $p(\x[k \mapsto q(x_1, x_2)])$ where $p(\x) = \sum_{|\balpha| \le d} p_\balpha \x^\balpha$.
Then
\[ p(\x[k \mapsto q(x_1, x_2)]) = p_0(\x) + q(x_1, x_2) \cdot (p_1(\x) + q(x_1, x_2) \cdot (p_2(\x) + \cdots)) \]
where $p_i(\x) = \sum_{\balpha: \alpha_k = i} p_\balpha \x^{\balpha[k \mapsto 0]}$.
This needs $d$ multiplications of $q(x_1, x_2)$ in 2 variables and a polynomial in $n$ variables, each of which takes $O(d^{n+2})$ time.
In total, the substitution can be performed in $O(d^{n+3})$ time.
Finally, if there is only one variable, the substitution can be performed in $O(d^3)$ time because $q$ can only have one variable.

In summary, the input SGCL program has $s$ statements, and each of them is translated to a bounded number of operations on generating functions.
These operations are performed using Taylor expansions of degree at most $d$, each of which takes at most $O(d^{n+3})$ time (and at most $O(d^3)$ time for $n = 1$).
So overall, the running time is $O(s\cdot d^{n + 3})$ (and $O(sd^3)$ for $n = 1$).
\end{proof}

\paragraph{Potential improvements of the running time}
Using various tricks, such as FFT, the running time of the composition of Taylor polynomials, i.e. substitution of polynomials, can be improved to $O((d \log d)^{3n/2})$ \cite{BrentK78}.
This is still exponential in the number of variables $n$, but would improve our asymptotic running time for $n < 6$ program variables.
In fact, we experimented with an implementation of multiplication and composition based on the fast Fourier transform (FFT), but discarded this option because it led to much greater numerical instabilities.
It is possible that the running time could instead be improved with the first algorithm in \cite{BrentK78}, which relies on fast matrix multiplication instead of FFT.
In addition, it should often be possible to exploit independence structure in programs, so the GF can be factored into independent GFs for subsets of the variables.
This is not implemented yet because it was not relevant for our benchmarks.

\paragraph{Performance in practice}
Note that, for many programs, the running time will be better than the worst case.
Indeed, substitutions are the most expensive operation, if the substitute expression is of high degree.
For some models, like the population model, the substitutions take the form of \cref{thm:binomial-compound-observation}, i.e. $G(\x[k \mapsto (1 - p) x_k])$, which is of degree 1.
Such a substitution can be performed in $O(d^n)$ time, because one simply needs to multiply each coefficient by a power of $(1 - p)$.
This explains why the population model is so fast in the experimental section (\cref{sec:eval}).
In other cases as well, we can often make use of the fact that the polynomials have low degree to reduce the space needed to store the coefficients and to speed up the operations.

\subsection{Numerical issues}

If our implementation is used in floating-point mode, there is the possibility of rounding errors and numerical instabilities, which could invalidate the results.
To enable users to detect this, we implemented an option in our tool to use interval arithmetic in the computation: instead of a single (rounded) number, it keeps track of a lower and upper bound on the correctly rounded result.
If this interval is very wide, it indicates numerical instabilities.

Initially, we observed large numerical errors (caused by catastrophic cancellation) for probabilistic programs involving continuous distributions.
The reason for this turned out to be the term $\log(x)$ that occurs in the GF of the continuous distributions.
The Taylor coefficients of $\log(x)$ at points $0 < z < 1$ are not well-behaved:
\[ \log(x) = \log(z) + \sum_{n = 1}^\infty \frac{(-1)^n z^{-n}}{n} (x - z)^n  \]
because the coefficients grow exponentially for $z < 1$ (due to the term $z^{-n}$) and the alternating sign exacerbates this problem by increasing the likelihood of catastrophic cancellations.

To fix this problem, we adopted a slightly modified representation.
Instead of computing the Taylor coefficients of the GF $G$ directly, we compute those of the function $H(\x) := G(\x')$ where $x'_i := x_i$ if $X_i$ is discrete and $x'_i := \exp(x_i)$ if $X_i$ is continuous, thus canceling the logarithm.\footnote{
Note that this boils down to using the moment generating function (MGF) $\ExpVal[\exp(x_iX_i)]$ for continuous variables $X_i$ and the probability generating function (PGF) $\ExpVal[x_i^{X_i}]$ for discrete variables $X_i$.
In mathematics, the MGF is more often used for continuous distributions and the PGF more commonly for discrete ones.
It is interesting to see that from a numerical perspective, this preference is confirmed.
}
(The reason we don't use $\x' := \exp(\x)$ for all variables is that for discrete variables, we may need to evaluate at $x'_i = 0$, corresponding to $x_i = -\infty$, which is impossible.)
It is straightforward to adapt the GF semantics to this modified representation.
Due to the technical nature of the adjustment (case analysis on whether an index corresponds to a discrete variable or not), we do not describe it in detail.
With this modified representation, we avoid the catastrophic cancellations and obtain numerically stable results for programs using continuous distributions as well.

\section{Details on the Empirical Evaluation}
\label{app:evaluation}

All experiments were run on a laptop computer with a Intel® Core™ i5-8250U CPU @ 1.60GHz × 8 processor and 16 GiB of RAM, running Ubuntu 22.04.2 and took a few hours to complete overall.
The code to run the benchmarks and instructions on how to reproduce our experiments are available in the supplementary material.

\subsection{Comparison with exact inference}

We manually patched the tools Dice, Prodigy, and PSI to measure and output the time taken exclusively for inference to ensure a fairer comparison and because the time a tool takes to startup and parsing times are not very interesting from a research perspective.
We ran each tool 5 times in a row and took the minimum running time to reduce the effect of noise (garbage collection, background activity on the computer).
In rational mode, we ran Genfer with the option \verb|--rational| and Dice with the option \verb|-wmc-type 1|.
For the ``clinicalTrial'' benchmark, Genfer required 400-bit floating-point numbers via \verb|--precision 400| because 64-bit numbers did not provide enough accuracy.
For better performance, PSI was run with the flag \verb|--dp| on finite discrete benchmarks.

The benchmarks were taken from the PSI paper \cite[Table 1]{GehrMV16}, excluding ``HIV'', ``LinearRegression1'', ``TrueSkill'', ``AddFun/max'', ``AddFun/sum'', and ``LearningGaussian''.
While Genfer supports continuous priors, the excluded benchmarks use continuous distributions in other ways that are not supported (e.g. in observations).
Note that Dice and Prodigy have no support for continuous distributions at all.
These benchmarks were translated mostly manually to the tools' respective formats, and we checked that the outputs of the tools are consistent with each other.

\subsection{Comparison with approximate inference}

\paragraph{Experimental setup}
As explained in \cref{sec:eval}, we ran several of Anglican's inference algorithms.
Each algorithm was run with a sampling budget of 1000 and 10000.
The precise settings we used are the following:
\begin{itemize}
  \item Importance Sampling (IS): has no settings
  \item Lightweight Metropolis Hastings (LMH): has no settings
  \item Random-Walk Metropolis Hastings (RMH): has two settings: the probability $\alpha$ of using a local MCMC move and the spread $\sigma$ of the local move.
  We used the default $\alpha = 0.5, \sigma = 1$ and another setting facilitating local moves with $\alpha = 0.8, \sigma = 2$.
  \item Sequential Monte Carlo (SMC): number of particles $\in \{ 1, 1000 \}$. The default is 1 and we also picked 1000 to see the effect of a larger number of particles while not increasing the running time too much.
  \item Particle Gibbs (PGibbs): number of particles $\in \{ 2, 1000 \}$. The default is 2 and we also picked 1000 as the other setting for the same reason as for SMC.
  \item Interacting Particle MCMC (IPMCMC): it has the settings ``number of particles per sweep'', ``number of nodes running SMC and CSMC'', ``number of nodes running CSMC'', and ``whether to return all particles (or just one)''.
    We left the last three settings at their defaults (32 nodes, 16 nodes running CSMC, return all particles).
    For the number of nodes and the number of nodes we chose the default setting (2) and a higher one (1000) for the same reason as SMC.
\end{itemize}
A full list of Anglican's inference algorithms and their settings can be found at \url{https://probprog.github.io/anglican/inference}.
Each inference algorithm was run 20 times on each benchmark and the running times were averaged to reduce noise.
This took around an hour per benchmark.

\subsection{Benchmarks with infinite support}
\label{app:benchmarks}

\begin{table}
\footnotesize
\centering
\caption{Summary of the benchmarks: $n$ is the number of variables, $o$ is the number of observations, $d$ is the sum of all the observed values, i.e. the total order of derivatives that needs to be computed, $s$ is the number of statements, and $p$ the number of program paths.}
\label{table:benchmarks}
\begin{tabular}{lrrrrrr}
Benchmark &$n$ &$o$ &$d$ &$s$ &$p$ &continuous priors? \\
\hline
population model (\cref{fig:population-original}) &1 &4 &254 &13 &1 &no \\
population (modified, \cref{fig:population-modified}) &1 &4 &254 &21 &16 &no \\
population (two types, \cref{fig:population-two}) &2 &8 &277 &30 &1 &no \\
switchpoint (\cref{fig:switchpoint-tvd}) &2 &109 &188 &12433 &111 &yes \\
mixture model (\cref{fig:mixture-tvd,fig:mixture-histogram}) &2 &218 &188 &329 &$2^{109} \approx 6\cdot 10^{32}$ &no \\
HMM (\cref{fig:hmm-tvd}) &3 &60 &51 &152 &$2^{30} \approx 10^9$ &no \\
\hline
\end{tabular}
\end{table}

A summary with important information on each benchmark can be found in \cref{table:benchmarks}, including some statistics about the probabilistic programs.
We describe each benchmark in more detail in the following.

\begin{figure}
\footnotesize
\centering
\begin{subfigure}[b]{0.32\textwidth}
\begin{align*}
&N := \Poisson(\lambda_0); \\
&\\
&N \sim \Binomial(N, \delta); \\
&\mathit{New} \sim \Poisson(\lambda_1); \\
&N := \mathit{New} + N; \\
&\Observe y_1 \sim \Binomial(N, \rho); \\
&\\
&\vdots
\end{align*}
\caption{Original model from \cite{WinnerS16}.}
\label{fig:population-code-original}
\end{subfigure}
\hfill
\begin{subfigure}[b]{0.32\textwidth}
\begin{align*}
&N := \Poisson(\lambda_0); \\
&\\
&N \sim \Binomial(N, \delta); \\
&\mathit{Disaster} \sim \Bernoulli(0.1);\\
&\Ite{\mathit{Disaster} = 1}{\\
&\quad N \plussample \Poisson(\lambda'_1) \\
&}{\\
&\quad N \plussample \Poisson(\lambda_1)\\
&} \\
&\Observe y_1 \sim \Binomial(N, \rho); \\
&\\
&\vdots
\end{align*}
\caption{Randomly modified arrival rate.}
\label{fig:population-code-modified}
\end{subfigure}
\hfill
\begin{subfigure}[b]{0.32\textwidth}
\begin{align*}
&N_1 \sim \Poisson(\lambda_0^{(1)}); \\
&N_2 \sim \Poisson(\lambda_0^{(2)}); \\
&\\
&N_2 \plussample \Binomial(N_1, \gamma); \\
&N_1 \sim \Binomial(N_1, \delta_1); \\
&N_2 \sim \Binomial(N_2, \delta_2); \\
&N_1 \plussample \Poisson(\lambda_1^1); \\
&N_2 \plussample \Poisson(\lambda_1^2); \\
&\Observe y_1^{(1)} \sim \Binomial(N_1, \rho); \\
&\Observe y_1^{(2)} \sim \Binomial(N_2, \rho); \\
&\\
&\vdots
\end{align*}
\caption{Two interacting populations.}
\label{fig:population-code-two-types}
\end{subfigure}
\caption{The program code for the population model variations.}
\label{fig:population-code}
\end{figure}

\paragraph{Population models}
The probabilistic program for the original population model from \cite{WinnerS16} is shown in \cref{fig:population-code-original}.
This program can also be written without the extra variable $\mathit{New}$ by writing the two statements involving it as $N \plussample \Poisson(\lambda_1)$.
Here we used the same parameter values as \cite{WinnerS16}: $\delta = 0.2636$, $\lambda = \Lambda \cdot (0.0257, 0.1163, 0.2104, 0.1504, 0.0428)$, and $\Lambda = 2000$ is the population size parameter.
Note that the largest population size considered by \cite{WinnerS16} is $\Lambda = 500$.
For the observation rate $\rho$, \citet{WinnerS16} consider values from $0.05$ to $0.95$.
For simplicity, we set it to $\rho = 0.2$.
The $y_k$ are $k = 4$ simulated data points: $(45, 98, 73, 38)$ for $\Lambda = 2000$.

The modified population example is shown in \cref{fig:population-code-modified}, where the arrival rate is reduced to $\lambda' := 0.1\lambda$ in the case of a natural disaster, which happens with probability 0.1.
The rest of the model is the same.

The model of two interacting populations (multitype branching process) is programmed as shown in \cref{fig:population-code-two-types}, where $\lambda^{(1)} := 0.9 \lambda$, $\lambda^{(2)} := 0.1 \lambda$, $\gamma := 0.1$.
In this model, some individuals of the first kind (i.e. in $N_1$) can turn into individuals of the second kind and are added to $N_2$.
The other parameters of this model are the same as before.
The data points are $y^{(1)} = (35, 83, 78, 58)$ and $y^{(2)} = (3, 6, 10, 4)$.

\begin{figure}
\footnotesize
\centering
\begin{subfigure}[c]{0.49\textwidth}
\begin{align*}
&T \sim \Uniform\{1 .. n\}; \\
&\Lambda_1 \sim \Exponential(1); \\
&\Lambda_2 \sim \Exponential(1); \\
&\Ite{1 < T}{ \\
&\quad \Observe y_1 \sim \Poisson(\Lambda_1); \\
&}{ \\
&\quad \Observe y_1 \sim \Poisson(\Lambda_2); \\
&} \\
&\vdots \\
&\Ite{n < T}{ \\
&\quad \Observe y_n \sim \Poisson(\Lambda_1); \\
&}{ \\
&\quad \Observe y_n \sim \Poisson(\Lambda_2); \\
&}
\end{align*}
\caption{Switchpoint model from \cite{SalvatierWF16}.}
\label{fig:switchpoint-code-simple}
\end{subfigure}
\hfill
\begin{subfigure}[c]{0.49\textwidth}
\begin{align*}
&T \sim \Uniform\{1..n\}; \\
&\Lambda \sim \Exponential(1); \\
&\Ite{1 \sim \Bernoulli(1/n)}{ \\
&\qquad \Observe y_1 \sim \Poisson(\Lambda); \\
&\qquad \Lambda \sim \Exponential(1); \\
&\qquad \Observe y_2 \sim \Poisson(\Lambda); \\
&\qquad \vdots \\
&\qquad \Observe y_n \sim \Poisson(\Lambda); \\
&\qquad T := 1; \\
&}{\Ite{1 \sim \Bernoulli(1/(n - 1))}{ \\
&\qquad \Observe y_1 \sim \Poisson(\Lambda); \\
&\qquad \Observe y_2 \sim \Poisson(\Lambda); \\
&\qquad \Lambda \sim \Exponential(1); \\
&\qquad \Observe y_3 \sim \Poisson(\Lambda); \\
&\qquad \vdots \\
&\qquad \Observe y_n \sim \Poisson(\Lambda); \\
&\qquad T := 2; \\
&}{& \\
&\qquad \ddots \\
&}}
\end{align*}
\caption{More efficient version of the switchpoint model.}
\label{fig:switchpoint-code-efficient}
\end{subfigure}
\caption{The program code for the Bayesian switchpoint analysis.}
\label{fig:switchpoint-code}
\end{figure}

\paragraph{Bayesian switchpoint analysis}
Bayesian switchpoint analysis is about detecting a change in the frequency of certain events over time.
An example is the frequency of coal mining disasters in the United States from 1851 to 1962, as discussed in the PyMC3 tutorial \cite{SalvatierWF16}.
We use the same model as in \cite{SalvatierWF16}, and its probabilistic program is shown in \cref{fig:switchpoint-code-simple}.
This situation can be modeled with two Poisson distributions with parameters $\Lambda_1$ (before the change) and $\Lambda_2$ (after the change).
Suppose we have observations $y_1, \dots, y_n$, with $y_t \sim \Poisson(\Lambda_1)$ if $t < T$ and $y_t \sim \Poisson(\Lambda_2)$ if $t \ge T$.
The parameters $\Lambda_1, \Lambda_2$ are given exponential priors $\Exponential(1)$.
The change point $T$ itself is assumed to be uniformly distributed: $T \sim \Uniform\{1..n\}$.
The probabilistic program code is shown in \cref{fig:switchpoint-code-simple}, where $n = 111$ and $y = ($%
  4, 5, 4, 0, 1, 4, 3, 4, 0, 6, 3, 3, 4, 0, 2, 6, 3, 3, 5, 4, 5, 3, 1, 4, 4, 1, 5, 5, 3, 4,
  2, 5, 2, 2, 3, 4, 2, 1, 3, n/a, 2, 1, 1, 1, 1, 3, 0, 0, 1, 0, 1, 1, 0, 0, 3, 1, 0, 3, 2, 2,
  0, 1, 1, 1, 0, 1, 0, 1, 0, 0, 0, 2, 1, 0, 0, 0, 1, 1, 0, 2, 3, 3, 1, n/a, 2, 1, 1, 1, 1, 2,
  4, 2, 0, 0, 1, 4, 0, 0, 0, 1, 0, 0, 0, 0, 0, 1, 0, 0, 1, 0, 1%
$)$ where ``n/a'' indicates missing data (which is omitted in the program).

In fact, this program can be rewritten to the more efficient form shown in \cref{fig:switchpoint-code-efficient}, to minimize the number of variables needed.

\begin{figure}
\footnotesize
\centering
\begin{subfigure}[c]{0.49\textwidth}
\begin{align*}
&\Lambda_1 \sim \Geometric(0.1); \\
&\Lambda_2 \sim \Geometric(0.1); \\
&\Ite{1 \sim \Bernoulli(0.5)}{ \\
&\quad \Observe y_1 \sim \Poisson(0.1 \cdot \Lambda_1)\\
&}{ \\
&\quad \Observe y_1 \sim \Poisson(0.1 \cdot \Lambda_2) \\
&} \\
&\vdots \\
&\Ite{1 \sim \Bernoulli(0.5)}{ \\
&\quad \Observe y_m \sim \Poisson(0.1 \cdot \Lambda_1)\\
&}{ \\
&\quad \Observe y_m \sim \Poisson(0.1 \cdot \Lambda_2) \\
&}
\end{align*}
\caption{Program code for the mixture model.}
\label{fig:mixture-code}
\end{subfigure}
\hfill
\begin{subfigure}[c]{0.49\textwidth}
\begin{align*}
&Z := 1; \\
&\Lambda_1 \sim \Geometric(0.1); \\
&\Lambda_2 \sim \Geometric(0.1); \\
&\Ite{Z = 0}{ \\
&\qquad \Observe y_1 \sim \Poisson(0.1 \cdot \Lambda_1); \\
&\qquad Z \sim \Bernoulli(0.2); \\
& }{ \\
&\qquad \Observe y_1 \sim \Poisson(0.1 \cdot \Lambda_2); \\
&\qquad Z \sim \Bernoulli(0.8); } \\
&\vdots \\
&\Ite{Z = 0}{ \\
&\qquad \Observe y_m \sim \Poisson(0.1 \cdot \Lambda_1);\\
&\qquad Z \sim \Bernoulli(0.2); \\
& }{ \\
&\qquad \Observe y_m \sim \Poisson(0.1 \cdot \Lambda_2);\\
&\qquad Z \sim \Bernoulli(0.8); }
\end{align*}
\caption{Program code for the HMM.}
\label{fig:hmm-code}
\end{subfigure}
\caption{The program code for the mixture and HMM model.}
\label{fig:mixture-hmm-code}
\end{figure}

\paragraph{Mixture model}
In the binary mixture model, the frequency events are observed from an equal-weight mixture of two Poisson distributions with different parameters $\Lambda_1, \Lambda_2$, which are in discrete steps of 0.1.
This is implemented by imposing a geometric prior and multiplying the rate by 0.1 in the Poisson distribution: $\Poisson(0.1 \cdot \Lambda_1)$.
The data is the same as for the switchpoint model.
The task is to infer the first rate $\Lambda_1$.
The probabilistic program is shown in \cref{fig:mixture-code}.

\paragraph{Hidden Markov model}
The hidden Markov example is based on \cite[Section 2.2]{SaadRM21}.
The hidden state $Z$ can be 0 or 1 and transitions to the other state with probability 0.2.
The rate of the Poisson-distributed number of observed events depends on the state $Z$ and is $0.1 \cdot \Lambda_1$ or $0.1 \cdot \Lambda_2$, where we impose a geometric prior on $\Lambda_1$ and $\Lambda_2$.
As for the mixture model, this discretizes the rates in steps of 0.1.
The inference problem is to infer the first rate $\Lambda_1$.
The probabilistic program is shown in \cref{fig:hmm-code}, where $y_k$ are $m = 30$ simulated data points from the true rates $\lambda_1 := 0.5$ and $\lambda_2 := 2.5$.
Concretely, we have $y = ($2, 2, 4, 0, 0, 0, 0, 0, 1, 1, 0, 2, 4, 3, 3, 5, 1, 2, 3, 1, 3, 3, 0, 0, 2, 0, 0, 2, 6, 1$)$.

\subsection{Additional results}

In the main text, the data we presented was mainly in form of the total variation distance (TVD) between the true posterior (computed by our method) and the approximated posterior (computed by Anglican's inference algorithms).
In this section, we present additional evidence in two forms:
\begin{itemize}
\item errors of the posterior moments (mean, standard deviation, skewness, and kurtosis) in \cref{fig:population-original-moments,fig:population-modified-moments,fig:population-two-moments,fig:switchpoint-moments,fig:mixture-moments,fig:hmm-moments}, and
\item histograms (\cref{fig:histograms-approx-exact}) of the sampled distribution produced by the best MCMC algorithm, where the sampling budget is picked such that the running time is close to that of our method (e.g. 1000 for the (fast) population model and 10000 for the (slower) mixture model).
\end{itemize}
The additional plots confirm the conclusions from \cref{sec:eval}.

\begin{figure}
\begin{subfigure}[t]{0.24\textwidth}
\centering
\includegraphics[width=\textwidth]{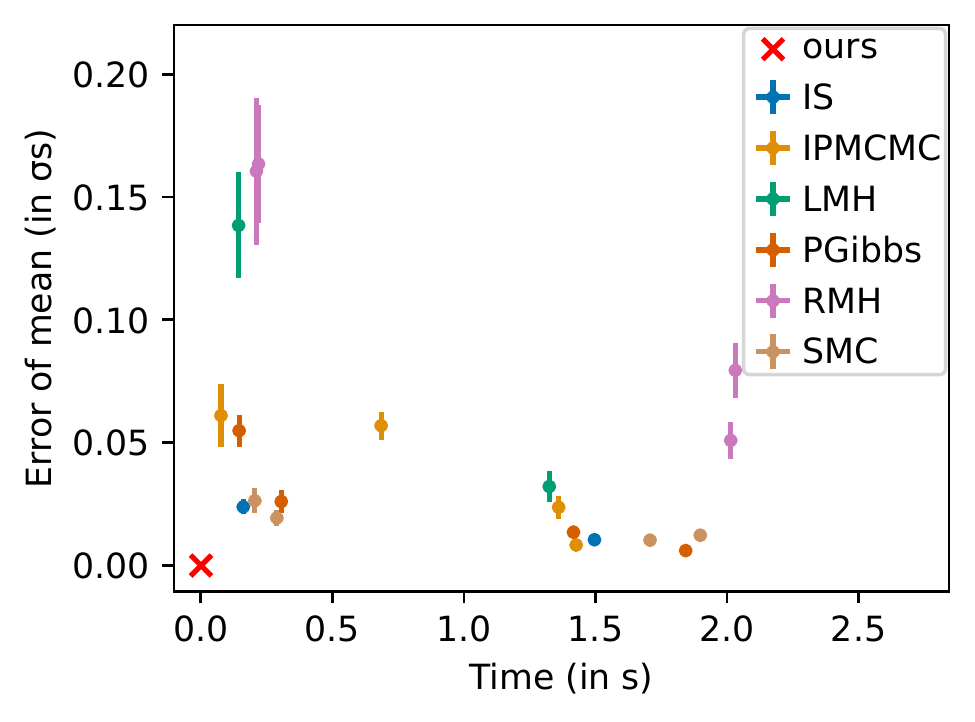}
\caption{Error of the mean}
\end{subfigure}
\hfill
\begin{subfigure}[t]{0.24\textwidth}
\centering
\includegraphics[width=\textwidth]{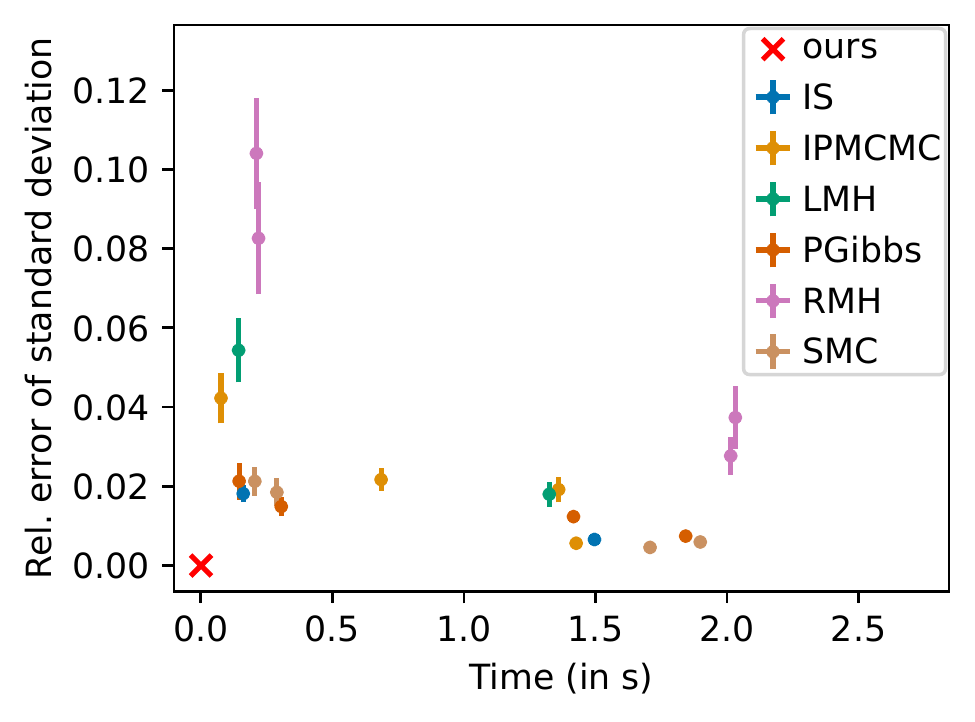}
\caption{Error of the standard deviation}
\end{subfigure}
\hfill
\begin{subfigure}[t]{0.24\textwidth}
\centering
\includegraphics[width=\textwidth]{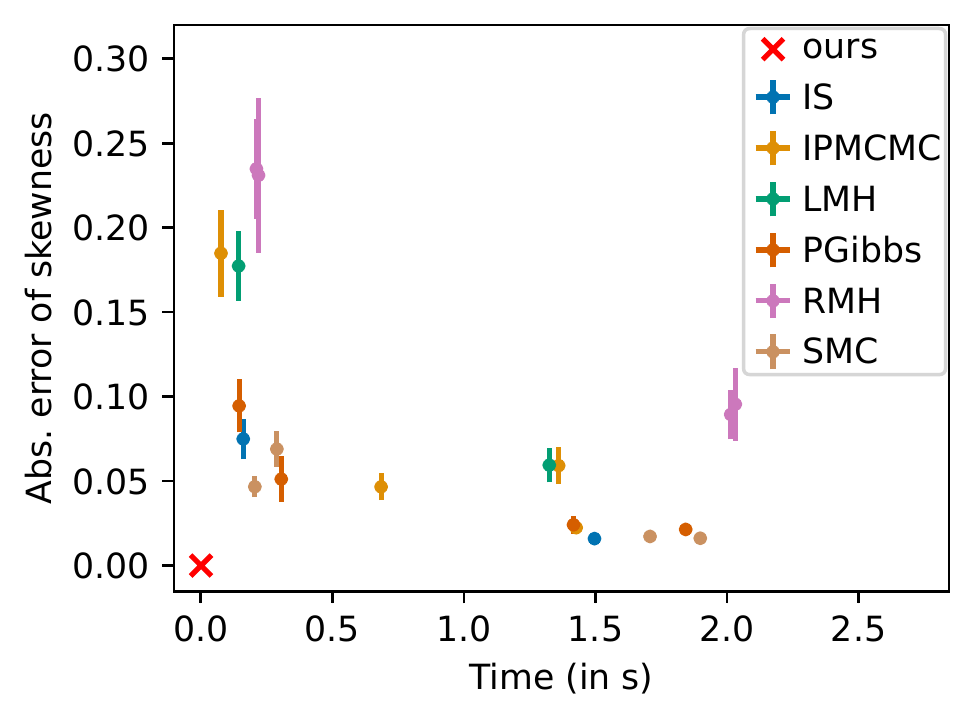}
\caption{Error of the skewness}
\end{subfigure}
\hfill
\begin{subfigure}[t]{0.24\textwidth}
\centering
\includegraphics[width=\textwidth]{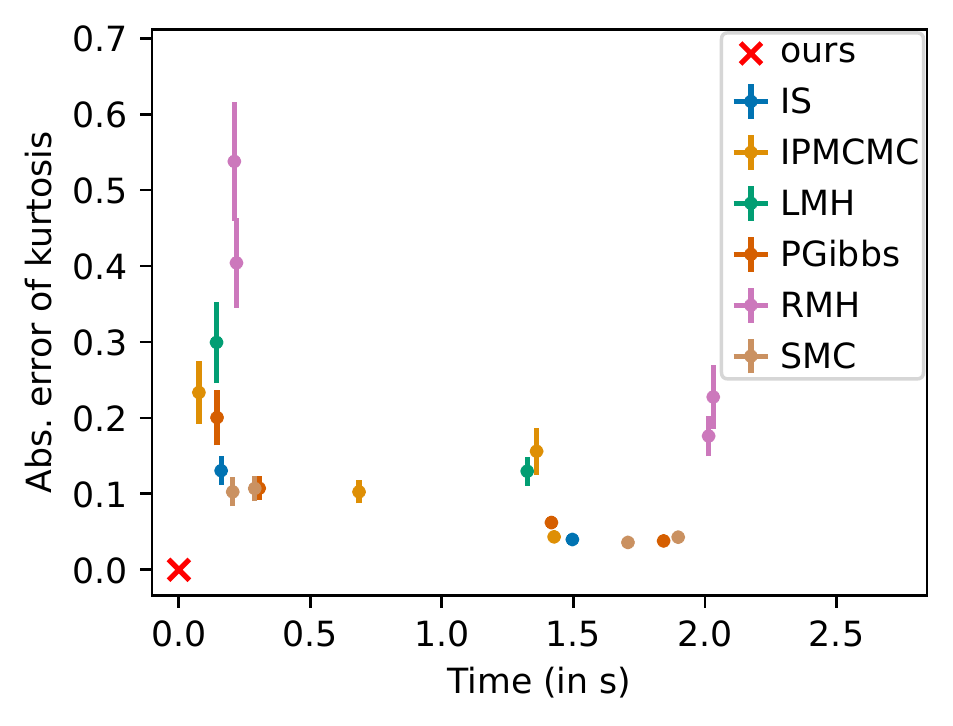}
\caption{Error of the kurtosis}
\end{subfigure}
\caption{Comparison of moments for the original population model.}
\label{fig:population-original-moments}
\vspace{-1em}
\end{figure}

\begin{figure}
\begin{subfigure}[t]{0.24\textwidth}
\centering
\includegraphics[width=\textwidth]{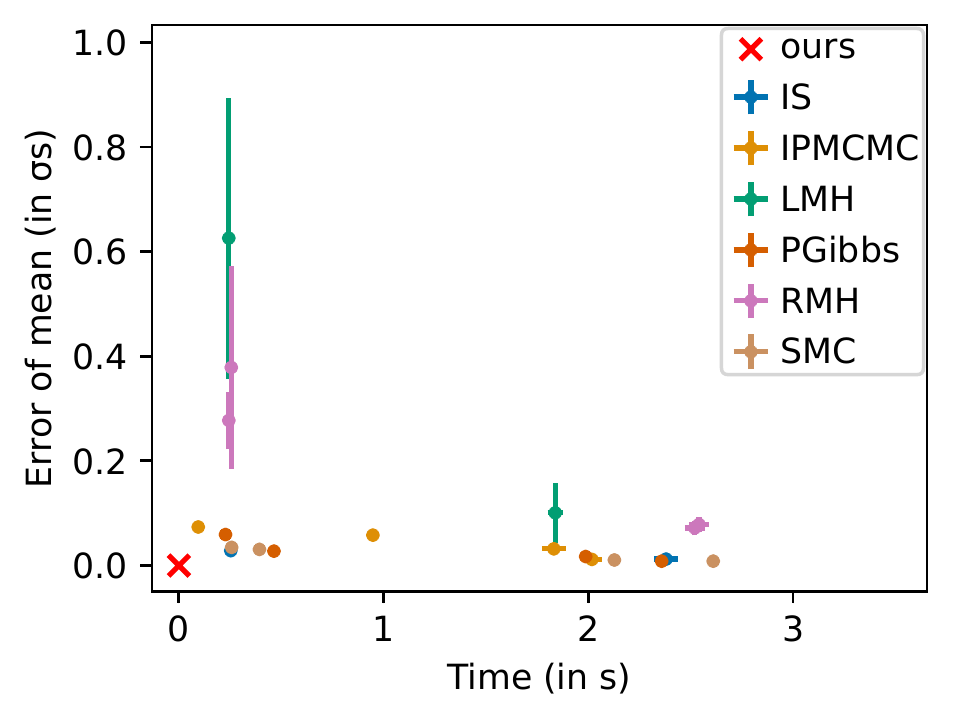}
\caption{Error of the mean}
\end{subfigure}
\hfill
\begin{subfigure}[t]{0.24\textwidth}
\centering
\includegraphics[width=\textwidth]{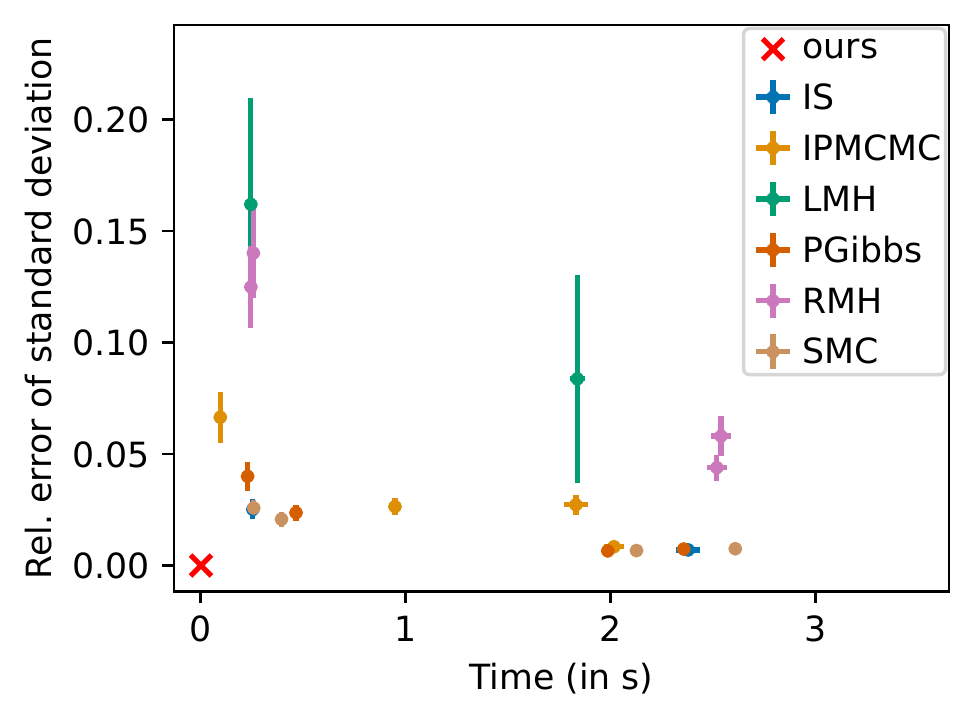}
\caption{Error of the standard deviation}
\end{subfigure}
\hfill
\begin{subfigure}[t]{0.24\textwidth}
\centering
\includegraphics[width=\textwidth]{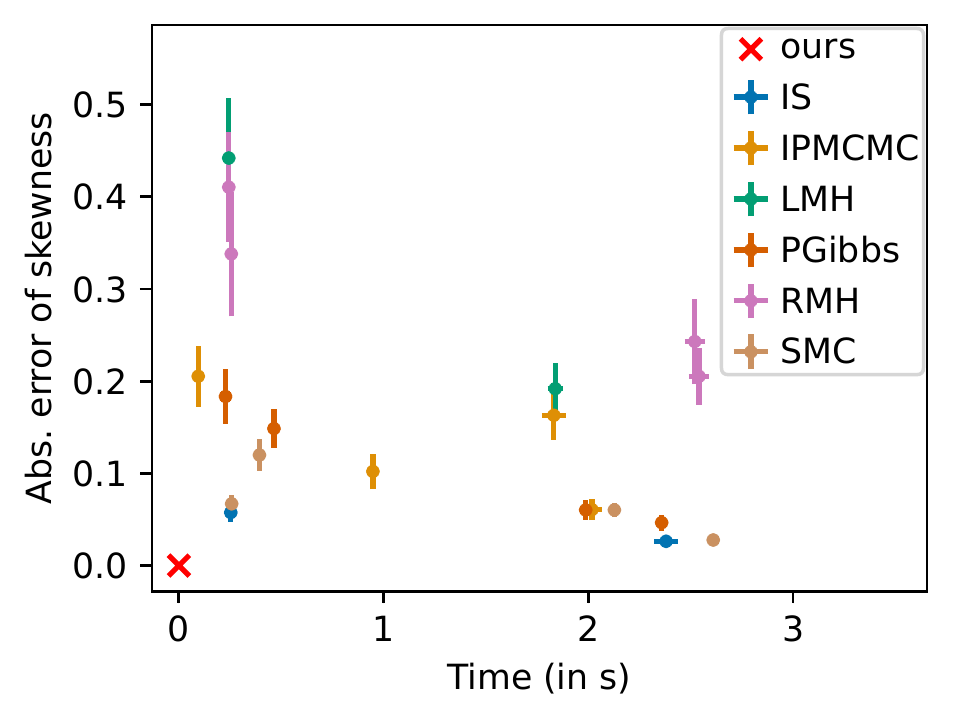}
\caption{Error of the skewness}
\end{subfigure}
\hfill
\begin{subfigure}[t]{0.24\textwidth}
\centering
\includegraphics[width=\textwidth]{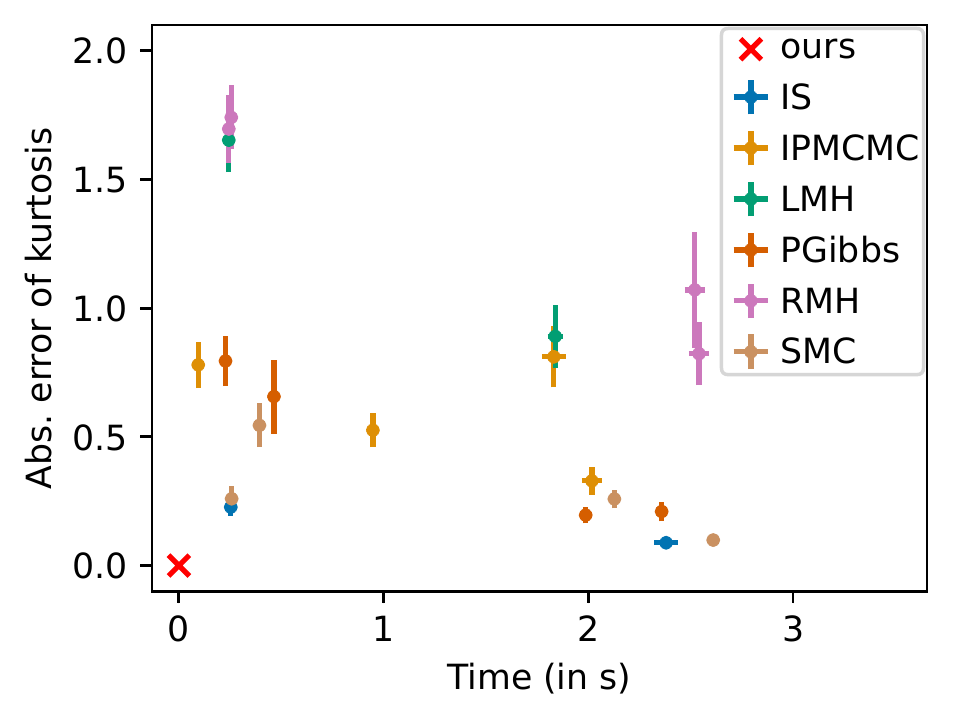}
\caption{Error of the kurtosis}
\end{subfigure}
\caption{Comparison of moments for the modified population model.}
\label{fig:population-modified-moments}
\vspace{-1em}
\end{figure}

\begin{figure}
\begin{subfigure}[t]{0.24\textwidth}
\centering
\includegraphics[width=\textwidth]{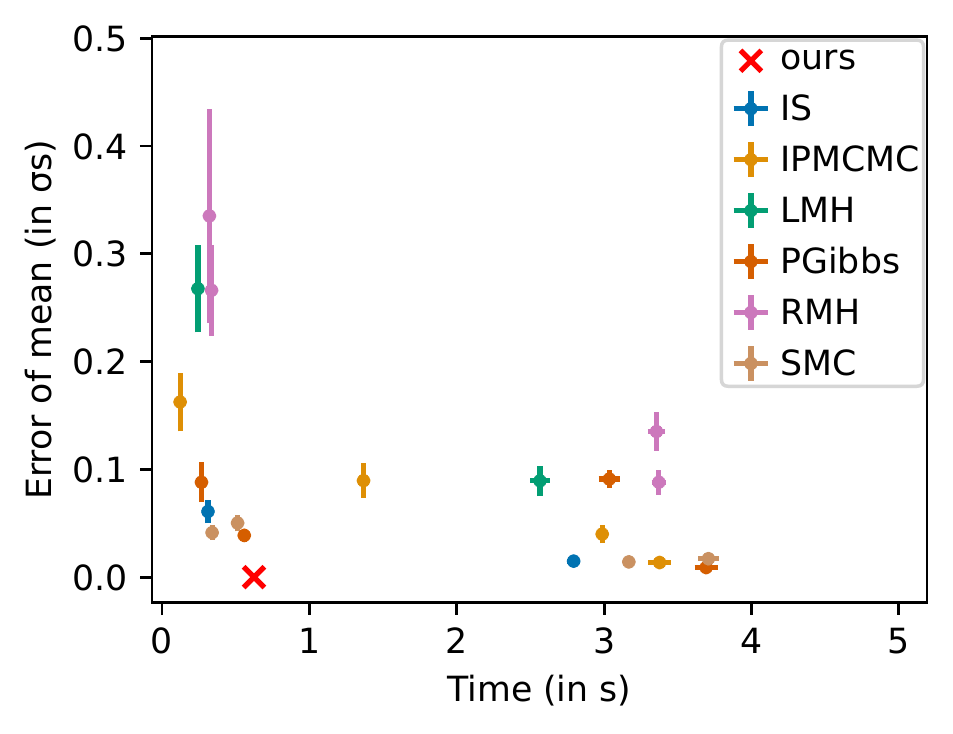}
\caption{Error of the mean}
\end{subfigure}
\hfill
\begin{subfigure}[t]{0.24\textwidth}
\centering
\includegraphics[width=\textwidth]{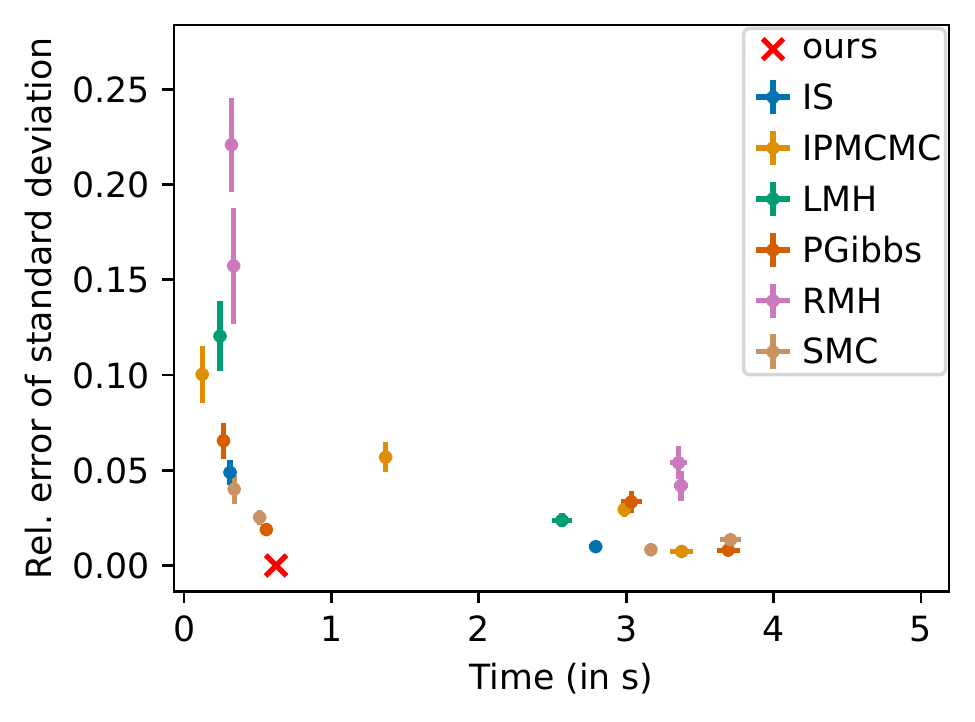}
\caption{Error of the standard deviation}
\end{subfigure}
\hfill
\begin{subfigure}[t]{0.24\textwidth}
\centering
\includegraphics[width=\textwidth]{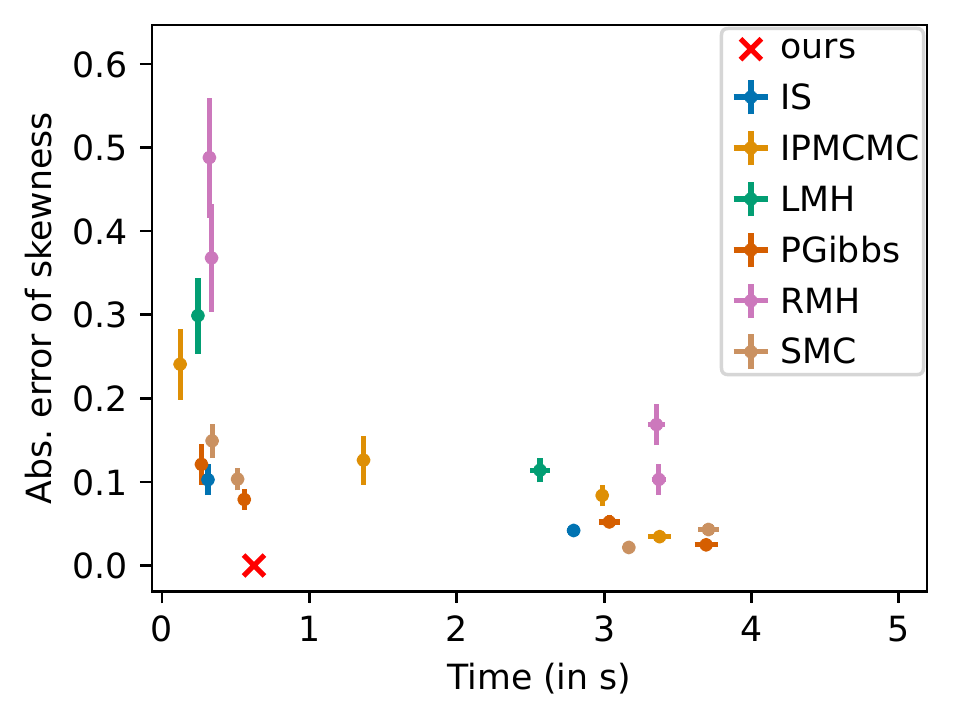}
\caption{Error of the skewness}
\end{subfigure}
\hfill
\begin{subfigure}[t]{0.24\textwidth}
\centering
\includegraphics[width=\textwidth]{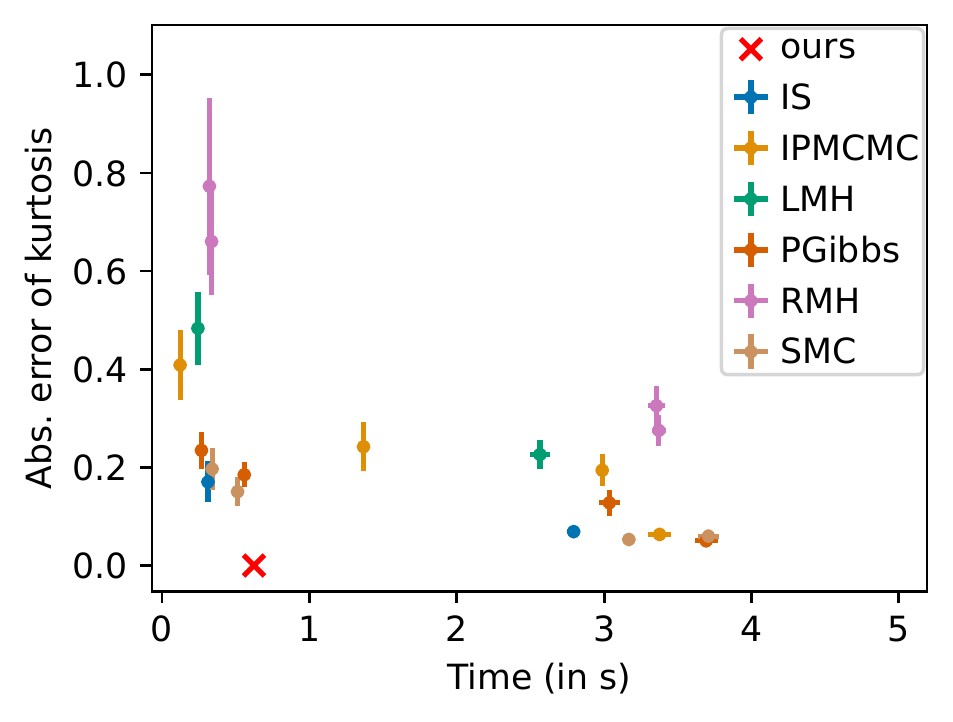}
\caption{Error of the kurtosis}
\end{subfigure}
\caption{Comparison of moments for the two-type population model.}
\label{fig:population-two-moments}
\vspace{-1em}
\end{figure}

\begin{figure}
\begin{subfigure}[t]{0.24\textwidth}
\centering
\includegraphics[width=\textwidth]{plots/cont_switchpoint_mean_error.pdf}
\caption{Error of the mean}
\end{subfigure}
\hfill
\begin{subfigure}[t]{0.24\textwidth}
\centering
\includegraphics[width=\textwidth]{plots/cont_switchpoint_stddev_error.pdf}
\caption{Error of the standard deviation}
\end{subfigure}
\hfill
\begin{subfigure}[t]{0.24\textwidth}
\centering
\includegraphics[width=\textwidth]{plots/cont_switchpoint_skewness_error.pdf}
\caption{Error of the skewness}
\end{subfigure}
\hfill
\begin{subfigure}[t]{0.24\textwidth}
\centering
\includegraphics[width=\textwidth]{plots/cont_switchpoint_kurtosis_error.pdf}
\caption{Error of the kurtosis}
\end{subfigure}
\caption{Comparison of moments for the switchpoint model.}
\label{fig:switchpoint-moments}
\vspace{-1em}
\end{figure}

\begin{figure}
\begin{subfigure}[t]{0.24\textwidth}
\centering
\includegraphics[width=\textwidth]{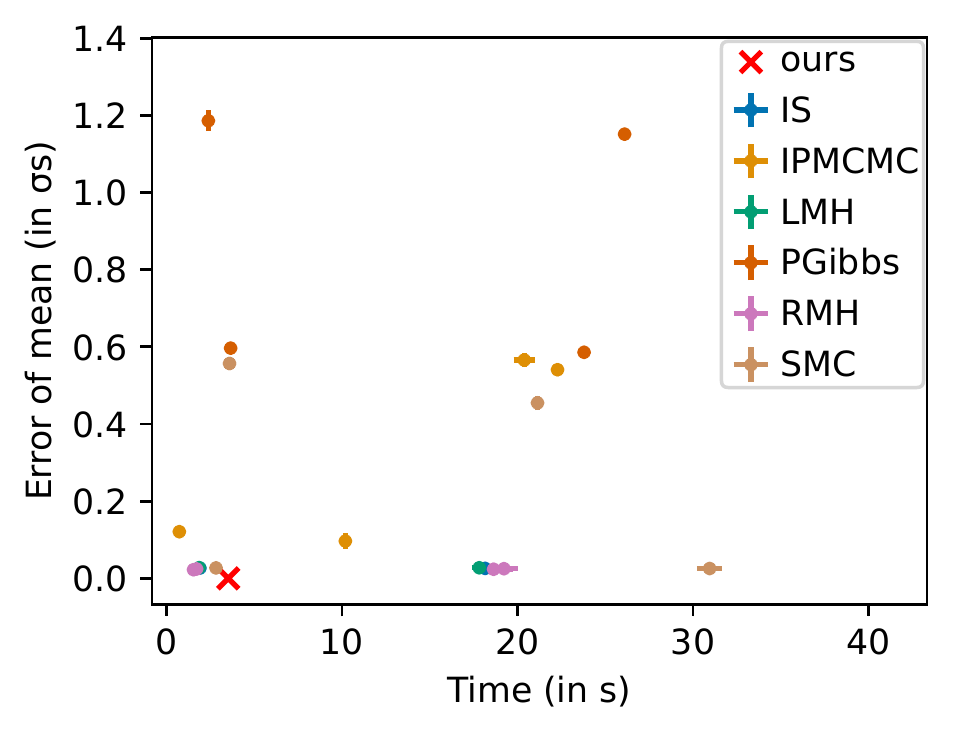}
\caption{Error of the mean}
\end{subfigure}
\hfill
\begin{subfigure}[t]{0.24\textwidth}
\centering
\includegraphics[width=\textwidth]{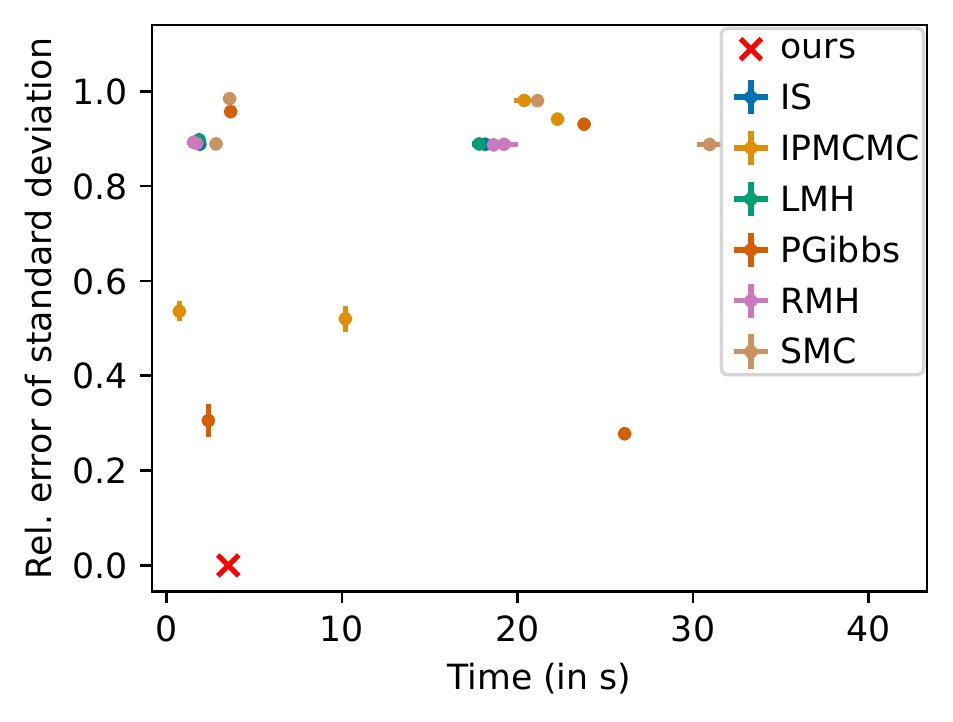}
\caption{Error of the standard deviation}
\end{subfigure}
\hfill
\begin{subfigure}[t]{0.24\textwidth}
\centering
\includegraphics[width=\textwidth]{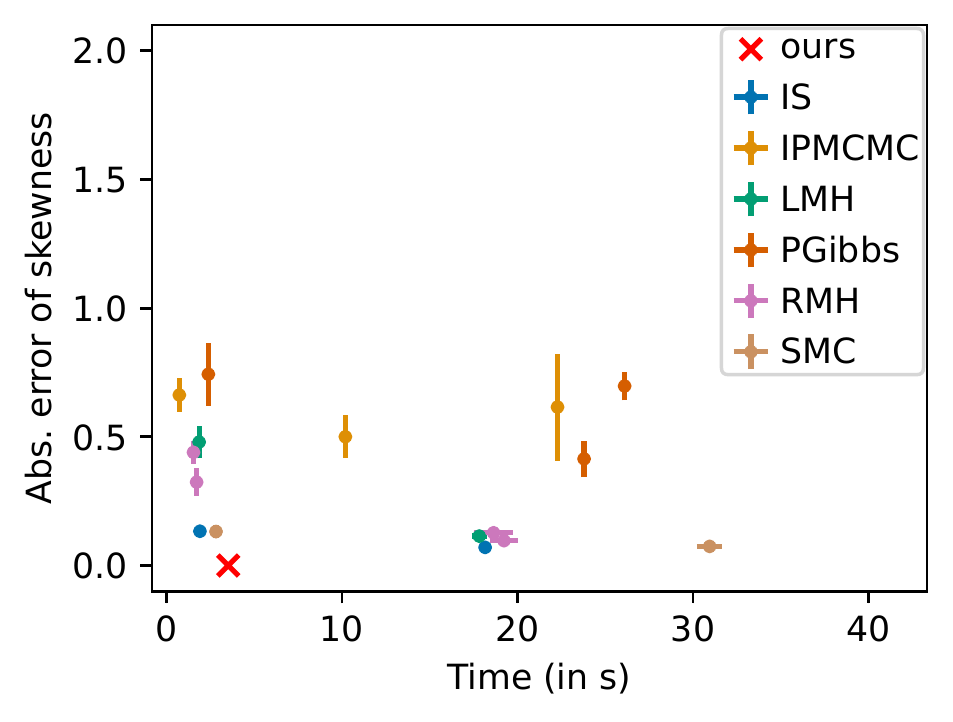}
\caption{Error of the skewness}
\end{subfigure}
\hfill
\begin{subfigure}[t]{0.24\textwidth}
\centering
\includegraphics[width=\textwidth]{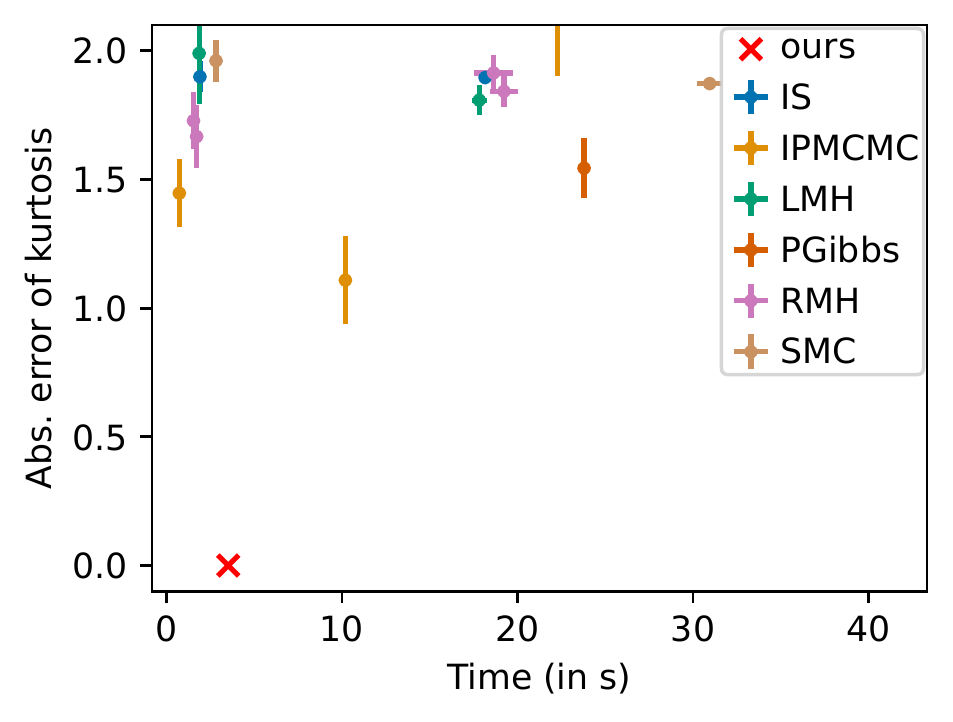}
\caption{Error of the kurtosis}
\end{subfigure}
\caption{Comparison of moments for the mixture model.}
\label{fig:mixture-moments}
\vspace{-1em}
\end{figure}

\begin{figure}
\begin{subfigure}[t]{0.24\textwidth}
\centering
\includegraphics[width=\textwidth]{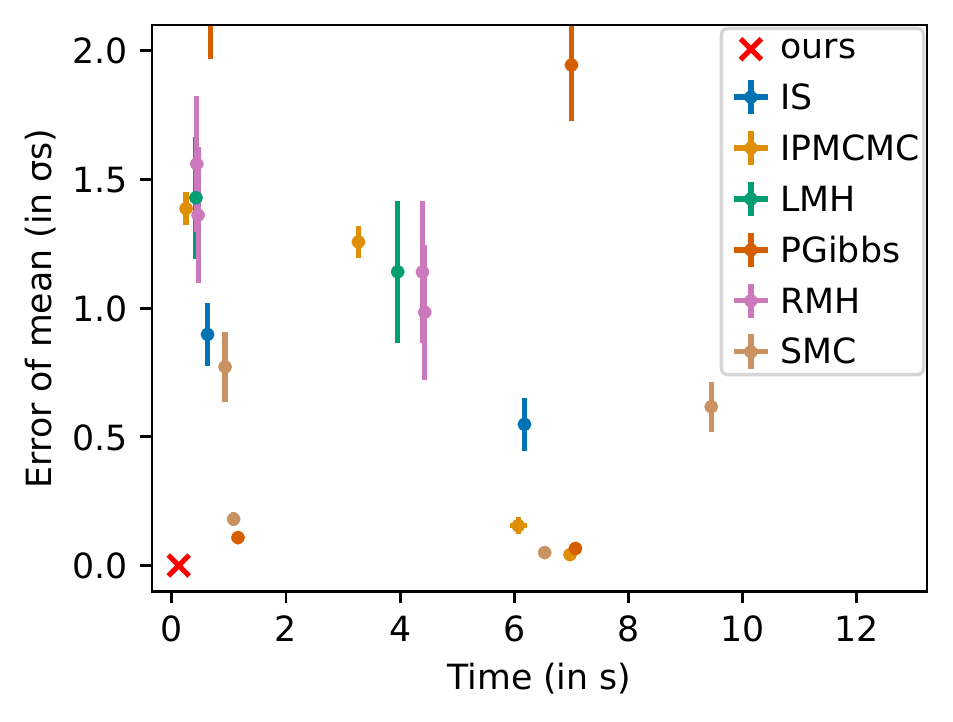}
\caption{Error of the mean}
\end{subfigure}
\hfill
\begin{subfigure}[t]{0.24\textwidth}
\centering
\includegraphics[width=\textwidth]{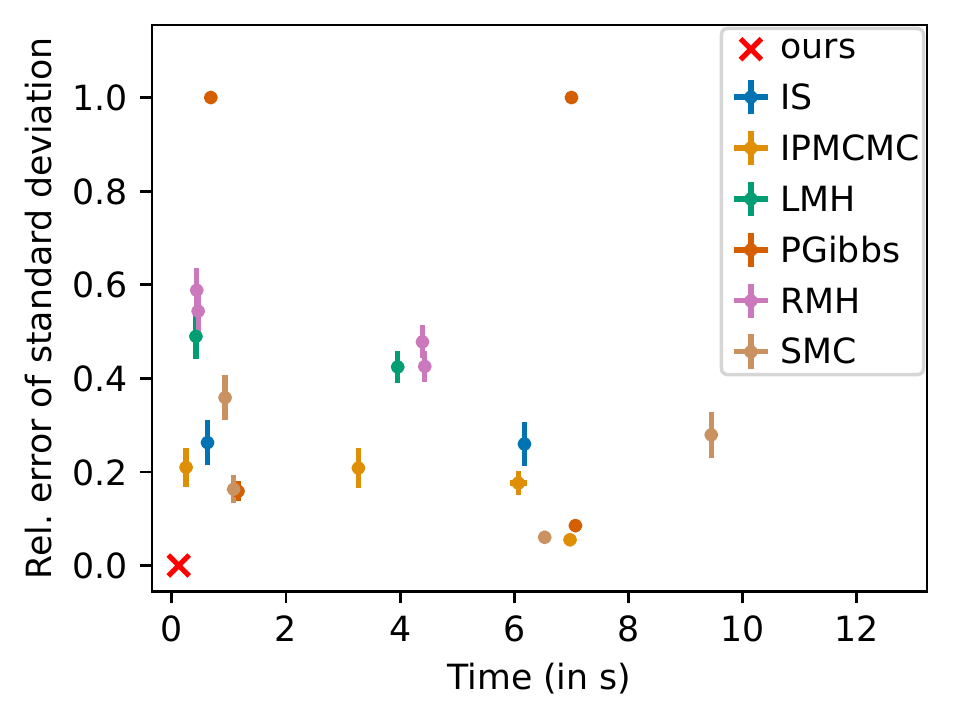}
\caption{Error of the standard deviation}
\end{subfigure}
\hfill
\begin{subfigure}[t]{0.24\textwidth}
\centering
\includegraphics[width=\textwidth]{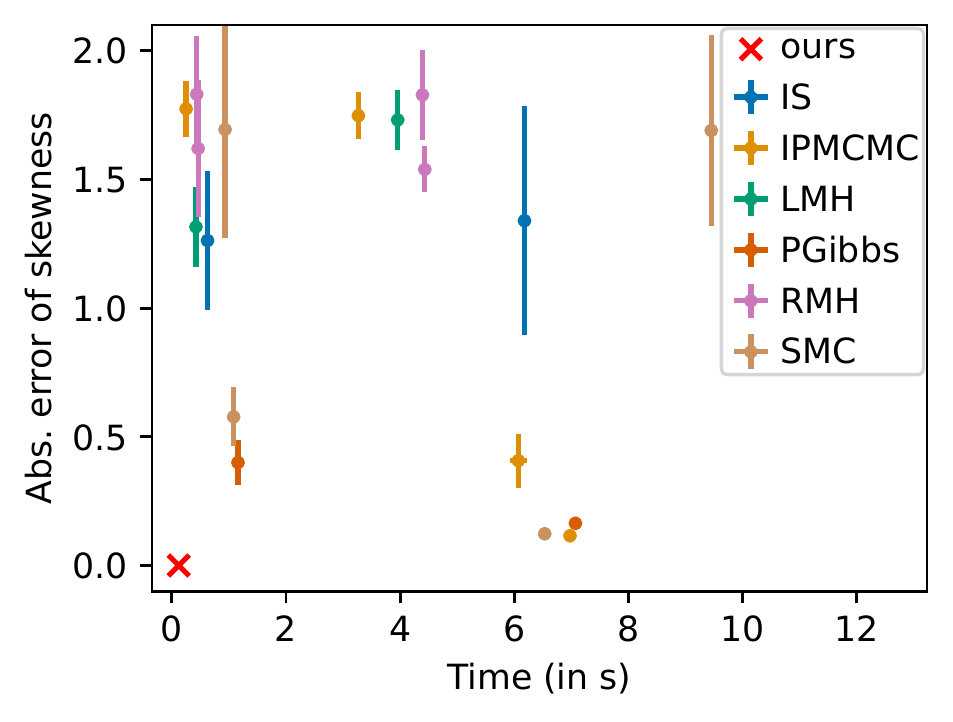}
\caption{Error of the skewness}
\end{subfigure}
\hfill
\begin{subfigure}[t]{0.24\textwidth}
\centering
\includegraphics[width=\textwidth]{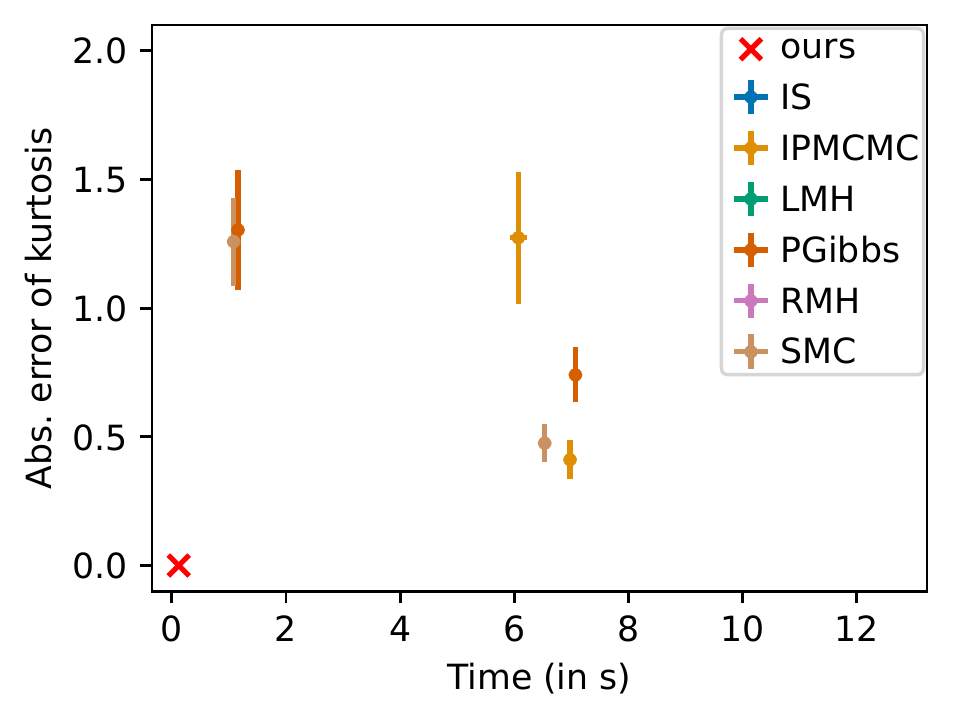}
\caption{Error of the kurtosis}
\end{subfigure}
\caption{Comparison of moments for the HMM.}
\label{fig:hmm-moments}
\vspace{-1em}
\end{figure}

\begin{figure}
\begin{subfigure}[t]{0.32\textwidth}
\centering
\includegraphics[width=\textwidth]{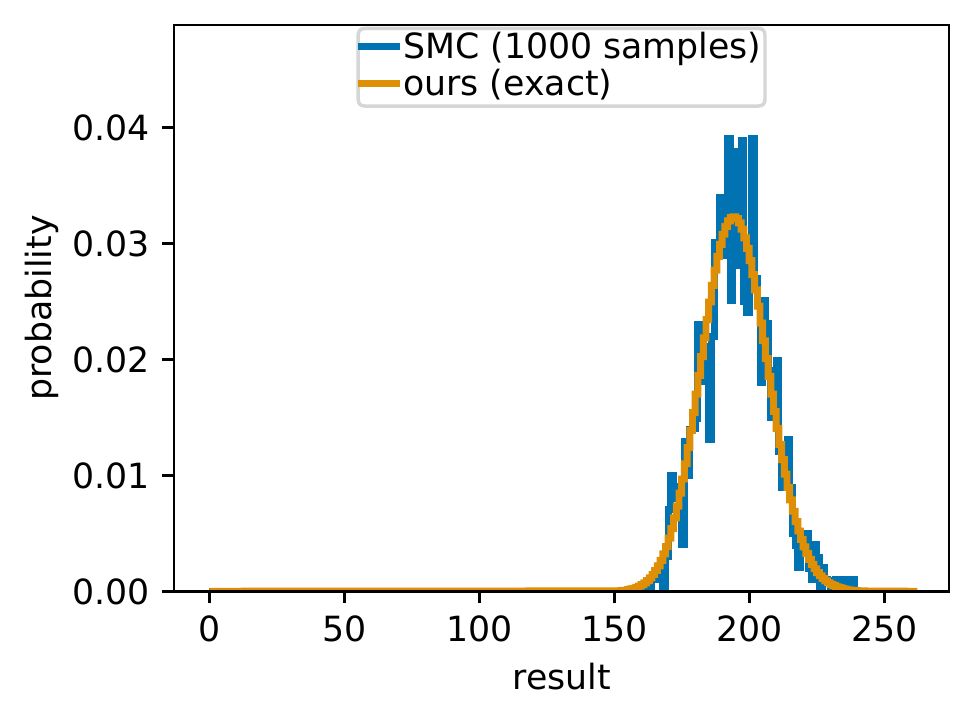}
\caption{Original population model.}
\end{subfigure}
\hfill
\begin{subfigure}[t]{0.32\textwidth}
\centering
\includegraphics[width=\textwidth]{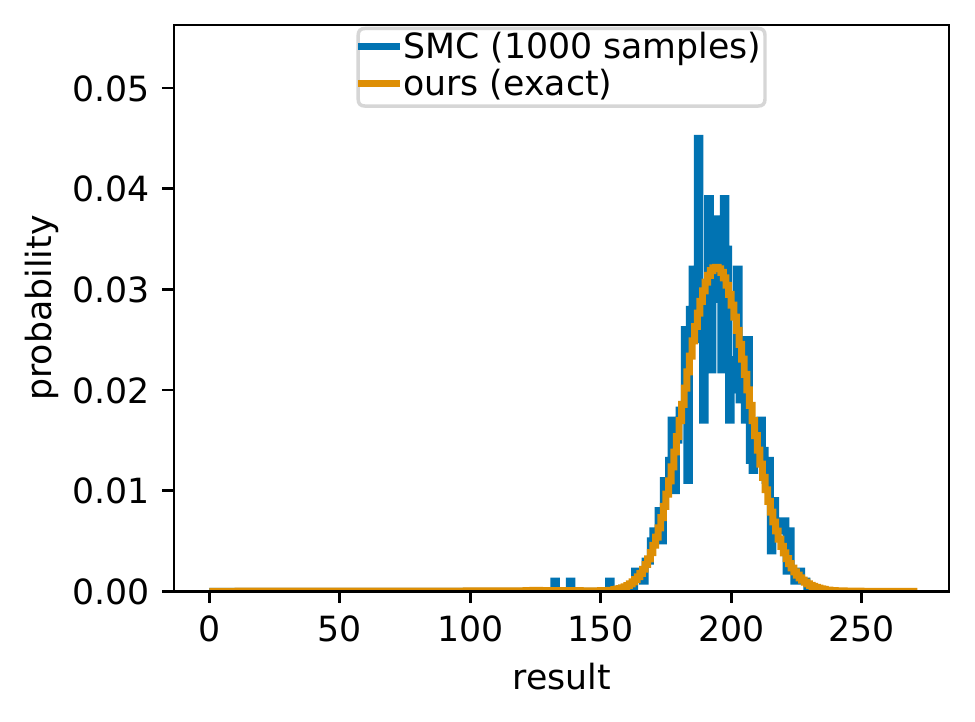}
\caption{Modified population model.}
\end{subfigure}
\hfill
\begin{subfigure}[t]{0.32\textwidth}
\centering
\includegraphics[width=\textwidth]{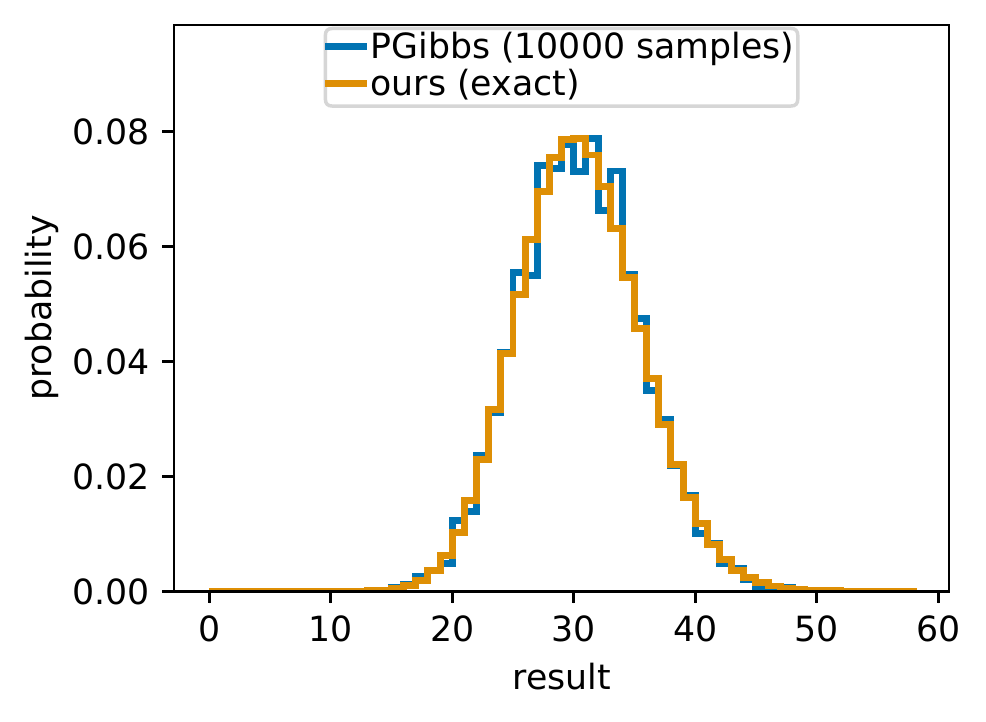}
\caption{Two-type population model.}
\end{subfigure}
\\
\begin{subfigure}[t]{0.32\textwidth}
\centering
\includegraphics[width=\textwidth]{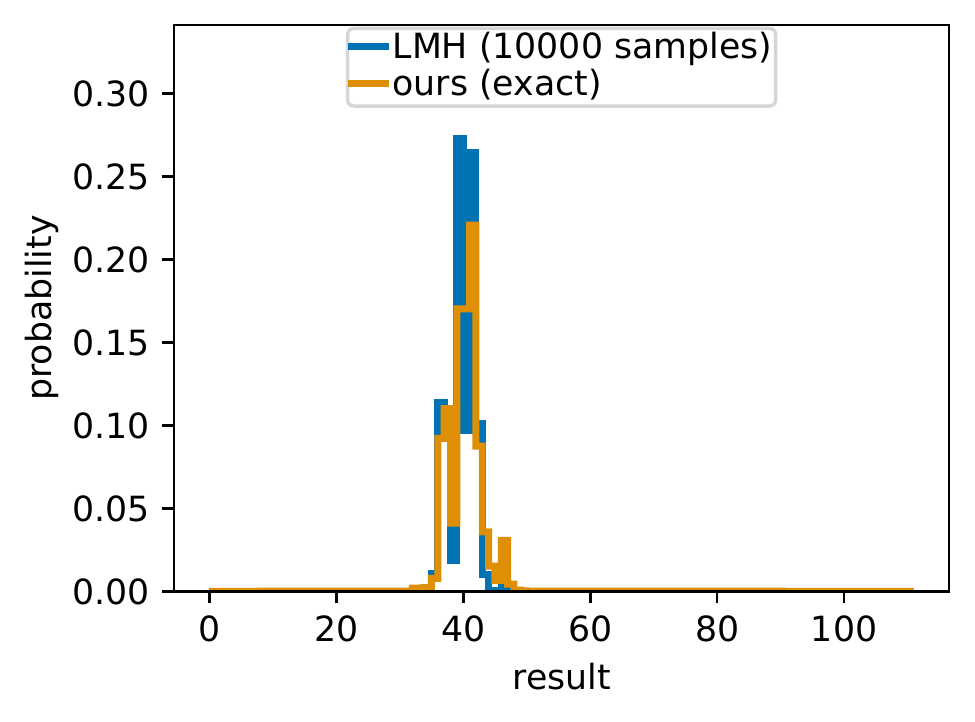}
\caption{Switchpoint model.}
\end{subfigure}
\hfill
\begin{subfigure}[t]{0.32\textwidth}
\centering
\includegraphics[width=\textwidth]{plots/mixture_histogram_pgibbs_samples10000_chain0.pdf}
\caption{Mixture model.}
\end{subfigure}
\hfill
\begin{subfigure}[t]{0.32\textwidth}
\centering
\includegraphics[width=\textwidth]{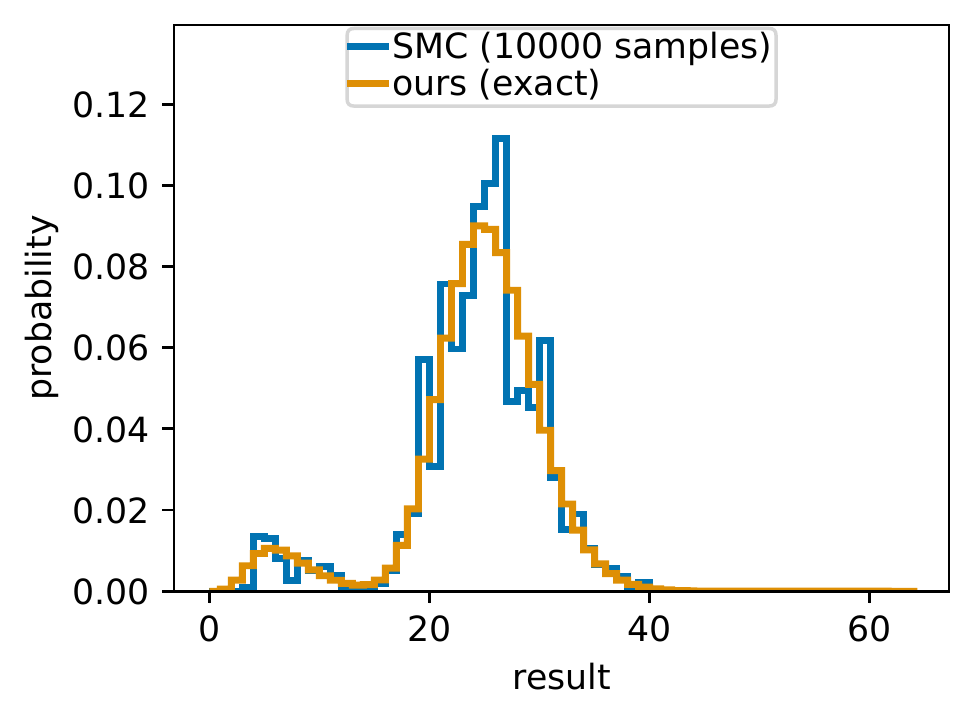}
\caption{Hidden Markov model.}
\end{subfigure}
\caption{Histograms of the exact distribution and the MCMC samples with the lowest TVD and similar running time to our exact method.}
\label{fig:histograms-approx-exact}
\end{figure}

\clearpage

\tableofcontents

\end{document}